\DeclarePairedDelimiter{\norma}{\lVert}{\rVert}
\newtheorem{theorem}{Theorem}[section]
\newtheorem{proposition}{Proposition}[section]
\newtheorem{remark}{Remark}[section]
\DeclareMathOperator{\dd}{d}
\DeclareRobustCommand{\varlambda}{\text{\usefont{OML}{txmi}{m}{it}\symbol{"15}}}
\def \R {{\mathbb {R}}}
\def\S{{\mathbb{S}}}
\def \N {{\mathbb N}}
\def \PO {{\mathbb {R}^3\rtimes\mathbb{S}^2}}
\title{Individuation of $3D$ perceptual units from neurogeometry of binocular cells }
\author[4,*]{M. V. Bolelli}
\author[1,2]{G. Citti}
\author[1,2]{A. Sarti}
\author[3]{S. W. Zucker}
\affil[1]{Department of Mathematics, University of Bologna, Italy.}
\affil[2]{CAMS, CNRS - EHESS, Paris, France.}
\affil[3]{Departments of Computer Science and Biomedical Engineering, Yale University, New Haven, CT, United States.}
\affil[4]{L2S, Universit\'e Paris-Saclay, CentraleSup\'elec, Gif-sur-Yvette, France.}
\affil[*]{Corresponding author: maria-virginia.bolelli@centralesupelec.fr}
\date{}
\begin{document}
\maketitle

\begin{abstract}
We model the functional architecture of the early stages of three-dimensional vision  by extending the neurogeometric sub-Riemannian model for stereo-vision 
introduced in \cite{BCSZ23}. A new framework for correspondence is introduced that integrates a neural-based algorithm to achieve stereo correspondence locally while, simultaneously, organizing the corresponding points into global perceptual units. The result is an effective scene segmentation. We achieve this using harmonic analysis on the sub-Riemannian structure and show, in a comparison against Riemannian distance, that the sub-Riemannian metric is central to the solution. 

\bigskip
\noindent \textbf{Keywords:} Stereo vision, Stereo correspondence problem, $3D$ perceptual units, Mean field equation, Neurogeometry of stereo vision, Spectral analysis.

\bigskip
\end{abstract}

\tableofcontents

\section{Introduction}

Individuating 3D perceptual units from stereo vision is a significant challenge in computer vision and neuroscience. Stereo vision systems recover 3D images from  (perspective) projections onto the two (aligned) retinae,  with each retina capturing a slightly different viewpoint of the same scene. The problem of determining the correct pairing of positions between the two retinas, known as the \textit{stereo correspondence problem}, is fundamentally underdetermined: a given pair of retinal images can correspond to multiple possible distal stimuli (\cite{P01, B21}).

In the brain, binocular visual signals are processed along the visual pathway, starting with the primary visual cortex V1. Anzai, Ohzawa and Freeman in \cite{AOF99S} showed that cells sensitive to binocular disparities in V1 perform a non-linear integration of inputs from monocular cells that are selective for orientation (interpreted in terms of tangent's orientation to the visual stimulus). Many models of monocular cells are present in literature. Let us recall the models of Hoffmann, expressed in contact geometry \cite{H89}, the 
neuromathematical model of the functional architecture of the visual cortex of  Petitot and Tondut \cite{PT99} and the model Citti and Sarti in \cite{CS06} able to describe the selectivity of cells to position and orientation in a Lie group with a sub-Riemannian metric. Moreover, neural-based stereo differential models were proposed by Zucker and collaborators in \cite{LZ06, AZ00, LZ03}, encoding both disparities in position and differences in orientation between the left and right eyes, using Frenet frames. 

More recently, a neurogeometric model for stereo vision based on sub-Riemannian geometry has been introduced \cite{BCSZ23}. The idea behind this model is to generalize the sub-Riemannian monocular model of \cite{CS06} to the functional architecture of binocular cells. Horizontal connections are described within a sub-Riemannian structure in the 3D space of position-orientation $\PO$, identified by three vector fields $Y_3, Y_4, Y_5$, where $Y_3$ encodes the tangent along 3D contours and $Y_4, Y_5$ encode changes in orientation. 
Integral curves of these vector fields model three-dimensional association fields, a formalization of the law of good continuation (\cite{W93}) and co-circularity (\cite{PZ89}) in 3D,  consistently with experimental psychophysical results (\cite{KGS05, KHK16, DW15}).

In the two-dimensional (bidimensional) case, the stochastic counterpart of association fields has been proposed as a model of cortical architecture in  \cite{SC15, BCCS14, SCS10}, and continued in \cite{FCS17}. 
The 2D analogue of the following stochastic differential equation has been considered:
\begin{equation}
\label{stochasticCurves}
\dd \Gamma(t) = Y_{3,\Gamma(t)}\dd t+ \varlambda(Y_{4,\Gamma(t)}, Y_{5,\Gamma(t)})\dd B(t), \hspace{0.5cm} \varlambda \in \R,
\end{equation}
where $B(t)$ is 2-dimensional Brownian motion. 
The probability density of this system, integrated in time, provides a kernel $J_\varlambda$, a fundamental solution of a suitable forward Kolmogorov equation.
This kernel can be inserted as a facilitator pattern in the evolution of a mean field equation proposed as a model of cortical activity, originally by Wilson-Cowan \cite{WC72, WC73}, by Amari \cite{A72}, and then by Ermentrout and Cowan \cite{EC79, EC80} and Bresslof and Cowan \cite{BCGTW02, BC03}. Other similar models, that account for cortical delays, were proposed by Faye and Faugeras \cite{FF10}. Finally, Sarti and Citti \cite{SC15} proposed applying a spectral analysis method to recover 2D perceptual units.
Following these works, we describe the propagation of the visual signal impulse along cortical connectivity by the following integro-differential equation in the 3D (perspective) space of positions and orientations:
\begin{equation}
\label{meanFieldIntro}
\frac{da(\xi, t)}{dt}= - a (\xi, t)+ \int \mu J_\varlambda(\xi, \xi')a(\xi', t)d\xi' + h(\xi, t)
\end{equation}
where $\xi \in \PO$ , $h(\xi,t)$ is the feedforward input, $\mu \in \R$.

To achieve the extension to 3D perceptual units, we confront the fact that stereo correspondence is a global problem: {\em a priori}, each point on the left retina could -- in principle -- be coupled with any point on the right one. The correct coupling is the one that produces a ``realistic'' 3D perceptual unit; that is, a coupling of points in each retina that obey the Gestalt laws of perceptual organization in 3D. Inspired by this, we consider all possible pairs of (discrete image) points to obtain a 3D cloud of points. To extract perceptual units from the cloud, i.e. to group the points into global solutions to the correspondence problem, we extend an observation of Perona and Freeman \cite{PF98}, and show that the coherent sets are represented by the eigenvectors of \eqref{meanFieldIntro}. The  ``salience'' of each perceptual unit is ranked by its associated eigenvalue, with the largest eigenvalues signaling the most salient (parts of) objects in the scene. The 3D law of good continuation, encoded in the kernel, is thus applied to the space of possibilities and is leveraged to solve the global correspondence problem.

\paragraph*{Structure of the paper.}
The paper is organized as follows. 
Section 2 covers preliminary results: we recall the stereo problem, the neurogeometric model proposed in \cite{BCSZ23} for stereo vision in the space of 3D positions-orientations, and the law of good continuation in 3D. 
Section 3 is devoted to discussing a stochastic model for cortical connectivity. This model is used in a mean-field equation to express the evolution of neural activity. We perform a stability analysis using a variation of the Lyapunov method.
In Section 4 we consider spectral clustering and dimensionality reduction results for individuating 3D percepts, performing segmentation and proposing a solution for the stereo correspondence. In Section 5, we present the numerical implementation and the results. First, we numerically compute the connectivity kernel and then, we use it to implement spectral analysis to individuate 3D perceptual units and solve the correspondence problem. We end this section with a comparison with a ``Gaussian-type'' kernel to emphasize the role of our chosen metric. Section 6 is devoted to the conclusions.

\section{Background}

\subsection{The stereo problem: state of the art and improvement with neurogeometry}
\label{sec:stereo_problem}

The stereo problem involves reconstructing the three-dimensional visual scene from its perspective projection through the left and right optical centers.
Typically, the primary objective is to accurately match corresponding points $ Q_L = (x_L, y)$ and $ Q_R = (x_R, y) $ on the parallel retinal planes $( y = y_L = y_R )$. Traditional stereo algorithms rely on intensity-based matching 
( e.g. \cite{birchfield1998pixel, ng2019solving, karimi2022stereo} )
 and feature-based matching 
(e.g. \cite{MP79, alwan1996automatic, zhu2017hybrid, LZ03}) techniques. To facilitate the process of finding the correct matches, several constraints are introduced, mainly exploiting geometric-based approaches. Extensive textbooks on the subject can be found in \cite{F93, FL01, HZ03, SR11}. 
Once these points are correctly paired, they can be projected back into the environmental space to obtain $Q = (r_1, r_2, r_3) \in \mathbb{R}^3 $.

However, in this paper, we adopt a different approach. Instead of first pairing points and then reconstructing the 3D scene, we reverse the process. Initially, we consider all possible corresponding pairs of points between the left and right images (namely points sharing the same y-coordinates since the retinal planes are parallel) and their 3D representations, obtaining a cloud of points. We extract the 3D perceptual unit from this cloud, which represents the desired 3D object, leveraging the 3D good continuation law and its underlying cortically inspired sub-Riemannian geometry (\cite{BCSZ23b, BCSZ23}). Identifying these 3D percepts will solve the stereo correspondence problem, automatically determining the correct pairing on the retinal planes.

\subsection{Neurogeometry of binocular vision and good continuation}
\label{sec:sub_riem}
The behavior of monocular simple cells selective for position and orientation in V1 and their functional architecture have been modeled using tools from sub-Riemannian geometry in the works of Petitot and Tondut \cite{PT99} and Citti and Sarti \cite{CS06}. Meanwhile, Zucker et al. \cite{AZ00, LZ06, LZ03} proposed a mathematical model for stereo vision based on neural mechanisms of selectivity to position, orientation, and curvature of the visual stimulus, expressed through Frenet differential geometry.

From a neural point of view, it seems that neurons driven by binocular input in the primary visual cortex integrate input from monocular (orientation-selective) cells (\cite{AOF99, STRTF22}). Precisely, they integrate positions on the left and right retinas and the corresponding orientations $\theta_L$ and $\theta_R$ \footnote{We recall that the angle $\theta_i$ denotes the orientation of the visual signal's boundary.}, to obtain perception in the 3D external space.

Inspired by these neural findings and by the previously mentioned models \cite{AZ00, LZ06, LZ03, CS06}, 
a neurogeometric model of binocular  vision 
has been proposed in 
 \cite{BCSZ23}. 
The visual signal projected onto the two retinas, together with the orientation of its boundaries, is lifted into a 5D space, coding position and orientation in the 3D external space:  $\mathbb{R}^3_{(r_1, r_2, r_3)}\rtimes \mathbb{S}^2_{(\theta, \varphi)}.$
Then, the lifted signal is propagated along integral curves of the following vector fields:
\begin{equation}
\label{eq:orthonormal_frame}
\begin{aligned}
Y_{3, \xi} &= \cos\theta\sin\varphi \partial_1+ \sin\theta\sin\varphi\partial_2+\cos\varphi \partial_3, \
Y_{4, \xi }&= -\frac{1}{\sin\varphi}  \partial_\theta, \
Y_{5,\xi} &= \partial_\varphi,
\end{aligned}
\end{equation}
locally using the chart $\theta \in (0, 2\pi), \varphi \in (0, \pi)$, and $\xi \in \mathbb{R}^3_{(r_1, r_2, r_3)}\rtimes \mathbb{S}^2_{(\theta, \varphi)}$. 
The vector field 
$Y_{3}$ encodes the tangent of the stimulus, 
the vector field $Y_5$ involves orientation in the depth direction, while $Y_4$ involves orientation on the fronto-parallel plane.

Denoting with $\vec{Y}_{i, \xi}$ the direction of the vector field $Y_{i, \xi}$ at a point $\xi \in \PO$, integral curves with constant coefficients are defined by the differential equation:
\begin{equation}
\label{eq:integralCrv}
\dot \Gamma(t) = \vec{Y}_{3,\Gamma(t)}+ c_1\vec{Y}_{4, \Gamma(t)}+ c_2 \vec{Y}_{5, \Gamma(t)}, \quad c_1, c_2 \in \mathbb{R},
\end{equation}
an example of which is displayed in Figure \ref{fig:association_field} (b).

Integral curves of the vector field $Y_3$ express continuation in the  direction of the boundary, while $Y_4$ and $Y_5$ express deviation from this line. For this reason they are strictly related to the Gestalt law of good continuation. This principle expresses  the perceptual organization of oriented elements in a visual scene (see for example \cite{WEKP12}). In 2D these relationships have been formalized with the concept of co-circularity by Parent and Zucker in \cite{PZ89, BZ04}, and the concept of association fields  by Field, Hayes, and Hess in \cite{FHH93}. These concepts are well modeled by integral curves of the sub-Riemannian monocular structure proposed by Citti and Sarti in \cite{CS06} (see Figure \ref{fig:association_field}, (a)).

In 3D, the geometric relationships among orientations within a three-dimensional setting have been formalized with the theory of 3D relatability (\cite{KGSYM05}). Curves linking these oriented points exhibit smoothness, monotonic behavior, and regularity (\cite{HHK97, DW15}). Additionally, the strength of relatable edges in coplanar planes with the initial edge must meet the relations of the bi-dimensional association fields \cite{KGSYM05}.
In  \cite{BCSZ23} 
the family of integral curves \eqref{eq:integralCrv} has been proposed as a sufficent model for associations in 3D (see Figure \ref{fig:association_field} (b) and (c)).

 \begin{figure}[tbh]
\centering
\begin{subfigure}[b]{0.18\textwidth}
         \centering
    \includegraphics[width=\textwidth]{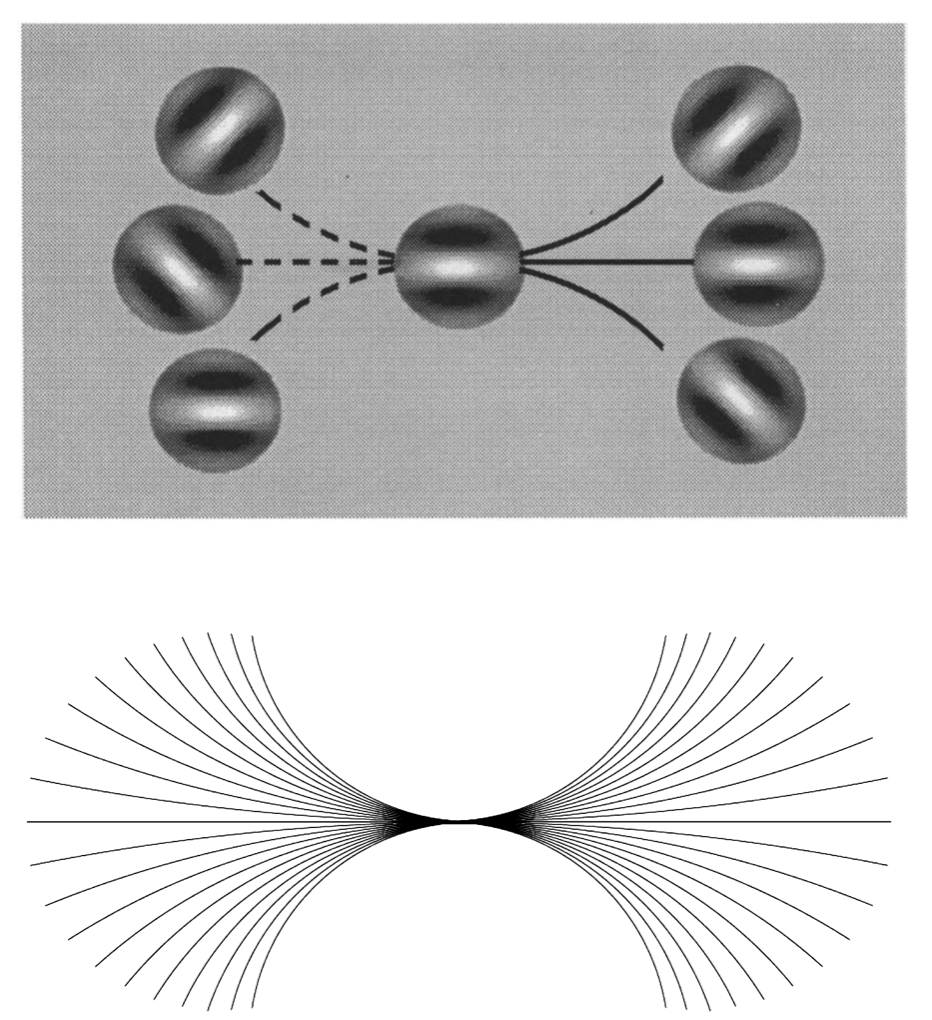} 
         \caption{}
\end{subfigure}
\begin{subfigure}[b]{0.48\textwidth}
         \centering
         \includegraphics[width = \textwidth]{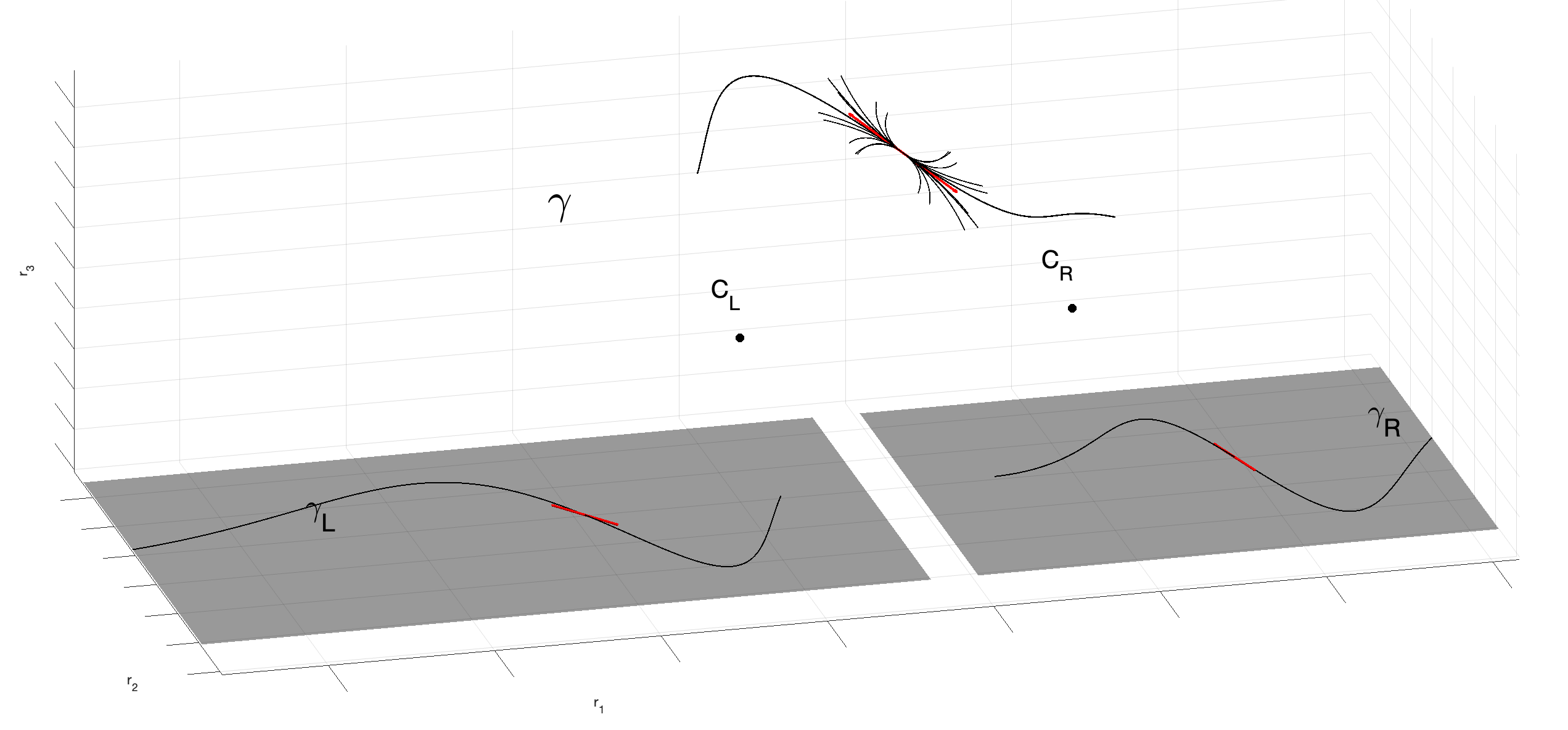} 
         \caption{}
\end{subfigure}
\begin{subfigure}[b]{0.3\textwidth}
         \centering
         \includegraphics[width=\textwidth]{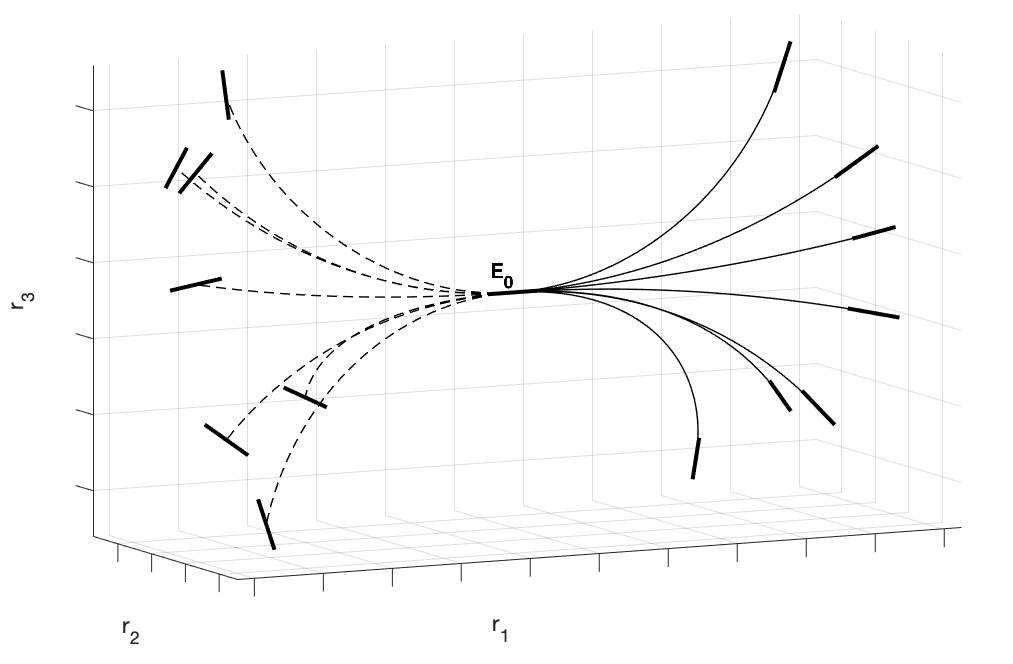}
         \caption{}
\end{subfigure}
      \caption{ Display of connectivity. (a) Field Hayes and Hess association field (top) and 2D integral curves of the Citti-Sarti model \cite{CS06} (bottom). (b) General fan of integral curves described by equation \eqref{eq:integralCrv} with varying $c_1$ and $c_2$ in $\R$, enveloping a curve $\gamma \in \R^3$. (c) Fan of 3D relatable points connected by integral curves \eqref{eq:integralCrv}.}
      \label{fig:association_field}
\end{figure}

\section{Connectivity kernel and mean field equation}
\label{sec:kernel-mean_field}
We investigate the transition from association fields, which represent a psychophysical abstraction, to measurements that quantify the interaction strength among the position-orientation elements that form 3D perceptual units, specifically three-dimensional contours. To achieve this, we extend a technique previously introduced in \cite{SCS10, BCCS14, SC15} to the position-orientation space $\PO$.

\subsection{Stochastic paths and Fokker-Planck equation}
\label{ssec:kernel}
The probabilistic version of the association fields is used in the two-dimensional case as a model of connectivity for cortical cells, for example in the works \cite{AugZ00, PT99, SC15, AFCSR17, DF10, BCCS14}, based on the pioneering work of Mumford \cite{M94}. We apply this approach to the three-dimensional case, probabilistically interpreting the 3D extension of the law of good continuation: entities described by similar local orientations are more likely to belong to the same perceptual unit. In particular, we consider an equation predominated by transport in the $Y_3$ direction and diffusion in the orientation variables, similar to what has been done in \cite{MS09, MS13}.

The result is an It\^o stochastic process $\Gamma(t)$ satisfying the stochastic differential Langevin equation:  
\begin{equation}
\label{SDE}
\dd \Gamma(t) = \vec{Y}_{3,\Gamma(t)}\dd t + \varlambda (\vec{Y}_{4,\Gamma(t)}, \vec{Y}_{5,\Gamma(t)})\dd B(t), 
\end{equation}
with $B(t) = (B_1(t), B_2(t))^T$ a two-dimensional Brownian motion and $\varlambda \in \R$ scalar parameter. By abuse of notation, $Y_3$ is interpreted as a deterministic coefficient and it is called the \textit{drift} coefficient, while we call $\sigma(\Gamma(t))\coloneqq(\vec{Y}_{4,\Gamma(t)}, \vec{Y}_{5,\Gamma(t)})$ the \textit{diffusion} coefficient of the stochastic differential equation (SDE) \eqref{SDE}. 

\begin{remark}
Locally in the chart of section \ref{sec:sub_riem}, these coefficients are smooth functions with locally bounded derivatives. Therefore, classical results on uniqueness and existence of a solution to the SDE \eqref{SDE} hold, since the drift and diffusion coefficient terms are Lipschitz continuous.
\end{remark}
 
It\^o's formula allows us to consider the density law $\rho_\varlambda$ of the stochastic process $\Gamma$, which is a distributional solution to the Fokker-Planck (or forward Kolmogorov) equation $\partial_t + \mathcal{L} =\delta $ (e.g, see \cite{O03}), where:
\begin{equation}
\label{FP}
\mathcal{L}f(t, \xi) = -\sum_{i =1}^3 (\vec{Y}_{3,\xi})_i\partial_{i} f (t, \xi)+ \frac{\varlambda}{2}\sum_{i, j \in \{\theta, \varphi\}} (\sigma(\xi)\sigma(\xi)^T)_{i,j}\partial_i\partial_j f(t, \xi);
\end{equation}
the  Kolmogorov operator $\mathcal{L}$ can be written in terms of vector fields \eqref{eq:orthonormal_frame}: 
\begin{equation}
\label{fKoperator}
\mathcal{L} = -Y_3 + \varlambda(Y_4^2 + Y_5^2).
\end{equation}
Since the vector fields involved satisfy the H\"ormander's rank condition (\cite{DF11}), the operator $\mathcal{L}$ is hypoelliptic. So, there exists a fundamental solution associated with the forward Kolmogorov equation, with the property of being smooth out of the pole (see H\"ormander theorem \cite{H67}).

\subsubsection{Time independent kernel}
\label{sec:independent_time_kernel}

The fundamental solution for the operator $\partial_t + \mathcal{L}$, with $\mathcal{L}$ defined in \eqref{fKoperator}, corresponds to the probability density function $ \rho_\varlambda(\xi,t; \xi_0, t_0)$ associated with the stochastic process satisfying equation \eqref{SDE}. More precisely, it expresses the probability of having reached the point $\xi$ after having evolved the equation \eqref{SDE} up to time $t$, starting from $\xi_0$ at time $t_0$. 

The main goal of this section is to identify a time-independent probability, to characterize each point of the space in terms of the paths \eqref{SDE} that reach it independently of the value of the evolution temporal parameter. 
It is therefore worth noting that these density functions $\rho_\varlambda$ locally admit exponential-type estimates, which decay to zero for all ``evolutional'' times.

\begin{remark}
\label{rmk:gauss_est}
The operator $\mathcal{L}$ of \eqref{fKoperator} belongs to a class of hypoelliptic operators that locally admit estimates for their fundamental solution. Rotschild and Stein in \cite{RS76} obtained optimal local estimates of the kernels, while Gaussian upper and lower bounds have been proven in  \cite{BLU02, BP07, CP08}. 
\end{remark}

In the literature, it is common to integrate over time to obtain such a time-independent fundamental solution as shown in \cite{BLU02} or in \cite{Cocci14, FCS17}.

\begin{proposition}
Let  $\xi$ be a point in $ \PO$ and $\rho_\varlambda$ the fundamental solution of the operator $\partial_t + \mathcal{L}$ with $\mathcal{L}$ defined in \eqref{fKoperator}, and pole in $(\xi_0, 0) $, with $t_0 = 0$ without loss of generality. Then
\begin{equation}
\label{eq:time_indep_kernel}
J_\varlambda (\xi, \xi_0) = \int_{\R_+} \rho_\varlambda( \xi, t; \xi_0, 0) dt
\end{equation} 
is fundamental solution for the operator $\mathcal{L}$:
\begin{equation}
\mathcal{L}J_\varlambda(\xi, \xi_0) = \delta_{\xi-\xi_0}.
\end{equation}
\end{proposition}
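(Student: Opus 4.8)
The plan is to differentiate the time integral \eqref{eq:time_indep_kernel} under the integral sign and exploit three standard features of the transition density $\rho_\varlambda$ of the process \eqref{SDE}: (i) for $t>0$ it is smooth and solves the \emph{homogeneous} forward Kolmogorov equation $\partial_t\rho_\varlambda + \mathcal{L}\rho_\varlambda = 0$, which holds because $\rho_\varlambda$ is a fundamental solution of $\partial_t+\mathcal{L}$ away from its pole; (ii) $\rho_\varlambda(\cdot, t;\xi_0,0)\rightharpoonup \delta_{\xi_0}$ weakly as $t\to 0^{+}$, the initial condition built into the construction of the heat kernel of $\partial_t+\mathcal{L}$; (iii) the local estimates for the fundamental solution recalled in Remark~\ref{rmk:gauss_est}, which guarantee that the integral \eqref{eq:time_indep_kernel} converges, so that $J_\varlambda(\cdot,\xi_0)$ is a well-defined locally integrable function (smooth off the diagonal $\xi=\xi_0$), and that $\rho_\varlambda(\cdot,t;\xi_0,0)\to 0$ as $t\to\infty$.

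First I would fix a test function $\phi\in C_c^\infty(\PO)$, set $m(t):=\langle\rho_\varlambda(\cdot,t;\xi_0,0),\phi\rangle$, and, using Fubini's theorem (justified by (iii)), write
\begin{equation*}
\langle \mathcal{L}J_\varlambda(\cdot,\xi_0),\phi\rangle = \langle J_\varlambda(\cdot,\xi_0), \mathcal{L}^{*}\phi\rangle = \int_{\R_+}\langle \rho_\varlambda(\cdot,t;\xi_0,0),\mathcal{L}^{*}\phi\rangle\,dt ,
\end{equation*}
where $\mathcal{L}^{*}$ is the formal adjoint of the operator \eqref{fKoperator} (again a hypoelliptic operator of the same type). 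Since $\rho_\varlambda(\cdot,t;\xi_0,0)$ is smooth for each $t>0$ and $\phi$ is compactly supported, integration by parts produces no boundary terms and gives $\langle \rho_\varlambda(\cdot,t;\xi_0,0),\mathcal{L}^{*}\phi\rangle = \langle \mathcal{L}\rho_\varlambda(\cdot,t;\xi_0,0),\phi\rangle$, which by (i) equals $-\langle \partial_t\rho_\varlambda(\cdot,t;\xi_0,0),\phi\rangle = -m'(t)$.

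The conclusion then follows from the fundamental theorem of calculus in $t$ together with (ii)--(iii):
\begin{equation*}
\langle \mathcal{L}J_\varlambda(\cdot,\xi_0),\phi\rangle = -\int_{\R_+}m'(t)\,dt = \lim_{t\to 0^{+}}m(t) - \lim_{t\to\infty}m(t) = \phi(\xi_0) - 0 = \phi(\xi_0) .
\end{equation*}
As $\phi\in C_c^\infty(\PO)$ was arbitrary, this says precisely $\mathcal{L}J_\varlambda(\xi,\xi_0) = \delta_{\xi-\xi_0}$.

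I expect the delicate part to be not this algebra but its rigorous underpinning: the interchange of the (distributional) operator $\mathcal{L}$ with the time integral, the application of Fubini, and the control of the limits as $t\to 0^{+}$ and $t\to\infty$ --- all of which must be extracted from the local estimates for the fundamental solution quoted in Remark~\ref{rmk:gauss_est}, which are exactly what make \eqref{eq:time_indep_kernel} finite. The remaining point to treat with care is the weak initial condition $\rho_\varlambda(\cdot,t;\xi_0,0)\rightharpoonup\delta_{\xi_0}$ as $t\to 0^{+}$; it is standard for heat kernels of hypoelliptic operators and can be imported from the cited references, and it is what converts the lower endpoint of the time integration into the Dirac mass at $\xi_0$.
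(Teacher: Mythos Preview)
Your argument is correct and follows the same underlying idea as the paper --- pair with a compactly supported test function, use Fubini on the time integral, and let the space--time fundamental solution property of $\rho_\varlambda$ produce the Dirac mass --- but your execution is more unpacked than the paper's. The paper simply takes a test function $u\in C_c^\infty(\PO)$ viewed as a time-independent function on $\R_+\times\PO$, applies the distributional identity for $\rho_\varlambda$ as fundamental solution of $\partial_t+\mathcal{L}$ directly (so $\partial_t u=0$ drops out for free), and then Fubini yields $\int J_\varlambda\,\mathcal{L}u\,d\xi = -u(\xi_0)$ in one line. You instead decompose the fundamental solution property into its three ingredients (homogeneous equation for $t>0$, weak initial condition $\delta_{\xi_0}$, decay as $t\to\infty$), integrate by parts in the spatial variables to move $\mathcal{L}$ onto $\rho_\varlambda$, and then apply the fundamental theorem of calculus in $t$. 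Your version has the advantage of making explicit exactly where each analytic estimate from Remark~\ref{rmk:gauss_est} is used (convergence of the integral, endpoint limits), whereas the paper's route is shorter because all of this is already baked into the single statement that $\rho_\varlambda$ is the fundamental solution of $\partial_t+\mathcal{L}$.
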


\proof
Let $u$ be a smooth function on $\PO$ with compact support, independent from the time variable $t$. Then: 
\begin{equation}
\begin{aligned}
-u(\xi_0) = &  \int_{\R_+\times\PO} \rho_\varlambda( \xi, t; \xi_0, 0) (\partial_t + \mathcal{L}) u(\xi) d \xi d t\\
 = &  \int_{\R_+\times\PO} \rho_\varlambda( \xi, t; \xi_0, 0) \mathcal{L} u(\xi) d \xi d t\\
 = & \int_{\PO}\left (\int_{\R_+}\rho_\varlambda( \xi, t; \xi_0, 0)d t \right) \mathcal{L} u(\xi) d \xi\\
 = & \int_{\PO} J_\varlambda( \xi, \xi_0)\mathcal{L} u(\xi) d \xi\\
\end{aligned}
\end{equation}
with $J_\varlambda( \xi, \xi_0)= \int_{\R_+} \rho_\varlambda( \xi, t; \xi_0, 0)d t$.
\endproof

Alternatively, another way to deal with the time parameter is to make a further assumption on the random walk, by supposing it has an exponentially distributed traveling time. Then the time-independent fundamental solution is obtained by taking the Laplace transform of the probability density, as done for example in \cite{DF11, PD17}. 

\begin{remark}
The partial differential equations mentioned have been extensively studied in various works (\cite{DBM19, PD17, PSMD15, DF11, DCGD11, D16}), focusing on the analysis of HARDI (High Angular Resolution Diffusion Imaging) images in $\mathbb{R}^3$. These images demonstrate roto-translation invariance in three dimensions, indicating that they remain invariant under rigid-body motions within the $SE(3)$ group.
The vector fields \eqref{eq:orthonormal_frame} are not left-invariant, leading to a model that is not entirely isotropic. This aligns with the fact that in three-dimensional vision, there exists a direction that is not perfectly consistent with this invariance (\cite{KHK16}).\end{remark}

\subsection{Mean field equation and spectral analysis}

The evolution of the neuronal population activity has been modeled through a mean field equation by many authors, for example \cite{ EC80, BC03, FF10, SC15}; it was first proposed by Amari \cite{A72} and Wilson and Cowan \cite{WC72}. According to these papers, the result of the propagation can be described (without considering the delays in the transmission of the signal) by the following integro-differential equation in the 3D perceptive space of positions and orientations: 
\begin{equation}
\label{eq:meanField}
\frac{da(\xi, t)}{dt}= -\alpha a (\xi, t)+ \varrho\left(\int_{\PO}\mu J_\varlambda(\xi, \xi')a(\xi', t)d\xi' + h(\xi, t)\right),
\end{equation}
where $\xi$ is a point of $\PO$, $t > 0$, $d\xi = d x 
d\sigma(n) = dx\sin\varphi d\theta d \varphi
$, with $dx$ Lebesgue measure in $\R^3$ and $d\sigma$ spherical measure on $\S^2$, $\alpha \in \R$ represents the decay of activity, and $h(\xi,t)$ is the feedforward input.

The function $\varrho$ is the firing rate function of the population and it has a piecewise linear behavior, as proposed in \cite{SC15}, building upon \cite{KB10}: 
\begin{equation}
\varrho(s) = 
\begin{cases}
0,  &s \in ]-\infty, c-\frac{1}{2\gamma}[\\
\gamma(s-c) + \frac{1}{2}   &s \in [c-\frac{1}{2\gamma}, c+\frac{1}{2\gamma}]\\
1,  &s \in ]c-\frac{1}{2\gamma}, + \infty[\\
\end{cases}
\end{equation}
where $\gamma$ is a real number that represents the slope of the linear regime and $c$ is the half-height threshold. 
The parameter $\mu$ is a coefficient of short-term synaptic facilitation, while the kernel $ J_\varlambda(\xi, \xi')$ is the contribution of the cortico-cortical connectivity introduced in the previous sections through equation \eqref{eq:time_indep_kernel}. Bidimensional models \cite{SCS10, BCCS14} deal with symmetric kernels, in order to consider neural reciprocal connections.

\begin{remark}[Symmetric kernel]
\label{symmetricKernel}
The kernel $J_\varlambda(\xi, \xi')$ introduced through equation \eqref{eq:time_indep_kernel} is non-symmetric because, in $\R^3$, it is squeezed in the direction orthogonal to $Y_3$ (\cite{NSW85}). To address this, we define its symmetrization as follows:
\begin{equation}
J_S(\xi, \xi') = \frac{J_\varlambda(\xi, \xi')+J_\varlambda(\xi', \xi)}{2}.
\end{equation}
This symmetrization involves averaging the fundamental solution associated with \eqref{SDE} with the fundamental solution of the same operator, but with an angular shift in the opposite three-dimensional direction. 
 This rotation changes the drift term $Y_3$ into $-Y_3$, thereby transforming the forward Kolmogorov equation into the corresponding backward equation.
\end{remark}
 
Finally, it is important to address the integration set for equation \eqref{eq:meanField}. By leveraging the findings from \cite{SC15}, we can partition the space $\PO$ into a subdomain $\Omega \subset \PO$ that is activated by the input. Specifically, we define $\Omega$ as:
\begin{equation}
\Omega = \{ \xi \in \PO \mid h(\xi) = c \}.
\end{equation}
In this context, we assume that the feedforward input function $h$ takes on only two values: $0$ and a constant $c$. Consequently, there is a complementary set where the activity is negligible. Additionally, we assume that $\Omega$ has finite measure with respect to the introduced measure $d\xi$; further details can be found in \cite{SC15}.
Then, the mean field activity equation \eqref{eq:meanField} reduces to: 
\begin{equation}
\label{eq:meanField_Restriction}
\frac{da(\xi, t)}{dt}= -\alpha a (\xi, t)+ \gamma\left(\int_{\Omega}\mu J_S(\xi, \xi')a(\xi', t)d\xi' + c \right), \text{ for } \xi \in \Omega. 
\end{equation}

\subsubsection{Existence and uniqueness of a solution}

We consider $L^2(\Omega, \R)$, the space of the square-integrable functions from $\Omega$ to $\R$ with  the usual inner product with respect to the measure $d\xi$, product of Lebesgue measure in $\R^3$ and the spherical measure on $\S^2$. 
We define $f(a) := -\alpha a+A(a) +c$, with 
\begin{equation}
\label{kernelOperator}
A(a) = \gamma\int_{\Omega}\mu J_S(\xi, \xi')a(\xi', t)d\xi',
\end{equation}
and we denote with $I$ a closed interval on the real line containing $0$. We consider a mapping $a : I \longrightarrow L^2(\Omega, \R)$, so equation \eqref{eq:meanField_Restriction} can be recast as a Cauchy Problem:
\begin{equation}
\label{CP}
\begin{cases}
a'(t) =f(a)\\
a(t_0)= a_0\\
\end{cases} 
\end{equation} with initial value $a_0 \in L^2(\Omega, \R)$.

Classical instruments of functional analysis (see for example \cite{HL13}) can then be applied to study the existence and uniqueness of the solution to the Cauchy Problem \eqref{CP} in Hilbert spaces. We just need to show that the function $f$ is well defined and Lipschitz.

\begin{proposition} If $J_S(\xi, \xi') \in L^2(\Omega \times \Omega, \R)$, the function $f$ is well defined and $f(a) \in L^2(\Omega, \R)$ for all $a \in L^2(\Omega, \R)$.
\end{proposition}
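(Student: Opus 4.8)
The plan is to read off $A$ as a Hilbert–Schmidt integral operator and then add up the three summands of $f(a)=-\alpha a+A(a)+c$ one at a time. The two outer terms are immediate: since $\alpha\in\R$, the map $a\mapsto-\alpha a$ sends $L^2(\Omega,\R)$ into itself; and since $\Omega$ was assumed to have finite measure with respect to $d\xi$ (the product of the Lebesgue measure on $\R^3$ with the spherical measure on $\S^2$), the constant function $\xi\mapsto c$ belongs to $L^2(\Omega,\R)$, with $\|c\|_{L^2(\Omega)}=|c|\,(d\xi\text{-measure of }\Omega)^{1/2}<\infty$. So the whole statement reduces to proving that $A(a)\in L^2(\Omega,\R)$ for every $a\in L^2(\Omega,\R)$.

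For this, fix $a\in L^2(\Omega,\R)$ and apply the Cauchy--Schwarz inequality in the variable $\xi'$ to the definition \eqref{kernelOperator}: for a.e.\ $\xi\in\Omega$,
\[
|A(a)(\xi)|\le|\gamma\mu|\int_\Omega|J_S(\xi,\xi')|\,|a(\xi')|\,d\xi'\le|\gamma\mu|\left(\int_\Omega|J_S(\xi,\xi')|^2\,d\xi'\right)^{\!1/2}\|a\|_{L^2(\Omega)}.
\]
Squaring, integrating in $\xi$, and invoking Tonelli's theorem to exchange the iterated integrals,
\[
\|A(a)\|_{L^2(\Omega)}^2\le|\gamma\mu|^2\,\|a\|_{L^2(\Omega)}^2\int_\Omega\!\int_\Omega|J_S(\xi,\xi')|^2\,d\xi'\,d\xi=|\gamma\mu|^2\,\|J_S\|_{L^2(\Omega\times\Omega)}^2\,\|a\|_{L^2(\Omega)}^2,
\]
which is finite precisely by the hypothesis $J_S\in L^2(\Omega\times\Omega,\R)$. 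Hence $A(a)\in L^2(\Omega,\R)$, and in fact $A$ is a bounded linear operator on $L^2(\Omega,\R)$ with operator norm at most $|\gamma\mu|\,\|J_S\|_{L^2(\Omega\times\Omega)}$. Combining the three contributions, $f(a)=-\alpha a+A(a)+c$ is a sum of elements of $L^2(\Omega,\R)$, so it is well defined (independent of the representative chosen for $a$) and lies in $L^2(\Omega,\R)$.

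The only point requiring a word of care is the appeal to Tonelli/Fubini: one must know that $(\xi,\xi')\mapsto|J_S(\xi,\xi')|^2|a(\xi')|^2$ is a nonnegative jointly measurable function on $\Omega\times\Omega$, which is clear since $J_S$ is by assumption measurable on the product space and $a$ lifts to a measurable function of $(\xi,\xi')$ through the second projection; this same fact also guarantees, for a.e.\ $\xi$, the finiteness of $\int_\Omega|J_S(\xi,\xi')|\,|a(\xi')|\,d\xi'$ and thus the a.e.\ well-posedness of the pointwise formula defining $A(a)(\xi)$. I do not expect any genuine obstacle here: the argument is the textbook Hilbert--Schmidt estimate, and the finiteness of the $d\xi$-measure of $\Omega$ — already assumed in the text — is exactly what is needed so that the additive constant $c$ remains square-integrable on $\Omega$.
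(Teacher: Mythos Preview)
Your proof is correct and follows essentially the same route as the paper: the heart of both arguments is the Cauchy--Schwarz/Hilbert--Schmidt estimate $\|A(a)\|_{L^2(\Omega)}\le|\gamma\mu|\,\|J_S\|_{L^2(\Omega\times\Omega)}\|a\|_{L^2(\Omega)}$, after which the remaining terms $-\alpha a$ and $c$ are trivially in $L^2(\Omega,\R)$. Your version is in fact a bit more careful than the paper's --- you spell out why the constant $c$ is square-integrable (finite measure of $\Omega$) and you justify the use of Tonelli --- but the substance is identical.
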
 

\begin{proof} 
Performing a direct computation: 
\begin{equation}
\begin{aligned}
&\norma{f(a)}_{L^2(\Omega, \R)}^2\leq  \\
&\norma{a}^2 + \norma{A(a)}^2 + \norma{c}^2 + 2 \norma{a}^2\norma{A(a)}^2 + 2 \norma{c}^2\norma{A(a)}^2 + 2 \norma{c}^2\norma{a}^2,
\end{aligned}
\end{equation}
exploiting the Cauchy-Schwarz inequality. Since $a$ and $c \in L^2(\Omega, \R)$, we only need to prove that $\norma{A(a)}_{L^2(\Omega, \R)} < \infty$:
\begin{equation}
\begin{aligned}
\norma{A(a)}^2_{L^2(\Omega, \R)} & =\int_{\Omega} \left( \int_\Omega J_S(\xi, \xi')a(\xi', t)d\xi'\right)^2 d\xi\\
& \leq \int_{\Omega} \left( \int_{\Omega}J_S^2(\xi, \xi')d\xi' \int_{\Omega} a^2(\xi', t) d\xi'\right)d \xi \\
&\leq\norma{J_S}_{L^2(\Omega\times\Omega, \R)}^2\norma{a}_{L^2(\Omega, \R)}^2
\end{aligned}
\end{equation}
using again the Cauchy-Schwarz inequality.
\end{proof}

\begin{remark}
\label{locallyIntegrableKernel}
The kernel $J_S(\xi, \xi')$ is obtained as fundamental solution of an hypoelliptic operator. Since local estimates in terms of Gaussian upper and lower bounds are provided for example in \cite{BoPo07}, it follows that this kernel is locally integrable, in particular locally square integrable. Furthermore, the operator \eqref{kernelOperator} is linear, bounded, and compact on the measure space of definition, see \cite{C19}.
\end{remark}

\begin{proposition}
Let $J_S \in L^2(\Omega\times\Omega, \R)$, then we have that the function $f$ is Lipschitz in the variable $a \in L^2(\Omega, \R)$. 
\end{proposition}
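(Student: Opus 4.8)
The plan is to exploit the affine structure of $f$ together with the bound on $A$ already obtained in the previous proposition. Write $f(a) = -\alpha a + A(a) + c$, and recall that the operator $A$ of \eqref{kernelOperator} is linear in $a$. Then for any $a, b \in L^2(\Omega, \R)$ the constant term $c$ cancels in the difference and the nonlinearity disappears, leaving $f(a) - f(b) = -\alpha(a - b) + A(a - b)$. By the triangle inequality this gives $\norma{f(a) - f(b)}_{L^2(\Omega, \R)} \le \abs{\alpha}\,\norma{a - b}_{L^2(\Omega, \R)} + \norma{A(a - b)}_{L^2(\Omega, \R)}$.

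The first step is then to control the second summand by reusing the estimate proved just above: applying the Cauchy--Schwarz inequality to the integral operator exactly as in the proof of the preceding proposition yields $\norma{A(v)}_{L^2(\Omega, \R)} \le \abs{\gamma\mu}\,\norma{J_S}_{L^2(\Omega \times \Omega, \R)}\,\norma{v}_{L^2(\Omega, \R)}$ for every $v \in L^2(\Omega, \R)$, and we apply this with $v = a - b$. This is precisely where the hypothesis $J_S \in L^2(\Omega \times \Omega, \R)$ enters — it is what makes $A$ a bounded (indeed compact, cf. Remark \ref{locallyIntegrableKernel}) operator, so that its operator norm is finite; note that finiteness of $\norma{J_S}_{L^2(\Omega \times \Omega, \R)}$ itself is guaranteed by the local square-integrability of the hypoelliptic fundamental solution and the assumption that $\Omega$ has finite measure.

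Combining the two bounds gives $\norma{f(a) - f(b)}_{L^2(\Omega, \R)} \le L\,\norma{a - b}_{L^2(\Omega, \R)}$ with $L = \abs{\alpha} + \abs{\gamma\mu}\,\norma{J_S}_{L^2(\Omega \times \Omega, \R)} < \infty$, so $f$ is globally Lipschitz. I do not expect a genuine obstacle here: the only points requiring a little care are (i) using the linearity of $A$ so that no nonlinear remainder survives in $f(a)-f(b)$, and (ii) keeping track of the constant $\gamma\mu$ in the operator bound; both are immediate from what has already been established. With $f$ shown to be well defined and Lipschitz on $L^2(\Omega, \R)$, the classical Cauchy--Lipschitz (Picard--Lindel\"of) theory in Hilbert spaces (see \cite{HL13}) then yields existence and uniqueness of the solution to the Cauchy problem \eqref{CP}.
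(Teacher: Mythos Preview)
Your proof is correct and follows essentially the same approach as the paper: both use the triangle inequality on $f(a_1)-f(a_2)$ and then invoke the boundedness of the integral operator $A$ (via Cauchy--Schwarz / Remark~\ref{locallyIntegrableKernel}) to obtain a Lipschitz constant. If anything, your version is slightly more careful in tracking the factor $\abs{\alpha}$ and in giving the explicit constant $L = \abs{\alpha} + \abs{\gamma\mu}\,\norma{J_S}_{L^2(\Omega\times\Omega,\R)}$, whereas the paper writes the bound as $(1+M)\norma{a_1-a_2}$.
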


\begin{proof} 
We have 
\begin{equation}
\begin{aligned}
\norma{f(a_1)-f(a_2)}_{L^2(\Omega, \R)} = & \norma{-a_1+a_2 + A(a_1)-A(a_2)}_{L^2(\Omega, \R)} \\
\leq& \norma{a_1-a_2}_{L^2(\Omega, \R)}+\norma{A(a_1)-A(a_2)}_{L^2(\Omega, \R)}\\
\leq & (1+M)\norma{a_1-a_2}_{L^2(\Omega, \R)}\\
\end{aligned}
\end{equation}
where we first use the triangular inequality, and then the properties of Remark \ref{locallyIntegrableKernel}. In particular, $M$ is the Lipschitz constant for the operator $A$ defined through \eqref{kernelOperator}.
\end{proof}

\subsubsection{Stability analysis}
\label{sec:stability}

The neuronal population activity $a(\xi, t)$ represents the average excitation of cells resulting from an initial input (such as a visual stimulus) and from interactions among the cells. Perceptual units are identified by the cells most involved in this activation process, and typically, the perception of these units is independent of time.

The stationary solutions $a_0$ of equation \eqref{eq:meanField_Restriction}, which are time-independent, satisfy the following equation:
\begin{equation}
\label{meanFieldStable}
  -\alpha a_0 (\xi) + \gamma \left(\int_{\Omega} \mu J_S(\xi, \xi') a_0(\xi') \, d\xi' + c \right) = 0.
\end{equation}
We now focus on the study of stable stationary states, as these are crucial for ensuring the stability of the perception of visual units.

\paragraph{Lyapunov method in the space of position and orientation.}

The stability of the Cauchy problem \eqref{CP} associated with a stationary state $a_0$ can be analyzed using a technique introduced by Faye-Faugeras \cite{FF10}, which is based on Lyapunov functionals.

We consider a small perturbation around the stationary state $u := a - a_0$, so that the linearized equation associated with \eqref{eq:meanField_Restriction} can be written as:
\begin{equation}
\label{linMeanField}
\frac{\partial u(\xi, t)}{\partial t}= -\alpha u (\xi, t)+ \gamma\int_{\Omega}\mu J_S(\xi, \xi')u(\xi', t)d\xi',  \text{ for } \xi \in \Omega.
\end{equation}

\begin{proposition}
If $ \norma{J_S}_{L^2(\Omega^2, \R)}< \frac{\alpha}{\mu\gamma}$ then $a_0$ it is stable steady state for \eqref{CP}. 
\end{proposition}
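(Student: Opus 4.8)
\emph{Proof proposal.} The plan is to follow the Lyapunov-functional argument of Faye--Faugeras \cite{FF10}, carried out in the Hilbert space $L^2(\Omega,\R)$. First I would note that on $\Omega$, where $h\equiv c$, equation \eqref{eq:meanField_Restriction} is \emph{affine} in $a$ so long as the argument of the firing rate stays in its linear band; hence, for a perturbation $u:=a-a_0$ small enough to remain in that regime, \eqref{linMeanField} is not merely a linearization but is satisfied \emph{exactly}, and it reads $\partial_t u = -\alpha u + A(u)$ with $A$ the bounded operator of \eqref{kernelOperator}. By the existence-and-uniqueness results of the previous subsection, $t\mapsto u(t)$ is a $C^1$ curve in $L^2(\Omega,\R)$, which is exactly what is needed to differentiate its norm along the flow.

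Next I would introduce the candidate Lyapunov functional $V(u)=\tfrac12\norma{u}_{L^2(\Omega,\R)}^2$ and compute, along a trajectory of \eqref{linMeanField},
\[
\frac{d}{dt}V(u(t)) = \innerprod{u(t),\,\partial_t u(t)} = -\alpha\norma{u(t)}_{L^2(\Omega,\R)}^2 + \innerprod{u(t),\,A(u(t))}.
\]
The cross term is controlled by Cauchy--Schwarz together with the operator-norm bound $\norma{A}_{\mathrm{op}}\le \mu\gamma\,\norma{J_S}_{L^2(\Omega^2,\R)}$, which is just the fact that the Hilbert--Schmidt norm dominates the operator norm (already implicit in Remark \ref{locallyIntegrableKernel}): thus $\innerprod{u,A(u)}\le \mu\gamma\,\norma{J_S}_{L^2(\Omega^2,\R)}\,\norma{u}_{L^2(\Omega,\R)}^2$. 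Setting $\beta:=\alpha-\mu\gamma\,\norma{J_S}_{L^2(\Omega^2,\R)}$, the hypothesis $\norma{J_S}_{L^2(\Omega^2,\R)}<\alpha/(\mu\gamma)$ (with $\alpha,\mu,\gamma>0$) gives $\beta>0$ and
\[
\frac{d}{dt}V(u(t)) \le -2\beta\,V(u(t)).
\]
Grönwall's inequality then yields $\norma{u(t)}_{L^2(\Omega,\R)}^2\le e^{-2\beta t}\norma{u(0)}_{L^2(\Omega,\R)}^2\to 0$ as $t\to\infty$, so the perturbation decays exponentially and $a_0$ is an asymptotically stable steady state of \eqref{CP}.

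I expect the main obstacle to be not the estimate itself but the two justifications framing it: first, the infinite-dimensional chain rule $\tfrac{d}{dt}\norma{u(t)}^2=2\innerprod{u(t),u'(t)}$, which I would discharge via the $C^1$ regularity of the solution established earlier; second, the claim that remaining in the linear regime of $\varrho$ is compatible with ``small perturbation'', i.e.\ that the affine reduction is valid on a genuine neighbourhood of $a_0$ — this needs $a_0$ to sit in the interior of the linear band, which I would record as a standing assumption consistent with \cite{SC15}. It is worth noting that the estimate needs no self-adjointness of $A$; since $J_S$ is symmetric by construction (Remark \ref{symmetricKernel}), one could equivalently argue spectrally, observing that $A$ is compact and self-adjoint with spectrum contained in $[-\mu\gamma\norma{J_S}_{L^2(\Omega^2,\R)},\,\mu\gamma\norma{J_S}_{L^2(\Omega^2,\R)}]$, so that $-\alpha I+A$ has spectrum bounded above by $-\beta<0$, giving exponential stability of the semigroup generated by the right-hand side of \eqref{linMeanField}.
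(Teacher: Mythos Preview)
Your proposal is correct and follows essentially the same route as the paper: the Lyapunov functional $V(u)=\tfrac12\norma{u}_{L^2(\Omega,\R)}^2$ from \cite{FF10}, the same derivative computation along trajectories of \eqref{linMeanField}, and the same Cauchy--Schwarz/Hilbert--Schmidt bound yielding $\tfrac{d}{dt}V\le(-\alpha+\mu\gamma\norma{J_S}_{L^2(\Omega^2,\R)})\norma{u}^2$. Your write-up is in fact more careful than the paper's --- you make explicit the Gr\"onwall step, the $C^1$ regularity needed for the chain rule, and the standing assumption that $a_0$ lies in the linear band of $\varrho$ --- and your closing spectral remark anticipates exactly the eigenvalue reformulation the paper gives in Remark~\ref{rmk:eigenv}.
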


\proof
It follows from \cite{FF10} that if $0$ is asymptotically stable for \eqref{linMeanField}, then $a_0$ is asymptotically stable for \eqref{eq:meanField_Restriction}. So, we consider the Lyapunov functional defined as the quadratic form: 
\begin{equation}
\label{Lyapunov}
V(u) = \frac{1}{2} \int_{\Omega} u^2(\xi) d\xi.
\end{equation} This function is positive definite; then we analyze its time derivative along the trajectories of the system:
\begin{equation}
\begin{aligned}
dV(u)(\tilde A(u)) &= -\int_\Omega\alpha u(\xi)u(\xi)d\xi+\int_{\Omega} u(\xi)\mu\gamma\int_{\Omega} J_S(\xi, \xi')u(\xi')d\xi' d \xi\\
& \leq -\alpha\norma{u}^2_{L^2(\Omega, \R)} + \mu\gamma\norma{J_S}_{L^2(\Omega^2, \R)}\norma{u}^2_{L^2(\Omega, \R)} \\
& = (-\alpha+\mu\gamma\norma{J_S}_{L^2(\Omega^2, \R)})\norma{u}^2_{L^2(\Omega, \R)},
\end{aligned}
\end{equation}
from which the premise follows. 
\endproof

\begin{remark}
\label{rmk:eigenv}
Since equation \eqref{linMeanField} is linear, we can further investigate the stability of the stationary solution by examining the eigenvalue problem associated with the linear operator:
\begin{equation}
\label{meanFieldEigv}
L u := -\alpha u + \gamma \mu \int_{\Omega} J_S(\xi, \xi') u(\xi') \, d\xi' = \lambda u \quad \iff \quad \int_{\Omega} J_S(\xi, \xi') u(\xi') \, d\xi' = \tilde{\lambda} u,
\end{equation}
where $\tilde{\lambda} = \frac{\lambda + \alpha}{\gamma \mu}$.
In general, the system is stable if $\lambda < 0$. Thus, for stability, we require $\tilde{\lambda} < \frac{\alpha}{\gamma \mu}$. 
\end{remark}

\section{Perceptual units individuation and the correspondence problem}
\label{sec:percepts-and-stereo}
In this section, we present our approach to solve the correspondence problem. Inspired by cortical mechanisms, our goal is to identify perceptual units in the 5D space of position and orientation.
More specifically, following the methodology outlined in \cite{SC15}, we investigate how the spectral analysis performed by the neuronal population studied in the previous section can be applied to identify 3D perceptual units. This results in a grouping operation that is closely related to principles of spectral clustering and dimensionality reduction. In this framework, the salient objects in the scene correspond to the eigenvectors associated with the largest eigenvalues.
We then use these grouping techniques to identify the perceptual units within the context of stereo vision. The result completes the matching problem together with a global organization of the matches.

\subsection{Spectral analysis and emergence of perceptual units}

In applications, dealing with a finite number of points is typical, resulting in a discrete structure for the visual stimulus. We consider a set $\Omega$ of $k$ points, each defined by position-orientation coordinates:
\begin{equation}
    \Omega \coloneqq \{\xi_i = (r_{1i}, r_{2i}, r_{3i}, \theta_{i}, \varphi_{i}) \in \PO, \text{ for } i = 1, \ldots k\}.
\end{equation}
These points correspond to the binocular cells activated by the stimulus. In this context, the discrete mean field equation \eqref{linMeanField} becomes:
\begin{equation}
\label{meanFieldD}
\frac{du(\xi_i, t)}{dt}= -\alpha u (\xi_i, t)+ \gamma\mu\sum_{j=1}^k J_S(\xi_i, \xi_j)u(\xi_j, t).
\end{equation}
Here,  $J_S$ is  the kernel function which is reduced to a symmetric matrix $\textbf{J}$ of dimension $k \times k$, with entries given by:
\begin{equation}
\textbf{J}_{ij}= \gamma\mu J_S(\xi_i, \xi_j).
\end{equation}
The corresponding eigenvalue problem \eqref{meanFieldEigv} thus simplifies to:
\begin{equation}
\textbf{J}u= \tilde \lambda u. 
\end{equation}
This matrix $J$ is analogous to the \textit{affinity matrix} used in spectral clustering and dimensionality reduction problems, as introduced in numerous studies (e.g., \cite{RS00, BN03, CL2006, PF98, MS01, SM00}). It quantifies how similar or ``affine'' the points are to each other.

\subsubsection{Spectral clustering, dimensionality reduction and salient objects in the visual scene}

It is possible to consider the set of points $\Omega$ as vertices of a weighted graph, where the
 weights of the edges connecting these points are given by the affinity matrix $\textbf{J}$. Since this affinity matrix is symmetric, it has been shown in \cite{PF98} that the leading eigenvectors can be used as an indicator vectors for preliminary grouping tasks. 
 
\begin{remark}
Perona and Freeman (\cite{PF98}) introduced a method to characterize a visual scene by approximating the matrix $\mathbf{J}$ as a sum of rank-1 matrices. Each rank-1 matrix is formed by the outer product of a vector $\mathbf{p}$. The first rank-1 approximation is determined by minimizing the Frobenius norm:
\begin{equation}
\mathbf{p}_1 = \text{argmin}_{\mathbf{p}} \sum_{i, j = 1}^k (\mathbf{J}_{ij} - p_i p_j)^2.
\end{equation}
It was shown that $\mathbf{p}_1$ corresponds to the eigenvector $\mathbf{v}_1$ of $\mathbf{J}$ scaled by the square root of the largest eigenvalue $\lambda_1$: $\mathbf{p}_1 = \lambda_1^{1/2} \mathbf{v}_1$. This approach is iteratively applied to identify all significant eigenvectors, typically resulting in a reduction of the problem's dimensionality to fewer than $k$ eigenvectors.
\end{remark}

As a result, the problem of grouping is reduced to the spectral analysis of the affinity matrix $\mathbf{J}$, where the salient objects in the scene correspond to the eigenvectors with the largest eigenvalues. In section \ref{sec:kernel-mean_field}, we have shown that  spectral analysis can be implemented by the neuronal population binocular cells in the primary visual cortex. So, this eigenvector provides an initial indication of how to separate the data into groups. However, this basic method can be affected by noise and nonlinear distributions, leading to clustering errors (\cite{W99}).

\subsubsection{Random walk normalization}

The approach was further refined, particularly in relation to problems such as minimal graph cuts (\cite{SM00}), by performing spectral analysis on a suitably normalized affinity matrix.

Using the normalization proposed in \cite{MS01}, the affinity matrix $\textbf{J}$ is transformed into the transition matrix $\textbf{P}$ of a Markov process, via row-wise normalization. Specifically, if $\textbf{D}$ is the diagonal degree matrix with elements
\begin{equation}
\textbf{D}_{ii}= \sum_{j= 1}^k \textbf{J}_{ij}, \text{ for } i = 1 \ldots k , 
\end{equation}
then, the normalized affinity matrix $\textbf{P}$ is given by 
\begin{equation}
\label{eq:norm}
\textbf{P} = \textbf{D}^{-1}\textbf{J}.
\end{equation}

\begin{remark}
The matrix $\textbf{P}$ is generally not symmetric. However, its eigenvalues $\{\lambda_i\}_{i = 1}^k$ are real and lie between $0$ and 1. The eigenvectors associated with these eigenvalues can be used for clustering. 
\end{remark}

 The clustering properties of the eigenvectors of the matrix $\mathbf{P}$ can be most clearly understood in an ideal scenario. Consider a graph $G$ with nodes $\Omega$ and edge weights defined by $\mathbf{P}$, and suppose the graph is divided into $\bar{k}$ connected components $G_i $ with $i = 1\ldots \bar{k}$. If all elements within each component have identical edge weights connecting them, the normalized affinity matrix $\textbf{P}$ would be a block diagonal matrix. This matrix would have $\bar{k}$ non-zero eigenvalues $\{\lambda_i\}_{i = 1}^{\bar{k}}$, each equal to 1, and the corresponding eigenvectors $\{u_i\}_{i = 1}^{\bar{k}}$ would be piecewise constant indicator functions corresponding to these components, indicating the correct partitions. 

In real-world applications, however, the affinity matrices are not perfectly block-diagonal and are typically perturbed versions of the ideal matrices. 
Thus, the spectrum of $\mathbf{P}$ is not perfectly dichotomous. The goal is to approximate the ideal case, where points within each cluster are strongly connected to their neighbors and weakly connected to points in other clusters. Determining the number of significant eigenvectors (i.e., choosing $\bar{k}$) is crucial. Traditional methods include finding the maximum eigengap or minimizing a cost function (\cite{ZLP04}). It is possible to opt, as proposed in \cite{BCCS14}, for a semi-supervised approach by setting a significance threshold $\varepsilon$, selecting eigenvectors where $\lambda_i > 1-\varepsilon$. Given the sensitivity of results to changes in $\varepsilon$, a technique from diffusion maps (\cite{coifman2005geometric, CL2006, DHHV08}) is applied, using an auxiliary parameter $\tau$ to evaluate the exponentiated spectrum $\{\lambda_i^\tau\}_{i = 1}^k$. For large $\tau$, this spectrum better approximates a dichotomous form, facilitating more stable clustering. After determining the number $\bar{k}$ of significant eigenvectors to use, a basic clustering technique (e.g., see \cite{kannan2004clusterings}) is employed to extract clustering information. 

\begin{remark}
Normalizations for connections between cells have been neurally studied for example in \cite{CH12, BBB09, TSE94, H92}. In particular, among the normalizations presented in \cite{CH12}, there is a normalization corresponding to \eqref{eq:norm} since it computes a ratio between the responses of an individual neuron and the summed activity of pooled neurons. This normalization was introduced to study the properties of neurons in the primary visual cortex \cite{H92}.
\end{remark}

\subsection{Solution of the  stereo correspondence problem}
\label{sec:reconstruction}
To use spectral clustering to solve the stereo problem, we start  by mapping the left and right images to $\PO$, pairing all possible matchable points. This is achieved by inverting perspective projections through the left and right optical centers $C_L = (-c, 0, 0)$ and $C_R = (c, 0, 0)$ onto parallel retinal planes. Specifically, points on the left and right retinal planes that share the same y-coordinate are paired.

Using classical triangulation techniques (e.g., see \cite{F93}), we start by considering any pair $Q_L = (x_L, y)$ and $Q_R = (x_R, y)$ with their respective orientations identified by the angles $\theta_L$ and $\theta_R$. Here, $\theta_L$ and $\theta_R$ represent the orientations of the tangent vectors to the visual stimulus at points $Q_L$ and $Q_R$ on the left and right retinal planes, respectively.
From any pair $(Q_L, Q_R)$, we project back into $\mathbb{R}^3$ to obtain the 3D coordinates of the point $Q = (r_1, r_2, r_3)$, where:
\begin{equation}
r_1 = c\frac{x_L + x_R}{x_L - x_R}, \qquad
r_2 = \frac{2cy}{x_L - x_R}, \qquad 
r_3 = \frac{2fc}{x_L - x_R},
\end{equation}
with $c > 0$ and $f$ being the focal length.

The corresponding tangent vector $t$ has direction identified by $n(\theta, \varphi):=(\cos\theta \sin\varphi, \sin\theta \sin\varphi, \cos\varphi)$, where the angles $\theta$ and $\varphi$ are computed as follows (e.g., see \cite{F93}).  First, we compute the projection matrices $\Pi_L$ and $\Pi_R$, as well as the vectors $q_L, q_R, t_L, t_R$. These are given by:

\begin{equation}
\Pi_L = \begin{bmatrix}
1 & 0 & -\frac{c}{f} \\
0 & 1 & 0 \\
0 & 0 & 1
\end{bmatrix}, 
\Pi_R = \begin{bmatrix}
1 & 0 & \frac{c}{f} \\
0 & 1 & 0 \\
0 & 0 & 1
\end{bmatrix}, \\
q_i = \begin{pmatrix}
x_i \\
y_i \\
f
\end{pmatrix},
t_i = \begin{pmatrix}
\cos(\theta_i) \\
\sin(\theta_i) \\
0
\end{pmatrix} \text{ for } i = L, R.
\end{equation}
Next, we compute the tangent vector $t$ using the cross product $\times$ of the transformed vectors:
\begin{equation}
t = (\Pi_L^{-1}q_L\times \Pi_L^{-1}t_L )\times (\Pi_R^{-1}q_R\times \Pi_R^{-1}t_R).
\end{equation}
Finally, we calculate the angles $\theta$ and $\varphi$ as:
\begin{equation}
\theta = \arctan(t_2/t_1) , \quad \varphi = \arccos(t_3/\|t\|),
\end{equation}
where $t_i$ denotes the ith component of the vector $t$.

This process enables us to reconstruct the point cloud in $\PO$, representing the true stimulus plus some noise deriving from false matches.
To address the correspondence problem and identify the correct points belonging to the real stimulus, we perform a spectral analysis; i.e., we compute the affinity matrix $\mathbf{J}$ for each pair of points in $\PO$ and consider its spectral decomposition. We use the eigenvectors to capture the most significant part of the visual scene, corresponding to the real stimulus. This selection of objects in 3D determines the matching, proposing an alternative resolution to the stereo correspondence problem.

\section{Numerical simulations and results}

In this section, we detail the implementation of our proposed model for identifying 3D perceptual units and solving the stereo correspondence problem. We then discuss the results obtained.

\subsection{Numerical solution of the stochastic differential equation}
We begin by outlining the numerical method used to compute the connectivity kernel. This is followed by an error analysis and a review of the numerical results.

\subsubsection{Monte Carlo implementation}
\label{sssec:monte_carlo}
The Monte Carlo method is widely used due to its effectiveness in various situations. Its implementation consists of two main steps: first, simulating stochastic paths using the Euler-Maruyama scheme, and then taking an appropriate average of the simulated paths in accordance with the Strong Law of Large Numbers.

\paragraph{Euler-Maruyama scheme.}

We fix the time parameter $T>0$ and the number of steps $M \in \N$ of the path; then, we construct, iteratively, the discrete version of the stochastic process $\Gamma_t = (r_1(t), r_2(t), r_3(t), \varphi(t),\theta(t)) $ in $\PO$, satisfying equation \eqref{SDE}.  

The scheme generates
 $\Gamma_{kT/M}^M$ with $0 \leq k \leq M-1 $, by setting 
\begin{equation}
\Gamma_{(k+1)T/M}^M = \Gamma_{kT/M}^M+Y_3(\Gamma_{kT/M}^M)\frac{T}{M} + \varlambda \sigma(\Gamma_{kT/M}^M)\sqrt{\frac{T}{M}}\delta_k
\end{equation}
with $\Gamma_0^M = \Gamma_0$ and $\sqrt{\frac{T}{M}}\delta_k$ comes from the fact that the Brownian increments $d B_i(t)$ follow the Gaussian law $N(0, \frac{T}{M})$, with zero mean and variance $\frac{T}{M}$, so that $\delta_k = (\delta_1^k, \delta_2^k)$ and $\delta_i^k, i = 1, 2$ has normal distribution $N(0,1)$. In coordinates, we have the following system: 
\begin{equation}
\label{randomPath}
\begin{cases}
r_1(k+1) = r_1(k) + \frac{T}{M}\cos\theta(k) \sin \varphi(k) \\
r_2(k+1) =r_2(k) + \frac{T}{M}\sin\theta(k)\sin\varphi(k) \\
r_3(k+1) =r_3(k) +\frac{T}{M}\cos\varphi(k) \\
\theta(k+1) =\theta(k)- \varlambda\sqrt{\frac{T}{M}}\frac{\delta_1^k}{\sin\varphi(k)}\\
\varphi(k+1)= \varphi(k)+ \varlambda\sqrt{\frac{T}{M}}\delta_2^k \\
\end{cases} \text{ for } k = 0, 1, \ldots, M-1,
\end{equation}
with initial point $\Gamma_0 = (r_1(0), r_2(0), r_3(0), \varphi(0), \theta(0))$ and $\varlambda \in \R$ diffusion parameter. 
This very simple method allows a convergence rate result very satisfying for $M \gg 1$. Further details can be found in  \cite{GT13}. 
 
\begin{remark}
A similar approach was introduced in \cite{DBM19} to numerically compute the fundamental solution associated with the Kolmogorov equation of a diffusion process in position-orientation space. The proposed discretization \cite[eq. (72)]{DBM19} is based on the idea that a point in $\S^2$ can be represented in terms of three-dimensional rotations starting from an initial axis. 
 If we reformulate \eqref{randomPath} using rotations, we have: 
\begin{equation}
\begin{cases}
r(M) = r(0) + \sum_{k = 0}^{M-1} \frac{T}{M} n(k+1)\\
n(k+1) = \left( R_3^{\theta_{(k+1)}}R_2^{\varphi_{(k+1)}}\right)e_3 
\end{cases} \text{with }\begin{cases}
\theta(k+1) =\theta(k)- \varlambda\sqrt{\frac{T}{M}}\frac{\delta_1^{k}}{\sin\varphi(k)}\\
\varphi(k+1)= \varphi(k)+ \varlambda\sqrt{\frac{T}{M}}\delta_2^k \\
\end{cases}
\end{equation}
and set $n(k+1) := (\sin\theta_{(k+1)}\cos\varphi_{(k+1)}, \cos\theta_{(k+1)}\sin\varphi_{(k+1)}, \cos\varphi_{(k+1)})^T\in \S^2$, and the point $r(k) := (r_1(k), r_2(k), r_3(k))^T \in \R^3$ for $k = 0 \ldots M-1$.  Note: this approach differs from the discretization proposed in \cite[eq. (72)]{DBM19}, where the spherical term $n(k+1)$ is computed as a consecutive product of rotations using the natural index $k\leq M-1$.
\end{remark}

\paragraph{Strong Law of Large Numbers.}

We denote with $\Gamma_{(t, \xi_0, t_0)}$ the stochastic process at time $t$ associated to \eqref{SDE} starting from the point $\xi_0$ at time $t_0$. 
Its density law $\rho_\varlambda$  is defined through a probability measure $P$ as 
\begin{equation}
\rho_\varlambda(\mathcal{S}, t; \xi_0, t_0) := P[\Gamma_{(t, \xi_0, t_0)}\in \mathcal{S}] = \mathbb{E}[\mathbbm{1}_\mathcal{S}(\Gamma_{(t,\xi_0, t_0)})],
\end{equation}
with $\mathbb{E}$ the average in probability and $\mathbbm{1}_\mathcal{S}$ indicator function over a subset $\mathcal{S}$ of $\PO$.
To recover the density from the approximated stochastic process \eqref{randomPath}, it is possible to use the Strong Law of Large Numbers.

\begin{theorem}{(Strong Law of Large Numbers)}
\label{thm:SLLN}
Let $\Gamma_t^{(i)}, i \geq 0$ be a sequence of independent and identically distributed random variables. Assume that $\mathbb{E}[\Gamma_t^{(i)}]< \infty.$ For $N \geq 1$, denote the empirical mean of $(\Gamma_t^{(1)}, \dots, \Gamma_t^{(N)})$ by $\tilde S_N = \frac{1}{N}\sum_{i=1}^N \Gamma_t^{(i)}$. Then, the Strong Law of Large Numbers holds true: 
\begin{equation}
\label{eq:SLLN}
\lim_{N\rightarrow \infty}\tilde S_N = \mathbb{E}[\Gamma_t^{(1)}], \hspace{ 0.5cm } P-a.s.
\end{equation}
\end{theorem}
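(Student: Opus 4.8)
The plan is to prove the Strong Law of Large Numbers in its standard form, i.e. the almost-sure convergence of empirical means of i.i.d.\ random variables with finite first moment. Note first that the statement as phrased concerns $\PO$-valued random variables $\Gamma_t^{(i)}$; since $\PO = \R^3\rtimes\S^2$ embeds in a finite-dimensional Euclidean space, it suffices to prove the result componentwise for real-valued i.i.d.\ random variables, and then the vector-valued conclusion \eqref{eq:SLLN} follows coordinate by coordinate. So I would reduce to the classical Kolmogorov SLLN: if $X_1, X_2, \dots$ are i.i.d.\ real random variables with $\mathbb{E}[|X_1|] < \infty$, then $\frac{1}{N}\sum_{i=1}^N X_i \to \mathbb{E}[X_1]$ almost surely.

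The main body of the argument would follow the classical truncation route. First I would introduce the truncated variables $Y_i := X_i \mathbbm{1}_{\{|X_i| \leq i\}}$ and observe, using $\mathbb{E}[|X_1|]<\infty$ together with the Borel--Cantelli lemma, that $X_i = Y_i$ for all sufficiently large $i$ almost surely; hence it is enough to prove the SLLN for the $Y_i$. Next I would show $\frac{1}{N}\sum_{i=1}^N \mathbb{E}[Y_i] \to \mathbb{E}[X_1]$ by dominated convergence (the summands converge to $\mathbb{E}[X_1]$ and a Cesàro average of a convergent sequence converges to the same limit). The heart of the proof is then the almost-sure convergence of $\frac{1}{N}\sum_{i=1}^N (Y_i - \mathbb{E}[Y_i])$ to zero: I would bound $\sum_{i=1}^\infty \mathrm{Var}(Y_i)/i^2$ by a constant times $\mathbb{E}[|X_1|]$ (splitting the variance sum, exchanging the order of summation, and using $\sum_{i \geq n} i^{-2} \leq 2/n$), then invoke Kolmogorov's convergence criterion / the Khintchine--Kolmogorov theorem to get that $\sum_i (Y_i - \mathbb{E}[Y_i])/i$ converges almost surely, and finally apply Kronecker's lemma to pass from the convergence of that weighted series to the Cesàro statement. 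Combining the three pieces yields $\tilde S_N \to \mathbb{E}[X_1]$ a.s.

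I expect the main obstacle to be purely expository rather than mathematical: the classical proof is standard (it appears in any graduate probability text), so the real question is how much of it to reproduce. In the paper's context one could legitimately just cite a standard reference and state that the hypotheses — i.i.d.\ and finite first moment, which hold for the Euler--Maruyama samples because $\PO$-valued increments of the bounded drift/diffusion scheme have all moments — are satisfied. If a self-contained proof is desired, the one genuinely delicate step is the variance estimate $\sum_i \mathrm{Var}(Y_i)/i^2 \lesssim \mathbb{E}[|X_1|]$, where care is needed in the double-sum manipulation; everything else (Borel--Cantelli, Cesàro, Kronecker) is routine. A secondary point worth a sentence is the reduction from $\PO$-valued to real-valued variables, and the observation that "finite expectation" for a manifold-valued variable should be read as finiteness of the expected norm under the ambient embedding.
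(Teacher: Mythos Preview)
Your proof is correct: it is the standard Kolmogorov truncation argument (truncate at level $i$, use Borel--Cantelli to reduce to the truncated sequence, bound $\sum_i \mathrm{Var}(Y_i)/i^2$ by $C\,\mathbb{E}|X_1|$, apply the Khintchine--Kolmogorov convergence criterion and Kronecker's lemma). The reduction from $\PO$-valued to real-valued variables via a Euclidean embedding is also fine, though in the paper's actual application the theorem is invoked only for the real-valued variables $\mathbbm{1}_{\xi_j}(\Gamma^{(i)}_{(t,\xi_0,t_0)})$, so that reduction is not even needed.

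However, the paper provides \emph{no proof} of this theorem: it is stated as the classical Strong Law of Large Numbers and simply used. So there is nothing to compare your argument against; you have supplied a complete (and standard) proof where the paper merely cites the result. Your own remark that ``one could legitimately just cite a standard reference'' is exactly what the paper does.
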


To apply this theorem to our case, we first concentrate on the set $\mathcal{S}$ of interest.  We fix a discrete covering grid $\{\xi_j\}_{j \in \N, j\leq J}$ of a suitable subset $V$ of $\PO$: this is a collection of subsets $\{\Omega_i\}$ satisfying $\Omega_i\cap \Omega_j = \emptyset$ if $i\neq j$ and $ \cup_{j=1}^{J}\Omega_j= V$.
If we perform a discretization on $\R^3$ with step size $\Delta_1, \Delta_2, \Delta_3$ and on $\S^2$ with step size $\Delta_\varphi, \Delta_\theta$, we assign the element $\xi_j$ to be a representative of the $j^{th}$ box. In this sense we define $\mathcal{S} := \{\xi_j\}_{j \in \N, j \leq J}$. 

Then,  from the discrete counterpart of \eqref{eq:SLLN} we get: \begin{equation}
\label{transitioPb}
\rho_\varlambda(\xi_j, t; \xi_0, t_0) = \frac{1}{N} \sum_{i = 1}^N \mathbbm{1}_{\xi_j}(\Gamma^{(i)}_{(t, \xi_0, t_0)}),
\end{equation}
considering as stochastic process $\mathbbm{1}_{\xi_j}(\Gamma^{(i)}_{(t, \xi_0, t_0)})$. 

\begin{figure}[tbh]
     \centering
  \begin{subfigure}[b]{0.45\textwidth}
         \centering
         \includegraphics[width=\textwidth]{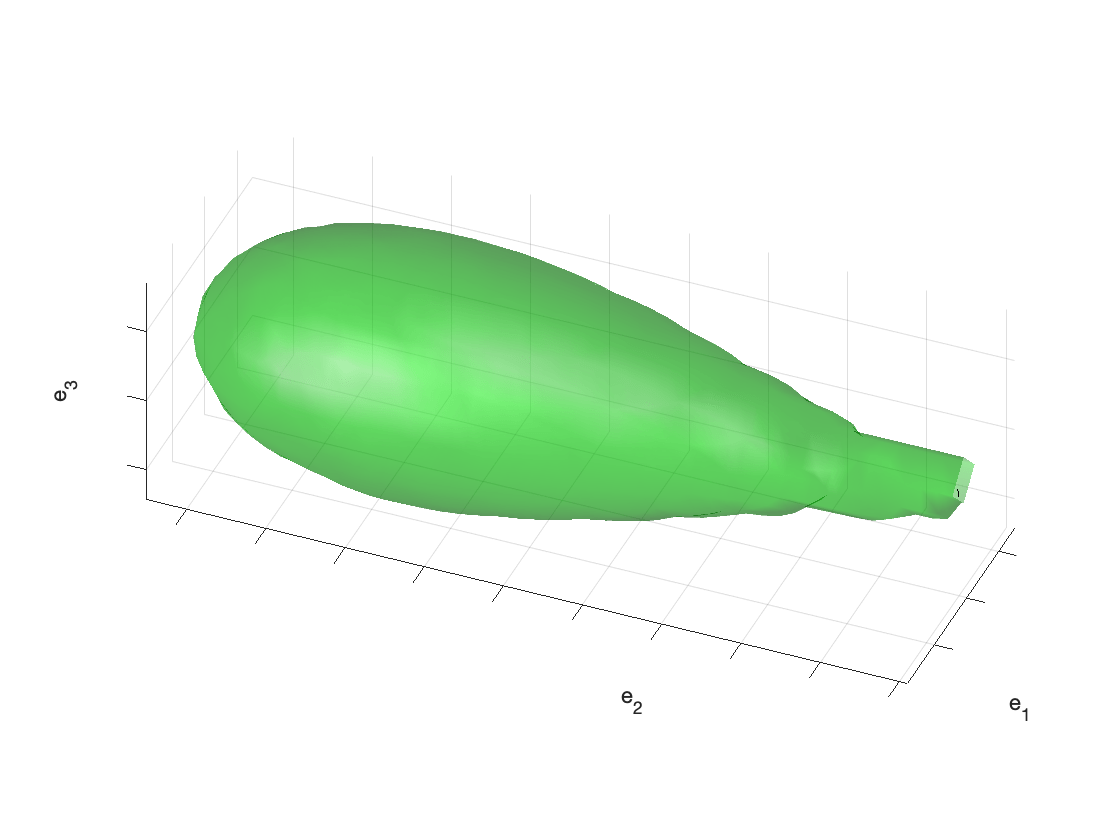}
         \caption{$J_{\R^3}^{\varlambda = 0.035}(\xi_j, \xi_0)$}
     \end{subfigure}
             \centering
     \begin{subfigure}[b]{0.45\textwidth}
         \centering
         \includegraphics[width=\textwidth]{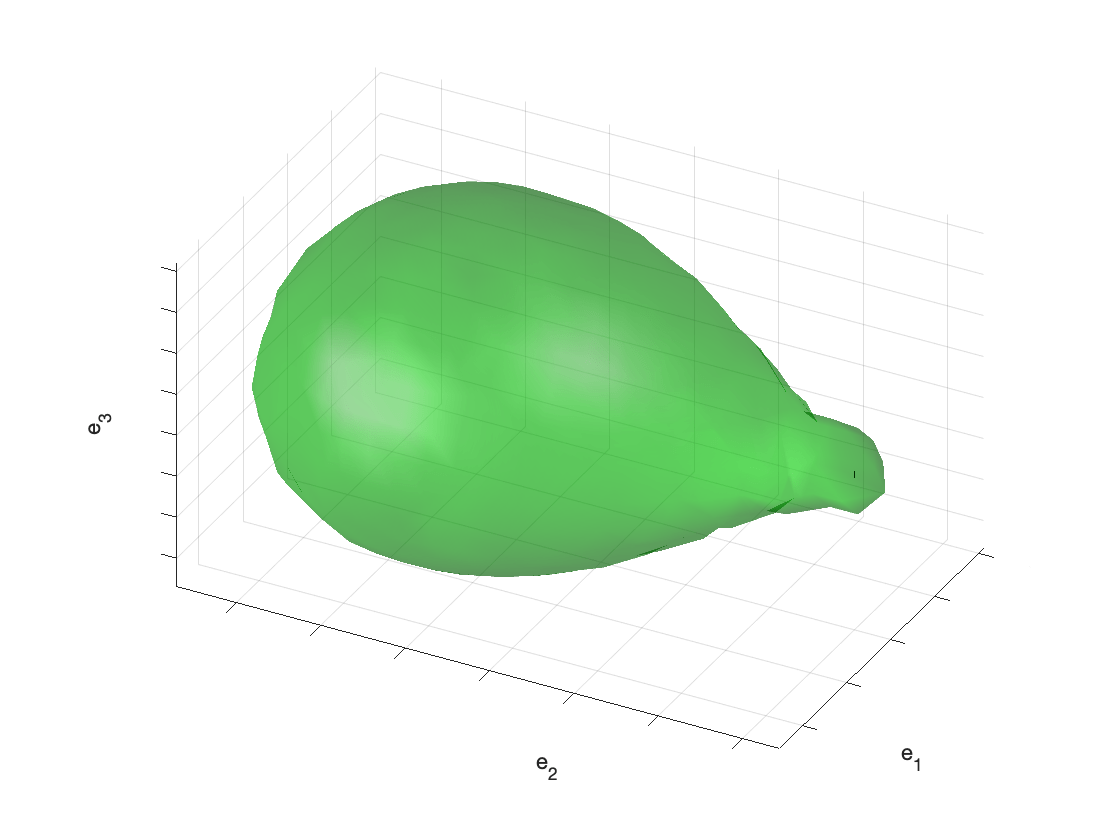}
         \caption{$J_{\R^3}^{\varlambda = 0.08}(\xi_j, \xi_0)$}
     \end{subfigure}

           \caption{ Display of $J_{\R^3}(\xi_j, \xi_0)$  for two different diffusion coefficients $\varlambda$ (left $\varlambda = 0.035$, right $\varlambda = 0.08$), at isovalue $0.1$. Euler-Maruyama scheme parameters are $M = 400$, $T = 100$, $N = 10^6$.
           }
\label{kernel}
\end{figure}

In other words, for a given $\xi_0 \in \PO$ we simulate $N$ discrete-time random paths and assign to each region $\xi_j$ a value between 0 and 1 corresponding to the number of paths that passed through it at the final time $t$ divided by $N$. This provides a distribution over the cells $\xi_j$ that, up to a multiplicative constant, for large values of $N$ gives a discrete approximation of the kernel. 

What we have just obtained is a probability density relative to the Kolmogorov equation associated to the operator \eqref{fKoperator}. Since a time integration is sufficient to obtain the time-independent fundamental solution, in the discrete case this corresponds to a sum over $t$:
\begin{equation}
\label{eq:kernel_sim}
J_\varlambda^T(\xi_j, \xi_0) = \sum_{t = t_0}^T \rho_\varlambda(\xi_j, t; \xi_0, t_0).
\end{equation}
This equation corresponds to applying the numerical method that generates \eqref{transitioPb} on all the points of the random path \eqref{randomPath} (relative to different evolutions times $t< T$), obtaining an approximation for $J_\varlambda ^T$.

Examples of the numerical kernels are shown in Figure \ref{kernel}.
The iso-surfaces in $\R^3$, images (a) and (b), are obtained considering the marginal distribution
\begin{equation}
\label{eq:intensityR3}
J_{\R^3}(\xi_j, \xi_0) = \int_{\S^2}J_\varlambda^T(\xi_j, \xi_0)d\sigma, 
\end{equation} where $d\sigma = \sin\varphi d\theta d \varphi$ is the spherical measure on $\S^2$.

\paragraph{Error estimate.}

The Monte Carlo method is based on the Strong Law of Large Numbers and clearly approximates results whose accuracy depends on the number of values $N$ used in equation \eqref{eq:SLLN}.

We display error bars for some sections of our kernel, according to the following procedure. Keeping fixed the initial point $\xi_0 \in \PO$, and integrating over $\S^2$, analogously as in \eqref{eq:intensityR3}, the kernel can be viewed as a function of the variable  $(r_1, r_2, r_3) \in \R^3$, namely $J_\varlambda^T = J_\varlambda^T(r_1, r_2, r_3)$. 
For simplicity, we focus on a section of the kernel: for example, fixing $(\bar r_2, \bar r_3)$ we consider $J_\varlambda^T(r_1)= J_\varlambda^T(r_1, \bar r_2, \bar r_3)$. We estimate the variance $\bar \sigma$ through the equation 
\begin{equation}
\label{eq:variance}
\bar \sigma^2 = \frac{1}{N}\sum_{i = 1}^N \left(\mathbbm{1}_{\xi}\left(\Gamma^i\right)-J_\varlambda^T\right)^2,
\end{equation}
and an error bar corresponding to the confidence interval $I = [J_\varlambda^T-r, J_\varlambda^T+r]$ where $ r = \frac{2.57 \bar \sigma}{\sqrt{N}}$ at the point of coordinates $(r_1, J_\varlambda^T(r_1))$ is then placed along the vertical axis. The same procedure can be applied to the other coordinates. We refer to \cite{N14} for further information.

Figure \ref{fig:error_bars} shows error bars for $J_\varlambda^T(r_1)$, namely a section parallel to the $r_1-$axis, related to kernels simulated with a different number of paths.
Image (a) shows the comparison between the values of a kernel generated by a number of paths $N = 10^5$ with one generated considering $N = 10^6$, in a neighborhood of the pole $\xi_0$: the magnitude of the confidence interval, indicated in the picture as $\Delta(N)$, decreases as $N$ increases. Image (b) shows that, as we get further and further away from $\xi_0$ (both along the $r_1$-axis and on the $r_2$-axis), the kernel value decays to zero, as does the standard deviation. This is well in agreement with Remark \ref{rmk:gauss_est}: as time becomes larger, the kernel value tends to zero.

Figure \ref{fig:error_perc} shows the error along the section parallel to the $r_2-$axis and passing through $\xi_0$. This section is orthogonal to the ones studied in Figure \ref{fig:error_bars}. Image (a) displays the trend of the error bars on the values $J_\varlambda^T(r_2)$: moving along the $r_2-$axis the function decreases to zero, as does the error.

\begin{figure}[H]
     \centering
  \begin{subfigure}[b]{0.4\textwidth}
         \centering
         \includegraphics[width=0.9\textwidth]{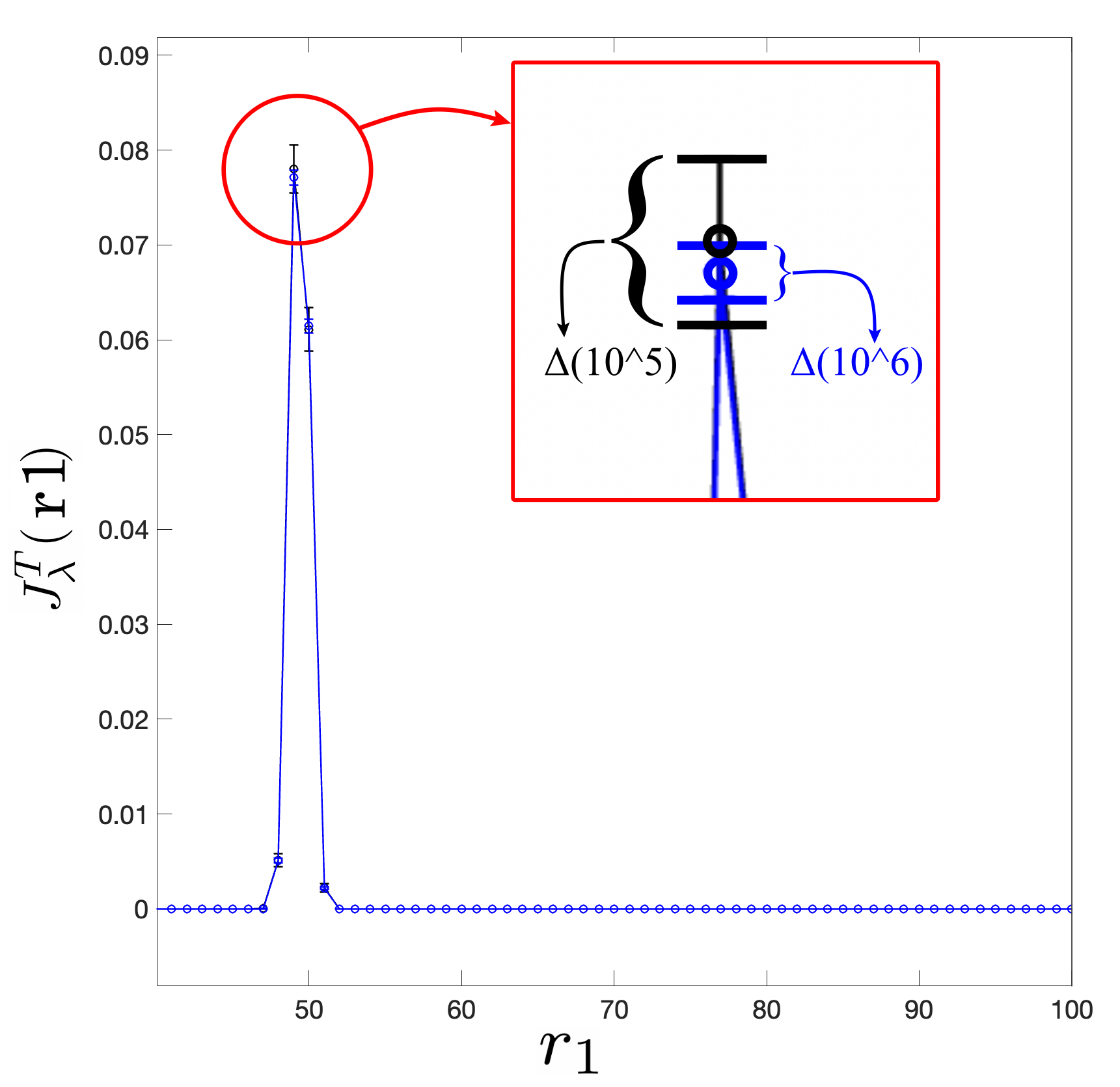}
         \caption{}
     \end{subfigure}
                  \centering
     \begin{subfigure}[b]{0.55\textwidth}
         \centering
         \includegraphics[width=\textwidth, height = 5cm]{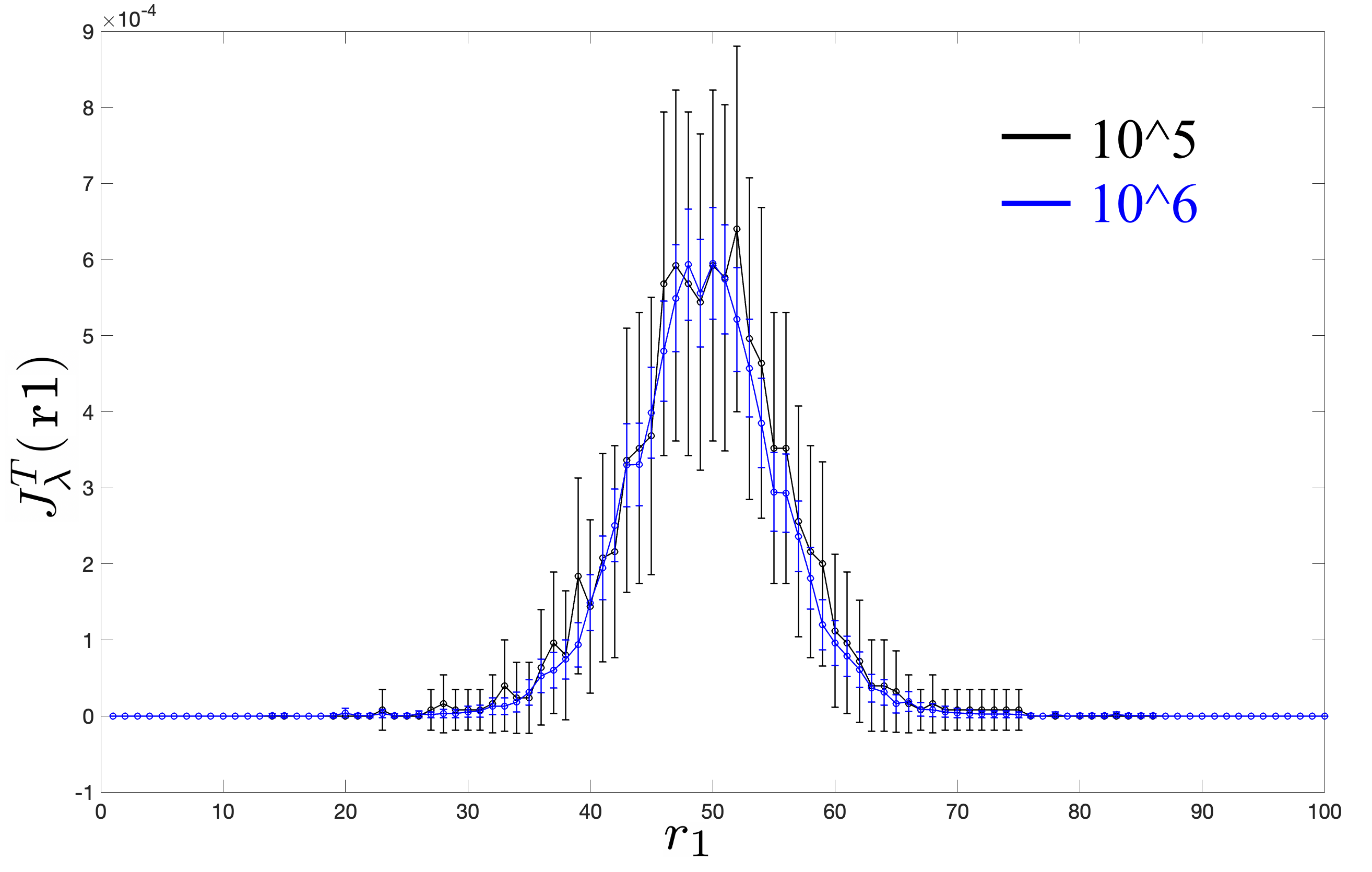}
         \caption{}
     \end{subfigure}
           \caption{Display of error bars on $J_\varlambda^T(r_1)$ sections with $\varlambda = 0.035$ and $T= 100$.  The initial point $\xi_0$ is described by  spatial indices $(r_{1_0},r_{2_0},r_{3_0})=(50, 1, 50)$, while $(\theta_0, \varphi_0)=(\pi/2, \pi/2)$.
 (a) Display of $J_\varlambda^T (r_1)$ identified by $(\bar r_2, \bar r_3) = (10, 50) $  for two different numbers of paths $N$: blue color corresponds to $N= 10^6$, the black one to $N = 10^5$. Emphasized in the red square: the amplitude ($\Delta(N)$) of the two confidence intervals. The width of the blue interval is about half of the black one.
  (b) $J_\varlambda^T (r_1)$  identified by $(\bar r_2, \bar r_3) = (50, 50) $ for two different number of paths: blue correspond to $N= 10^6$, black to $N = 10^5$. Moving away from the pole, both the kernel value and the confidence interval decrease.}
\label{fig:error_bars}
\end{figure}

\begin{figure}[tbh]
     \centering
  \begin{subfigure}[b]{0.5\textwidth}
         \centering
         \includegraphics[width=\textwidth]{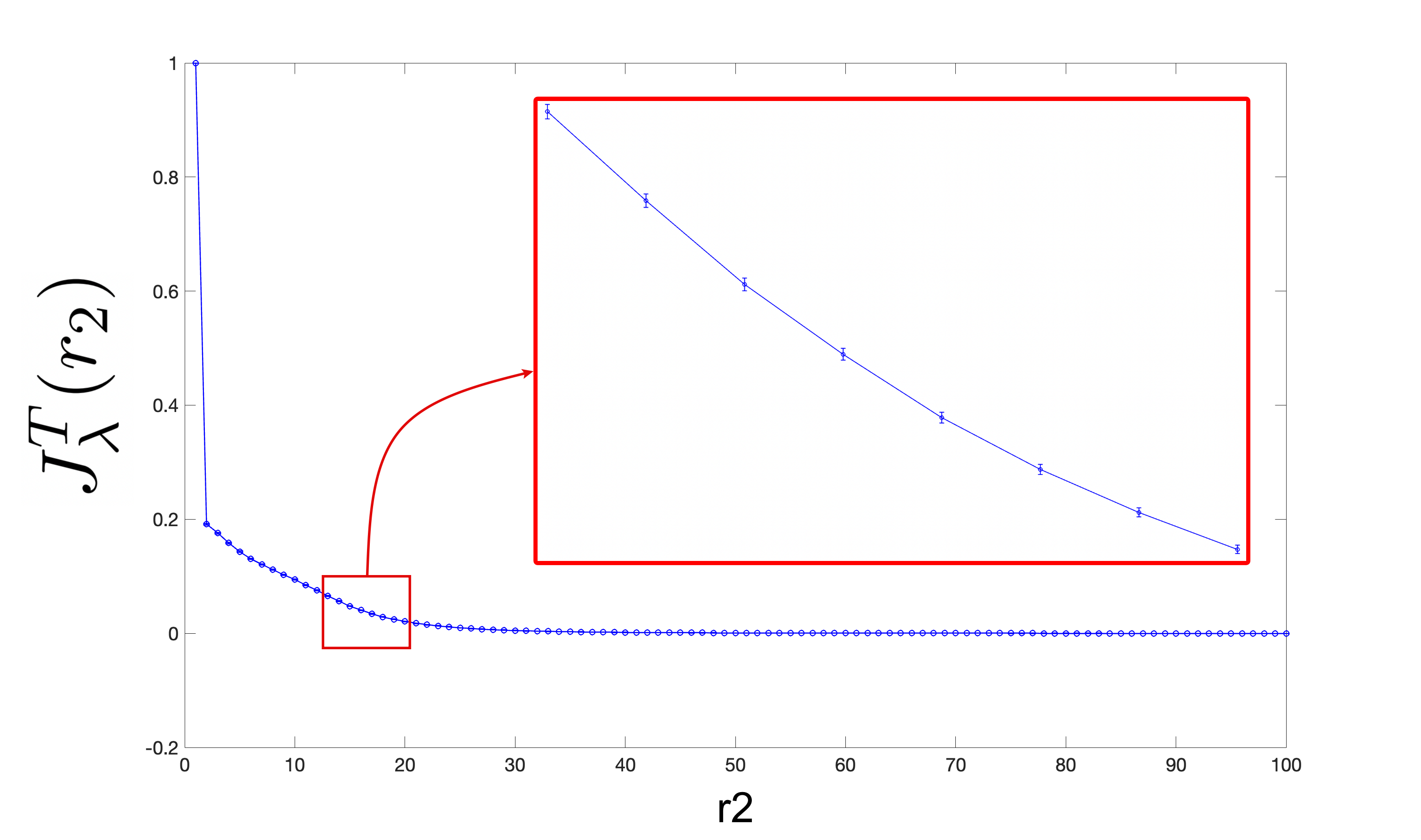}
         \caption{}
     \end{subfigure}
                  \centering
     \begin{subfigure}[b]{0.4\textwidth}
         \centering
         \includegraphics[width=\textwidth]{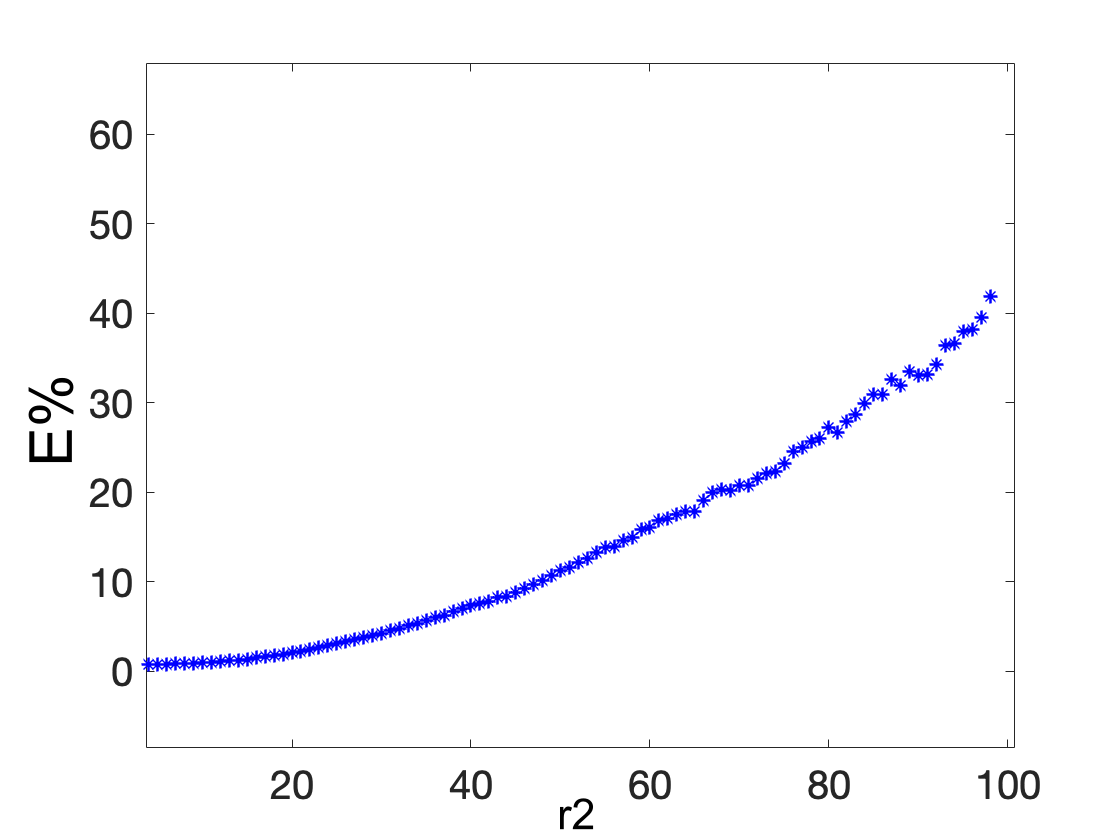}
         \caption{}
     \end{subfigure}
           \caption{Display of error bars and percentage error for $J_\varlambda^T(r_2)$, with kernel parameters $\varlambda = 0.035$ and $T= 100$, $N = 10^6$.  The initial point $\xi_0$ is described by $(r_{1_0},r_{2_0},r_{3_0})=(50, 1, 50)$  and $(\theta_0, \varphi_0)=(\pi/2, \pi/2)$. The section has indices $(r_{1_0},r_{3_0})=(50, 50)$.
           (a) Display of error bars on $J_\varlambda^T(r_2)$. (b) Display of the percentage error $E_{\%}=100\frac{r}{J_\varlambda^T(r_2) }$, with $ r = \frac{2.57 \bar\sigma }{\sqrt{N}}$.   }
\label{fig:error_perc}
\end{figure}

 This is well in accordance with results of Figure \ref{fig:error_bars} and Remark \ref{rmk:gauss_est}. On the other hand, image (b) of Figure \ref{fig:error_perc} shows the behavior of the percent error, defined here as $E_{\%}:=100\frac{r}{J_\varlambda^T(r_2)}$, with $ r = \frac{2.57 \bar\sigma }{\sqrt{N}}$.  In this case, the tendency is the opposite compared to image (a): we have a good percentage estimate near $\xi_0$, while this goodness decreases as the distance from the $\xi_0$ increases. This happens because near the pole the points are reached by a high number of stochastic paths, and this number decreases as we move away from $\xi_0$.

\subsubsection{Dependence on parameters}
The kernel defined by Monte-Carlo approximation mainly depends on $ 4 $ parameters: $ M $ the number of steps of the simulated path, $ N $ the number of paths considered, $ \varlambda $ the diffusion coefficient, and $ T $ the final evolution time of the generated path. We will always consider the first two parameters quite large to ensure the convergence of the method; the other two parameters affect the behavior of the kernel, as can be seen in Figure \ref{scala}.
\begin{figure}[H]
\begin{center}
\includegraphics[width =0.9\textwidth ]{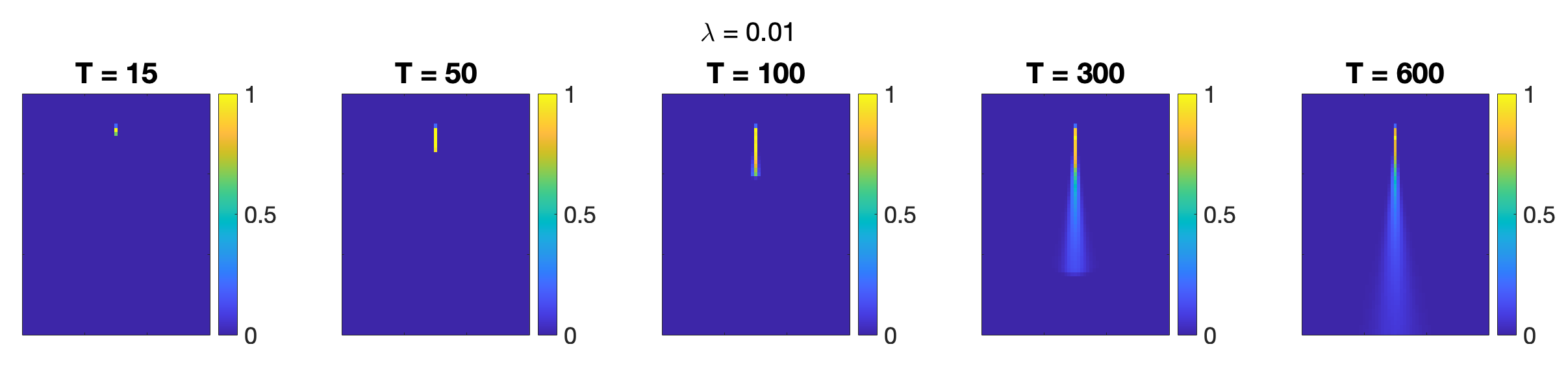}
\includegraphics[width =0.9\textwidth ]{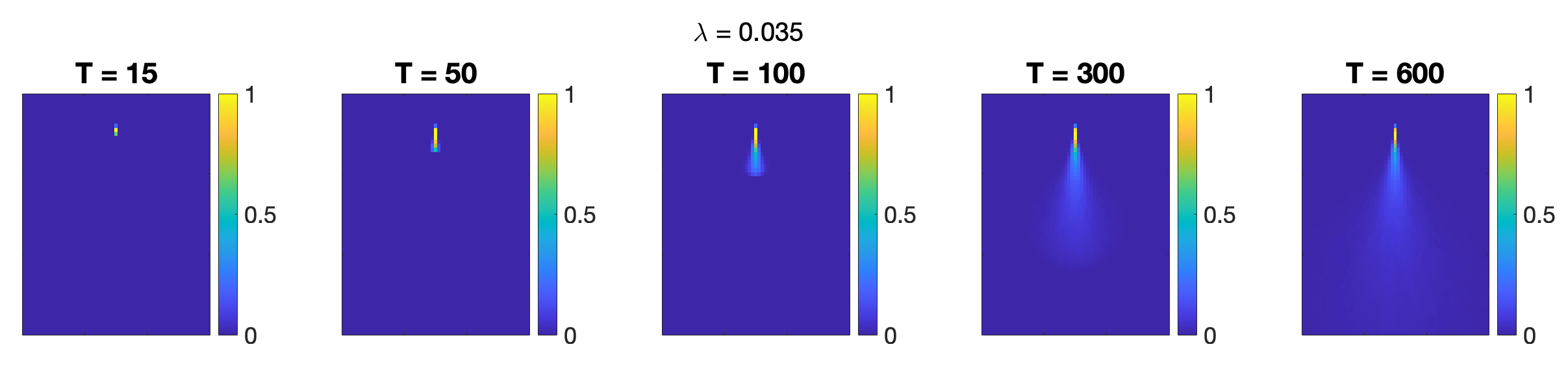}
\end{center}
\caption{Marginal projections on the plane $(r_1, r_2)$ to display dependence on the parameters $\varlambda$ and $T$. The abscissa corresponds to $r_1$- axis, while the ordinate to $r_2$-axis. Columns describe the scale parameter $T \in \{ 15, 50, 100, 300, 600\}$, while rows correspond to different values of diffusion $\varlambda \in \{ 0.01, 0.035\}$. }
\label{scala}
\end{figure}

This diffusion parameter operates a modification on the thickness of the kernel: the bigger $\varlambda$ is, the thicker the kernel, making the diffusion terms prevail. On the other hand, the smaller $\varlambda$ is, the thinner the kernel, mainly concentrated on the initial direction $n(\varphi_0, \theta_0)$, making the transport term the leading term characterizing the equation.

On the other hand, the temporal parameter  $ T $ can be seen as a scale parameter.
The effect of the variation of the parameter $ T $ on the shape of the kernel is shown in Figure \ref{scala}: the columns show a proportional relationship between the increase in the final time of the stochastic path, and the amplitude of the kernel. 
During numerical experiments for visual grouping, this parameter will be taken in accordance with the image dimension. 

\subsubsection{Strenght of correlations}

We define kernels through transition probabilities (equation \eqref{eq:kernel_sim}) to express correlation rates between neighboring points. Figure \ref{compF} illustrates the correlation of position-orientation elements with a fixed starting point $p \in \mathbb{R}^3$ and orientation $(\theta, \varphi) = (\pi/2, \pi/2)$ in 3D Cartesian space. 
\begin{figure}[tbh]
     \centering
         \includegraphics[width=0.5\textwidth]{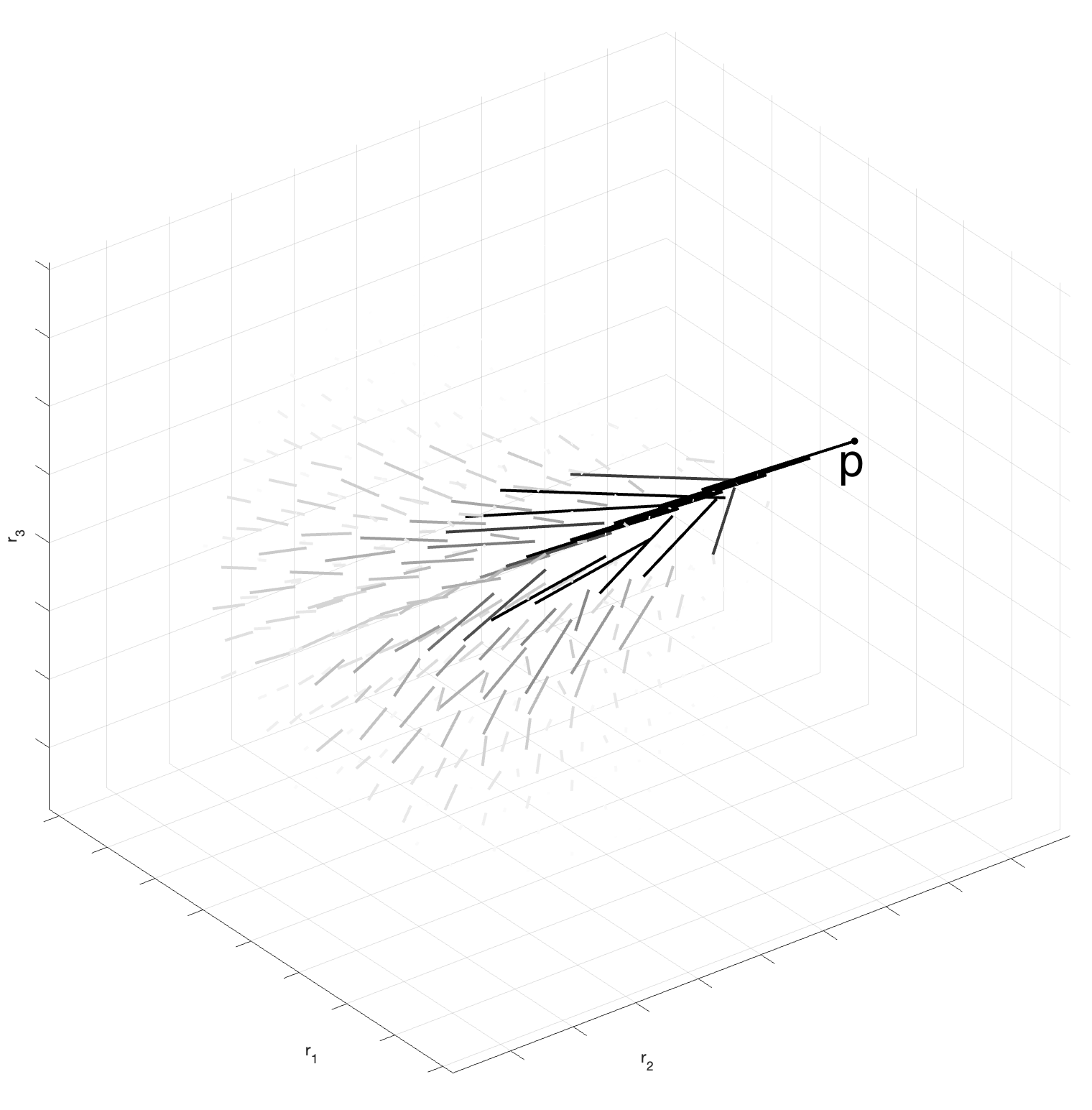}
           \caption{Position-orientation elements correlated with initial point $p \in \R^3$ and initial direction $(\theta, \varphi) = (\pi/2, \pi/2)$ in the first coordinate chart. Intensity decay depends on position and orientation: the darker and longer the segment is, the more it correlates with the starting point.}
\label{compF}
\end{figure}
Positions and orientations in 3D space influence compatibility. Figure \ref{compF} shows position-orientation elements selected by the kernel, each associated with a preferred orientation and intensity computed from equation \eqref{eq:kernel_sim}\footnote{For each position in 3D, we show the most likely orientation. This likelihood it is then used as intensity.}. The intensity is represented by the color and length of the 3D segment, with darker and longer segments indicating stronger correlations. These measurements encode bending and twisting information, as position-orientation elements are not restricted to frontal planes. We note the intensity to decrease with distance from the initial position, demonstrating clear position dependence; the same pattern also holds for orientations compatible with the initial direction.

\subsection{Grouping results}

In this section, we employ spectral analysis to identify 3D perceptual units and address the correspondence problem. We start by detailing the steps of the algorithm adapting the spectral clustering proposed in \cite{BCCS14, SC15} to our case of study. 

\subsubsection{The algorithm}
\label{sec:algorithm}
Building on the spectral analysis techniques described in Section \ref{sec:percepts-and-stereo}, we propose the algorithm described in the Algorithm box ``Spectral Clustering for Stereo Correspondence''. 
\begin{algorithm*}
\label{alg}
\caption*{\textbf{Spectral Clustering for Stereo Correspondence (SCSC)}}
\begin{algorithmic}[1]
\State  Recover the domain $\Omega \subset \PO$, ~ $\xi_i \in \Omega,  ~ i = 1, \ldots k$, from the coupling of retinal (rectified) images by inverting perspective projections.
\State Construct the affinity matrix $\mathbf{J}$ upon the appropriate connectivity measure $J$ of \eqref{eq:kernel_sim}:  $\textbf{J}_{ij}:=  J(\xi_i, \xi_j)$.
\State Compute the normalized affinity matrix $\textbf{P} =  \textbf{D}^{-1}\textbf{J}$. 
\State Solve the eigenvalue problem $\textbf{P}u_i = \lambda_i u_i$, where the indices $i = 1, \ldots, k$ are ordered such that $\{\lambda_i\}_{i = 1}^k$ are in decreasing order. 
\State Set a threshold $\varepsilon$ and a diffusion parameter $\tau$. 
\State Define $\bar{k} = \bar{k}(P; \varepsilon, \tau)$ as the largest integer such that $\lambda_i^\tau > 1 - \varepsilon$.
\State Create pre-clusters $ \{ \bar{C}_\ell \}_{\ell= 1}^{\bar{k}}$ where each point $\xi_i$ is assigned to the pre-cluster  $\bar{C}_\ell$ with:
\[ \ell = \operatorname{argmax}_{j = 1, \ldots, \bar{k}}u_j(i).\]
\State Fix a minimum cluster size $Q$, merge pre-clusters with fewer than $Q$ elements into a single cluster $C_0$, and order the remaining pre-clusters to obtain a final partition $\{C_\ell\}_{\ell = 1}^{K}$ where $K = K(P; \varepsilon, \tau, Q)$ and $K \leq \bar{k}$.
\end{algorithmic}
\end{algorithm*}

A distinctive feature of this spectral clustering approach is its adaptability in identifying meaningful clusters. Rather than pre-defining the number of initial clusters, the algorithm dynamically adjusts the number of pre-clusters based on the intrinsic properties of the data (\texttt{SCSC steps 5-7}). This flexibility allows the algorithm to more accurately capture the underlying structure of the data.

In the final clustering step (\texttt{SCSC step 8}), pre-clusters with fewer than a specified minimum number of elements, $Q$, are merged into a single cluster, $C_0$. This cluster $C_0$ is specifically designed to aggregate false matches resulting from the 3D reconstruction process. These false matches are points that do not conform to the perceptual law of good continuation, which typically ensures that coherent structures follow smooth, continuous paths. Essentially, $C_0$ represents elements that are not perceptually significant (e.g., see \cite{BCCS14}) and arises from incorrect pairings, thus should be excluded from the final coherent segmentation of the visual scene. The remaining clusters are tasked with effectively segmenting the visual scene into meaningful perceptual units. 

\subsubsection{Synthetic stimuli}

We first evaluate the algorithm using synthetic rectified stereo images (in this context, we assume that the positions and orientations in the retinal planes are available). The process involves matching all potential corresponding points from the left and right retinal images that share the same abscissa coordinate. We then determine the 3D positions and orientations by inverting the perspective projections and computing tangents, as described in Section \ref{sec:reconstruction}. The resulting lifted set $\Omega \subset \PO$ will consist of both elements present in the visual scene and some false matches, generally referred to as ``noise''. Spectral analysis is then applied to this dataset to assess the performance of the algorithm.

\paragraph{Parameterized curve.}
We begin by analyzing two retinal images, projection of a parameterized curve $\gamma:[0, T]\rightarrow \R^3$, which is represented by 30 discrete points. From this couple of bidimensional positions and orientations,  we perform the lifting of the stimulus into the 3D space $\PO$ (\texttt{SCSC step 1}). This process is depicted in Figure \ref{fig:stim-simple-curve}, where image (a) displays the left and right retinal images, and image (b) illustrates the subsequent 3D back-projection.

\begin{figure}[H]
     \centering
  \begin{subfigure}[b]{0.35\textwidth}
         \centering
         \includegraphics[width=\textwidth]{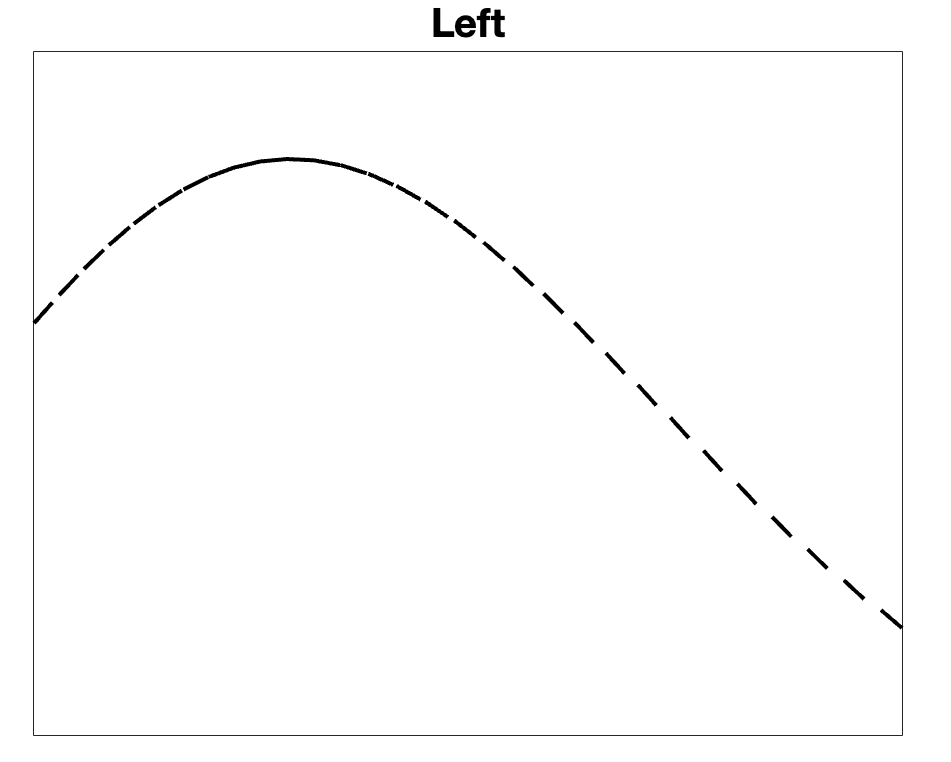}
         
         \includegraphics[width=\textwidth]{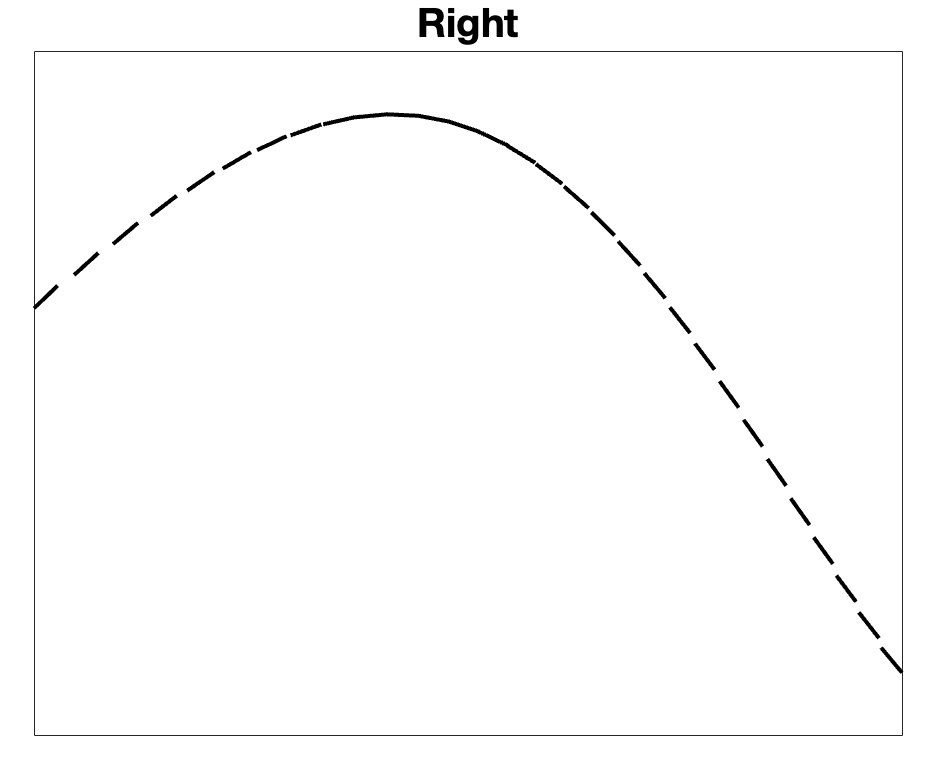}
         \caption{}
     \end{subfigure}
                  \centering
     \begin{subfigure}[b]{0.45\textwidth}
         \centering
         \includegraphics[width=\textwidth]{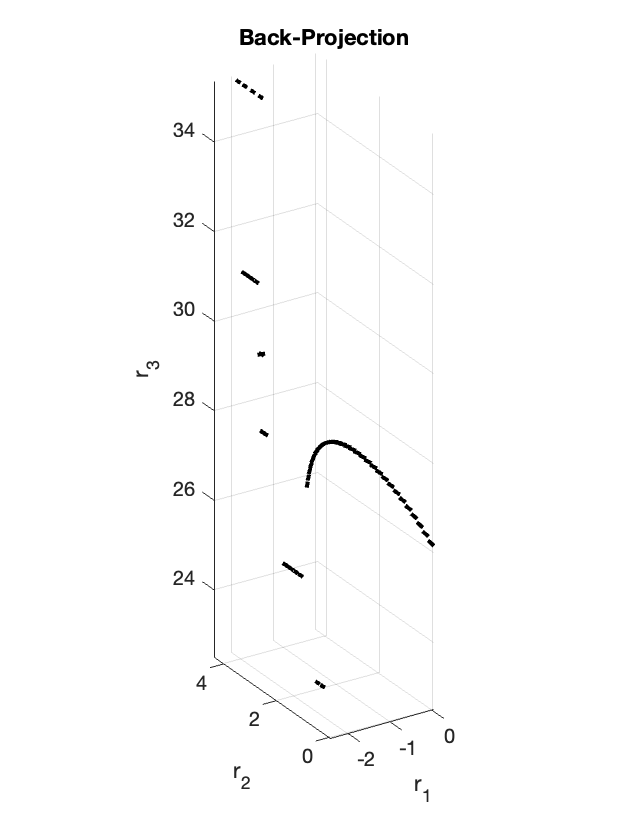}
         \caption{}
     \end{subfigure}
           \caption{Display of simple curve on retinal planes and its back-projection in 3D. (a) Left and right retinal images. (b) Lifting of the stimulus in $\PO$: the coupling generates the original 3D curve but also noise due to the false matching.}
\label{fig:stim-simple-curve}
\end{figure}

To address both the stereo correspondence problem and the segmentation of the three-dimensional scene, we apply our grouping algorithm. The results are illustrated in Figure \ref{1curvaGroup}. Image (a) shows the affinity matrix $\mathbf{P}$, calculated with kernel parameters for $\textbf{J}$ set to be $T= 95$ and $\varlambda = 0.0275 $ (\texttt{SCSC steps 2-3}). Image (b) presents the eigenvalues. The image on the left shows the original eigenvalues $\{\lambda_i\}_{i = 1}^k$ for $\mathbf{P}$, while the image on the right  displays the exponentiated eigenvalues $\{\lambda_i^\tau\}_{i = 1}^k$ for $\tau = 100$. The latter plot demonstrates a more distinct separation between significant and less significant eigenvalues, facilitating a more reliable determination of the number $\bar{k}$ of eigenvectors to use for clustering. In particular, colored in red are the eigenvalues that are greater than $1-\varepsilon$ with $\varepsilon = 0.01$. (\texttt{SCSC steps 4-6}). Pre-clusters (\texttt{SCSC step 7}) are illustrated in image (c), each color representing a different $\bar{C}_\ell$. The points are organized into three main pre-clusters: a blue cluster corresponding to the points on the 3D curve, an orange cluster with points sharing the same orientation but differing in spatial position, and a yellow cluster containing an outlier with a different orientation. Finally, image (d) shows the final clusters after applying a threshold of $Q = 25$ (\texttt{SCSC step 8}). Points in blue are those included in the single cluster that contains more than $Q= 25$ elements, while noise elements are shown in black and are assigned to cluster $C_0$ which is excluded from the segmentation and reconstruction processes. This  thresholding reveals that only one main cluster includes more than  $Q = 25$ points, which effectively recovers the true 3D curve. This outcome shows the algorithm's ability to accurately resolve stereo correspondence and segment the 3D scene.

\begin{figure}[H]
     \centering
     \begin{subfigure}[b]{0.3\textwidth}
         \centering
         \includegraphics[width=\textwidth]{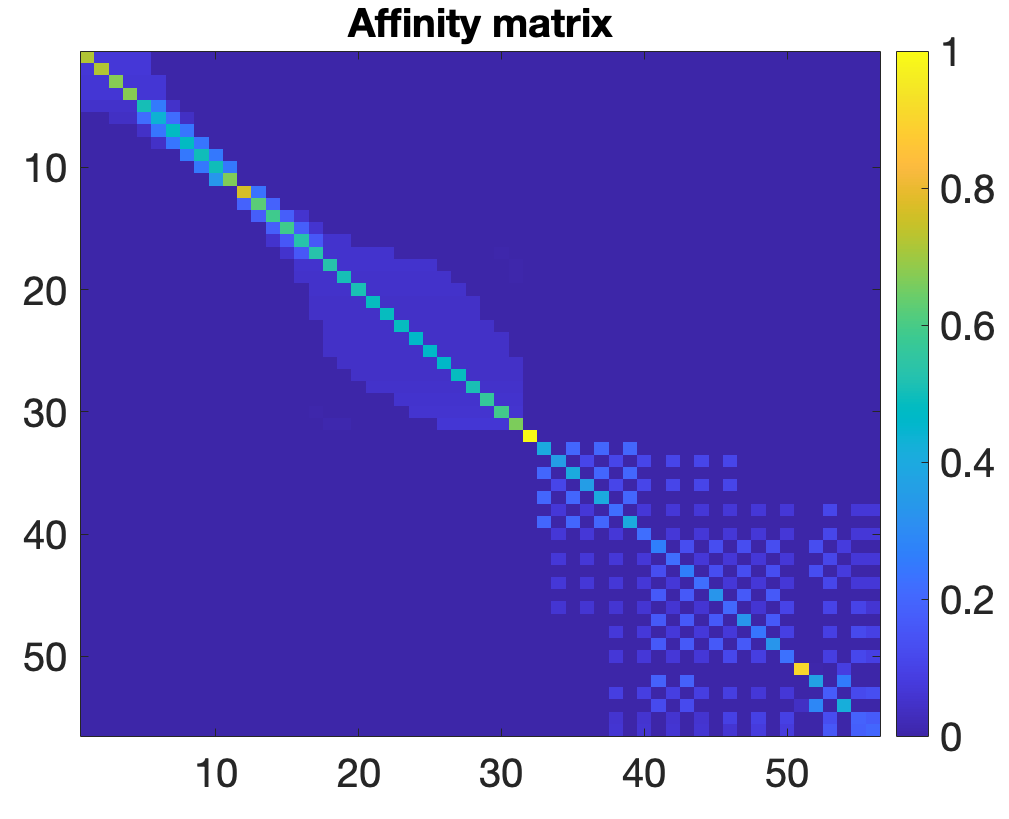}
         \caption{}
     \end{subfigure}
             \centering
     \begin{subfigure}[b]{0.65\textwidth}
         \centering
         \includegraphics[width=0.45\textwidth]{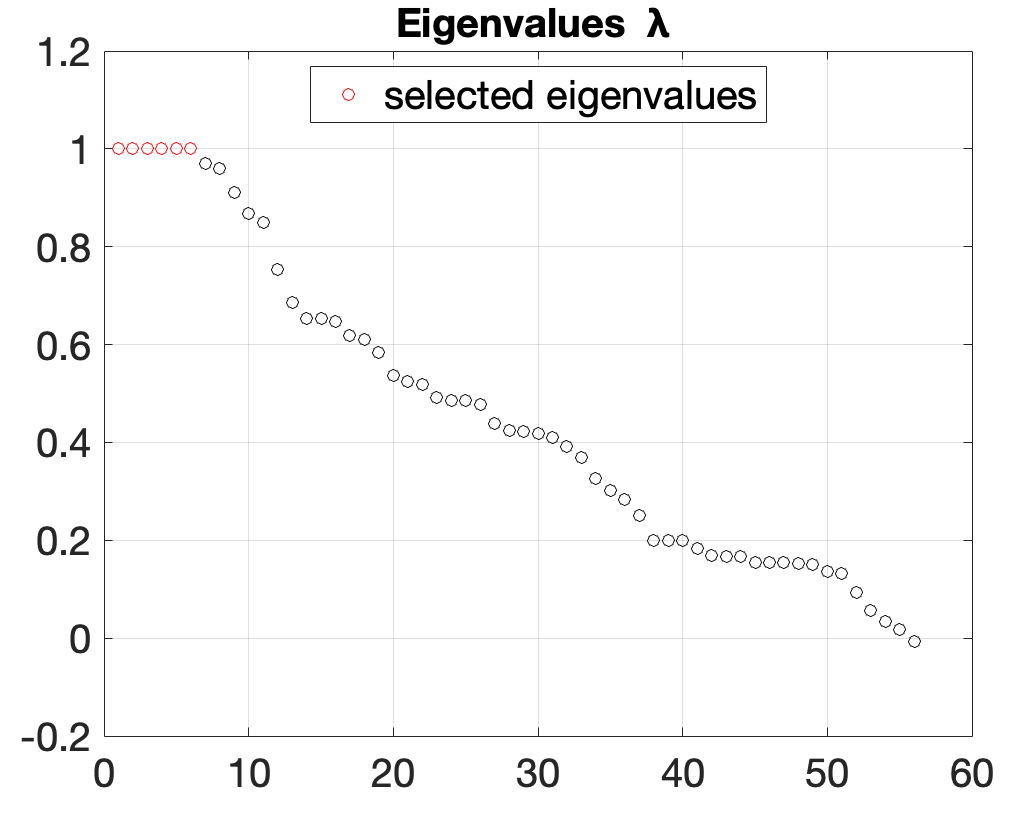}
         \includegraphics[width=0.45\textwidth]{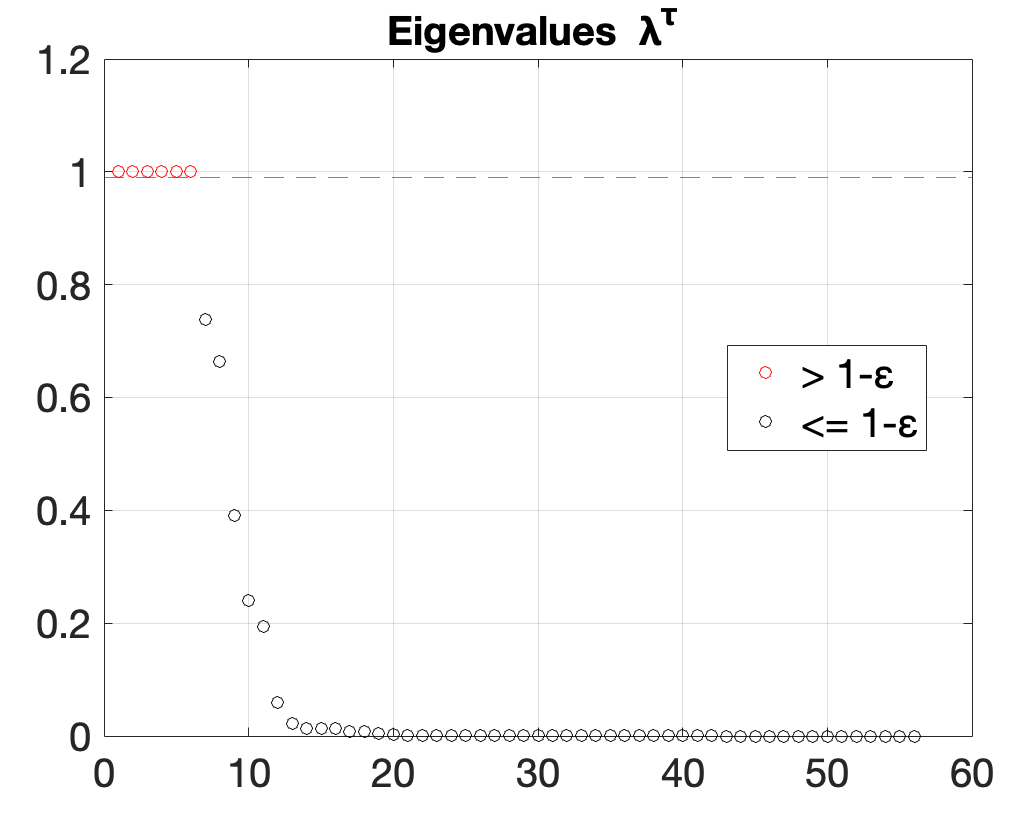}
         \caption{}
     \end{subfigure}
     
     \begin{subfigure}[b]{0.45\textwidth}
         \centering
         \includegraphics[width=\textwidth]{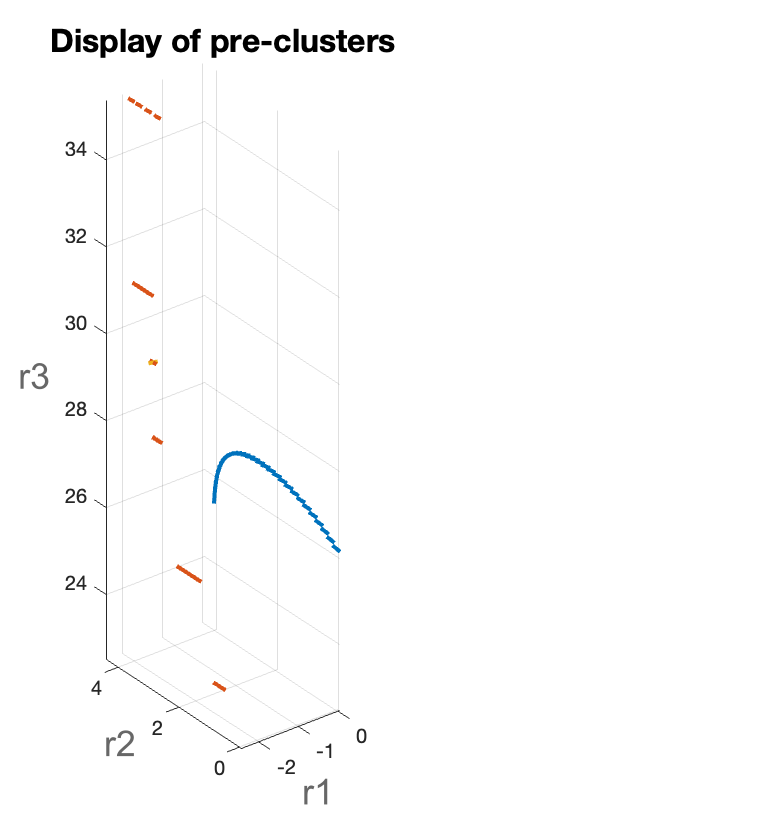}
         \caption{}
     \end{subfigure}
     \begin{subfigure}[b]{0.45\textwidth}
         \centering
         \includegraphics[width=\textwidth]{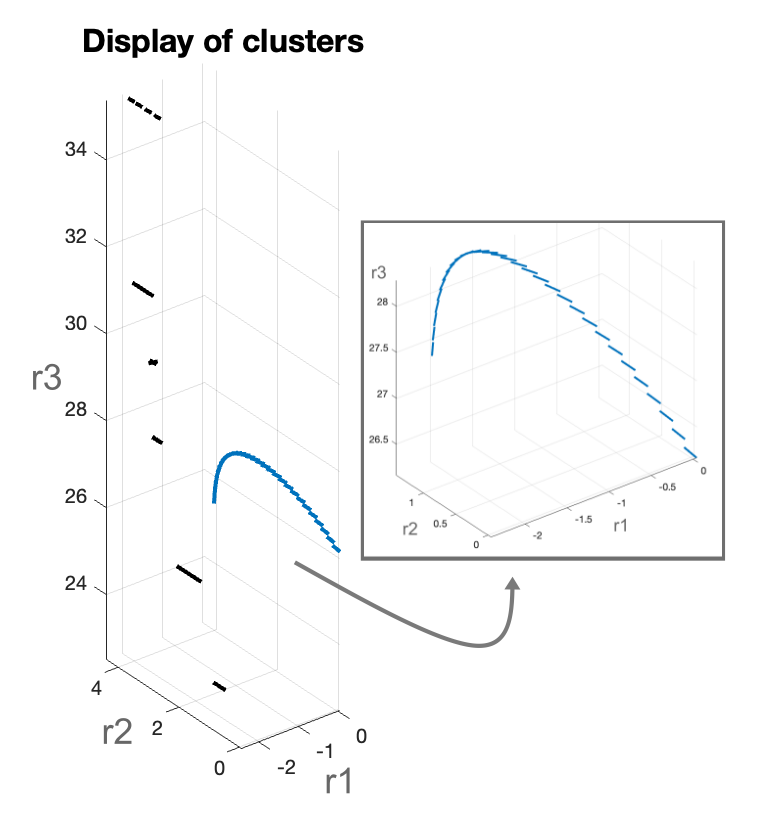}
         \caption{}
     \end{subfigure}
      \caption{ Results of the \texttt{SCSC} algorithm on the three-dimensional curve shown in Figure \ref{fig:stim-simple-curve}. (a) Display of the affinity matrix $\mathbf{P}$, where parameters for $\mathbf{J}$ are set to $T = 95$, $\varlambda = 0.0275$, $M = 400$, and $N = 10^5$. (b) Display of the spectrum of $\mathbf{P}$. The left image shows the spectrum $\{\lambda_i\}_{i = 1}^k$, while the right image illustrates $\{\lambda_i^\tau\}_{i = 1}^k$ with $\tau = 100$. The points selected in red on the right image correspond to the eigenvalues $\lambda_i^\tau$ that are greater than $1 - \varepsilon$, with $\varepsilon = 0.01$. The corresponding unexponentiated eigenvalues are shown in red on the left. Eigenvectors associated with these eigenvalues are used for the pre-clusters. (c) Display of pre-clusters, with each color representing a different cluster. (d) Display of the final clusters, with pre-clusters thresholded to include only those with more than $Q = 25$ elements. Blue points correspond to the only main cluster capturing the curve, while black points represent noise elements assigned to $C_0$.  }

        \label{1curvaGroup}
\end{figure}

\paragraph{Helix and arc.}

Next, we test the algorithm on a synthetic stimulus similar to the one described in \cite{AZ00}, and the results are illustrated in Figure \ref{fig:helixArc}.

Specifically, the stimulus, shown in image (a), consists of two synthetic images illustrating the 2D perspective projections of an arc (30 points) and an $r_2$-helix (60 points), which is a helix with its spiral extending along the $r_2$-axis. Image (b) shows the lifting in $\PO$, displaying all possible corresponding points.
Image (c) presents the affinity matrix $\mathbf{P}$, calculated using the chosen kernel parameters. Image (d) displays the spectrum of $\mathbf{P}$, with eigenvalues greater than $1 - \varepsilon$ (where $\varepsilon = 0.01$) highlighted in red.

\begin{figure}[H]
     \centering
     \centering
  \begin{subfigure}[b]{0.25\textwidth}
         \centering
         \includegraphics[width=\textwidth]{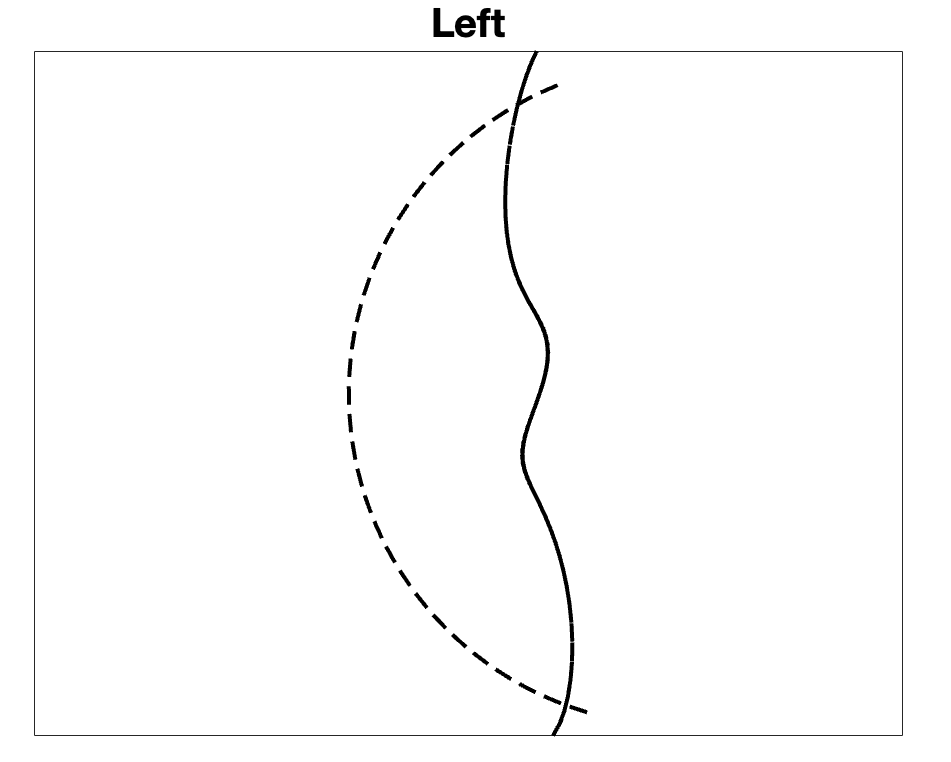}
         
         \includegraphics[width=\textwidth]{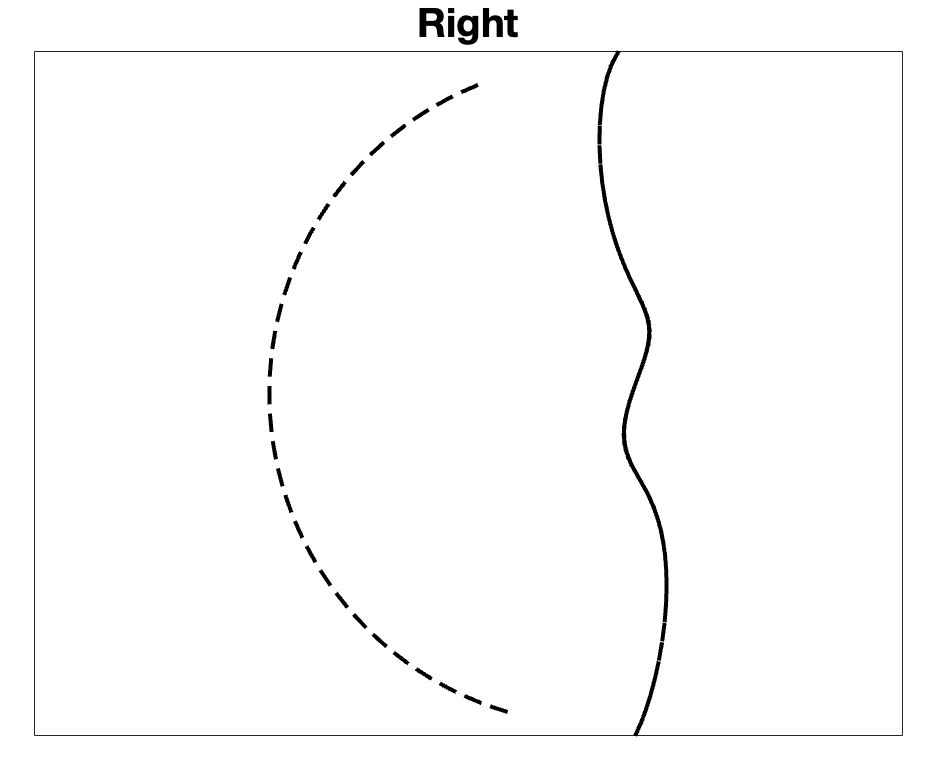}
         \caption{}
     \end{subfigure}
     \begin{subfigure}[b]{0.4\textwidth}
         \centering
         \includegraphics[width=\textwidth]{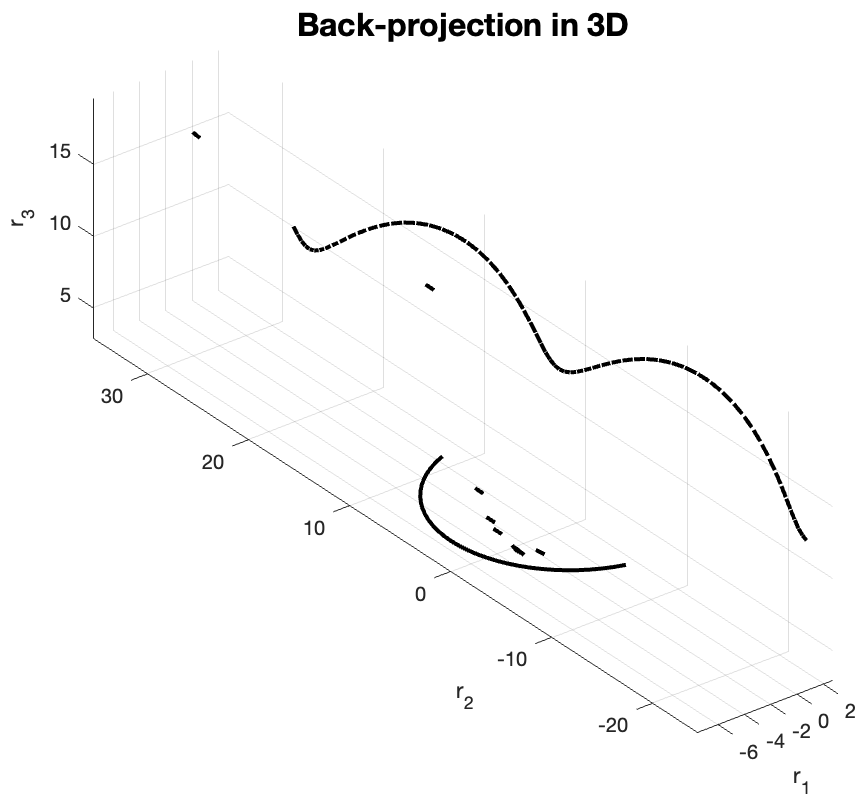}
         \caption{}
     \end{subfigure}
     
      \begin{subfigure}[b]{0.3\textwidth}
         \centering
         \includegraphics[width=\textwidth]{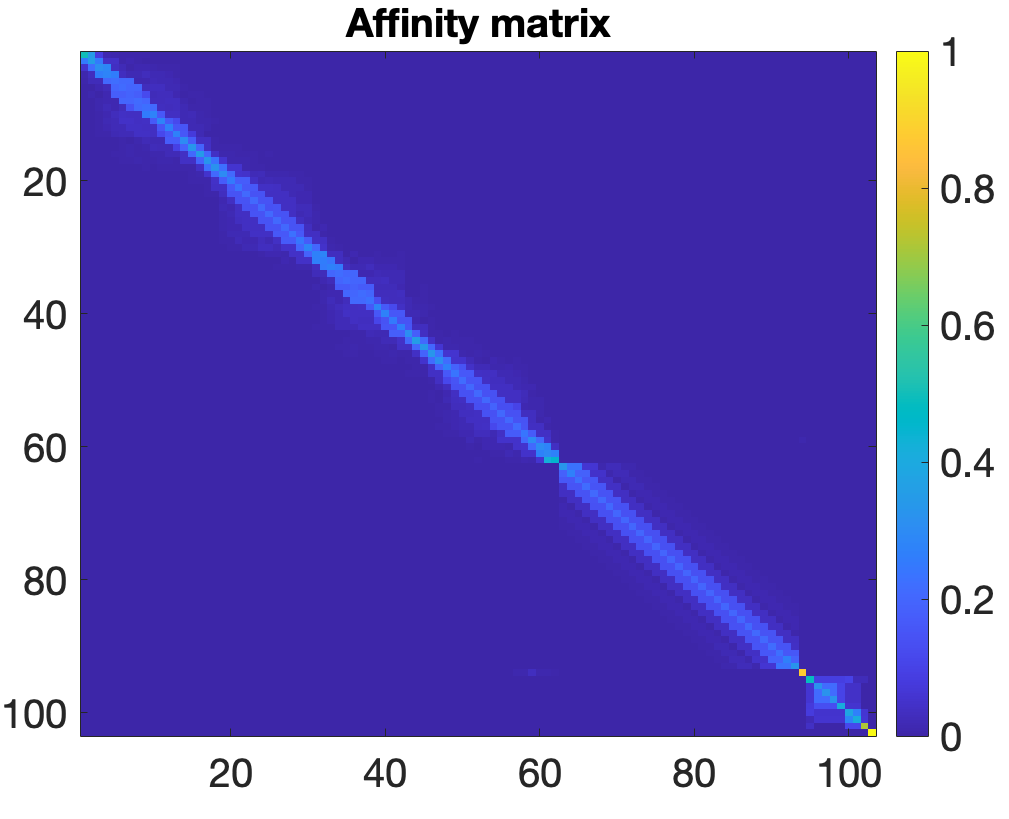}
         \caption{}
     \end{subfigure}
     \hspace{1cm}
     \begin{subfigure}[b]{0.3\textwidth}
         \centering
         \includegraphics[width=\textwidth]{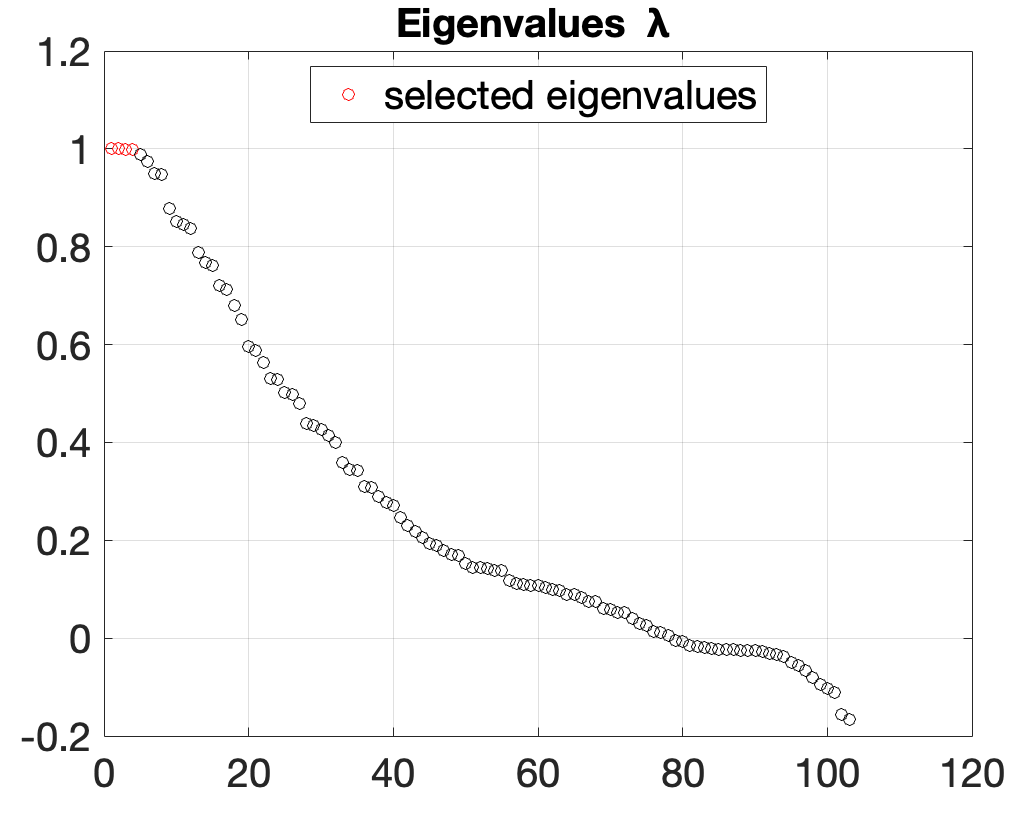}
         \caption{}
     \end{subfigure}

     \begin{subfigure}[b]{0.4\textwidth}
         \centering
         \includegraphics[width=\textwidth]{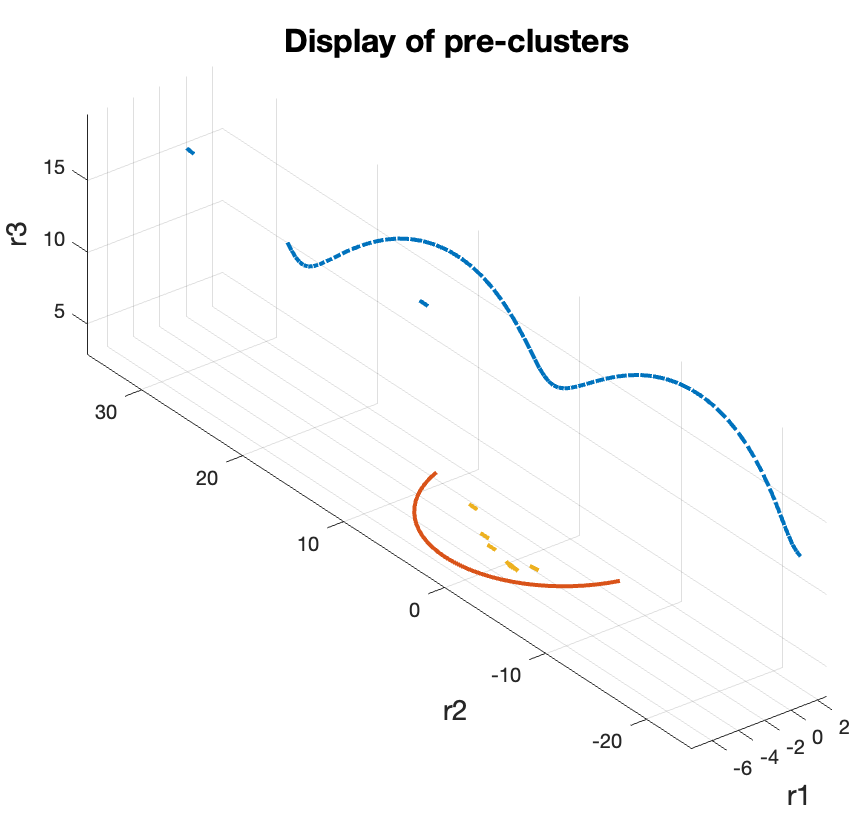}
         \caption{}
     \end{subfigure}
      \begin{subfigure}[b]{0.4\textwidth}
         \centering
         \includegraphics[width=\textwidth]{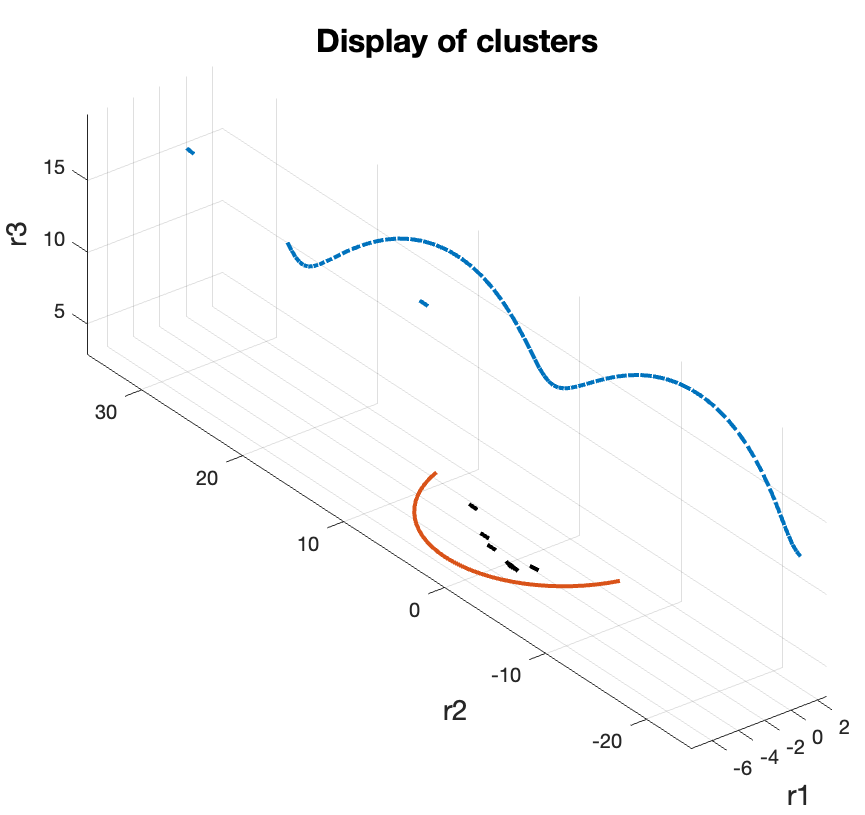}
         \caption{}
     \end{subfigure}
      \caption{Results of the \texttt{SCSC} algorithm on the three-dimensional arc and helix. (a) Left and right retinal images. (b) Lifting of the stimulus in $\PO$: the coupling generates the original 3D curves and noise. (c) Display of the affinity matrix $\mathbf{P}$, where parameters for $\mathbf{J}$ are set to $T = 100$, $\varlambda = 0.13$, $M = 400$, and $N = 10^5$. (d) Display of the spectrum of $\mathbf{P}$, colored in red are the selected eigenvalues  $\lambda_i> 1 - \varepsilon$, with $\varepsilon = 0.01$. The associated eigenvectors are used for the pre-clusters. (e) Display of pre-clusters, with each color representing a different cluster. (f) Display of final clusters, with pre-clusters thresholded to include only those with more than $Q = 20$ elements.   Blue points correspond to the cluster capturing the helix, while orange points belong to the cluster identifying the arc. Black points represent noise elements assigned to $C_0$.}
\label{fig:helixArc}
\end{figure}
Image (e) illustrates the pre-clusters, each represented by a different color. The points are organized into three main pre-clusters: a blue cluster corresponding to the points on the 3D helix, an orange cluster for the points on the 3D arc, and a yellow cluster containing points arising from false matches on the retinal planes. Finally, image (f) shows the clusters with a number of elements greater than $Q = 20$. We observe that the algorithm successfully segments the stimulus into two main clusters: one for the helix (blue points) and another for the arc (orange points). Noise points, displayed in black, belong to the $C_0$ cluster. Despite minor inaccuracies, such as two misclassified points from the helix (ladder points), the eigenvectors effectively capture the two primary perceptual units in the visual scene. Thus, the algorithm not only reconstructs the stereo images but also segments the 3D scene into distinct perceptual objects.

\subsubsection{Natural images}

Finally, we evaluate the algorithm on a pair of black-and-white natural images depicting twigs in a visual scene. These images are captured with a baseline of $6$ cm to simulate the distance between the left and right eyes. These images are shown on image (a) of Figure \ref{fig:gabor-filtering-twigs}. After applying a Gabor filtering process, which mimics the behavior of simple monocular cells that are selective for orientation in the primary visual cortex, we extract bidimensional positions and orientations. These points are displayed in Figure \ref{fig:gabor-filtering-twigs}, image (b). The points are then lifted into the space of 3D positions and orientations $\PO$. 

\begin{figure}[tbh]
     \centering
  \begin{subfigure}[b]{0.6\textwidth}
         \centering
         \includegraphics[width=\textwidth]{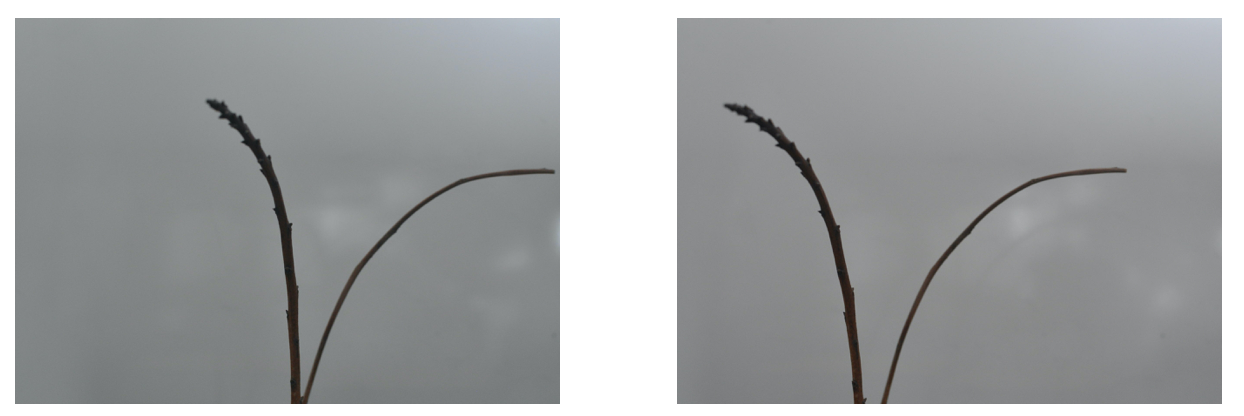}
         \caption{}
     \end{subfigure}
                
     \begin{subfigure}[b]{0.65\textwidth}
         \centering
         \includegraphics[width=0.45\textwidth]{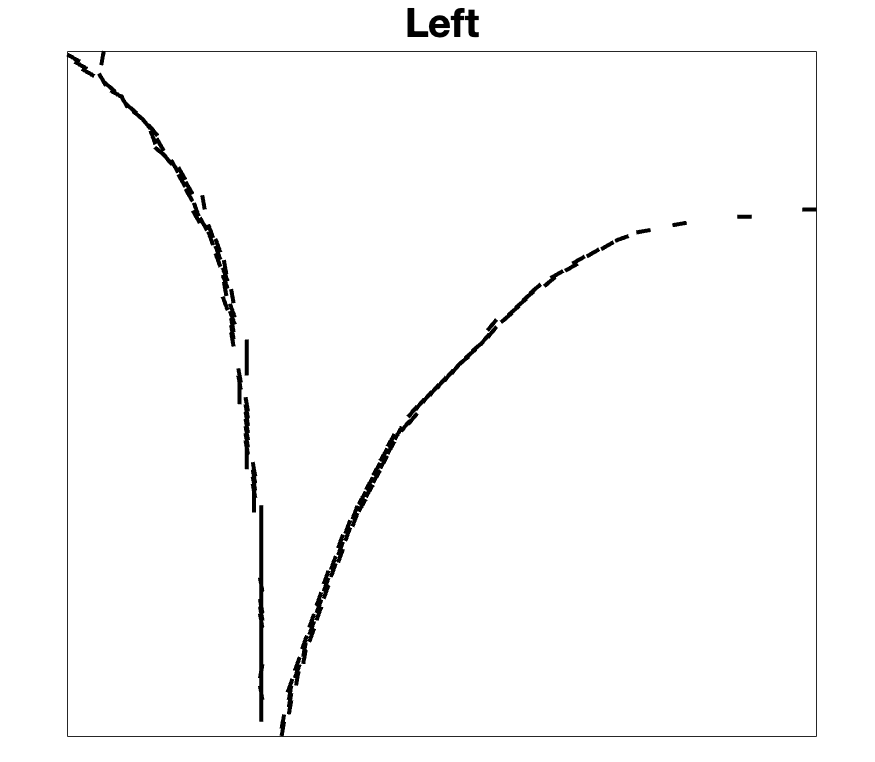}
         \includegraphics[width=0.45\textwidth]{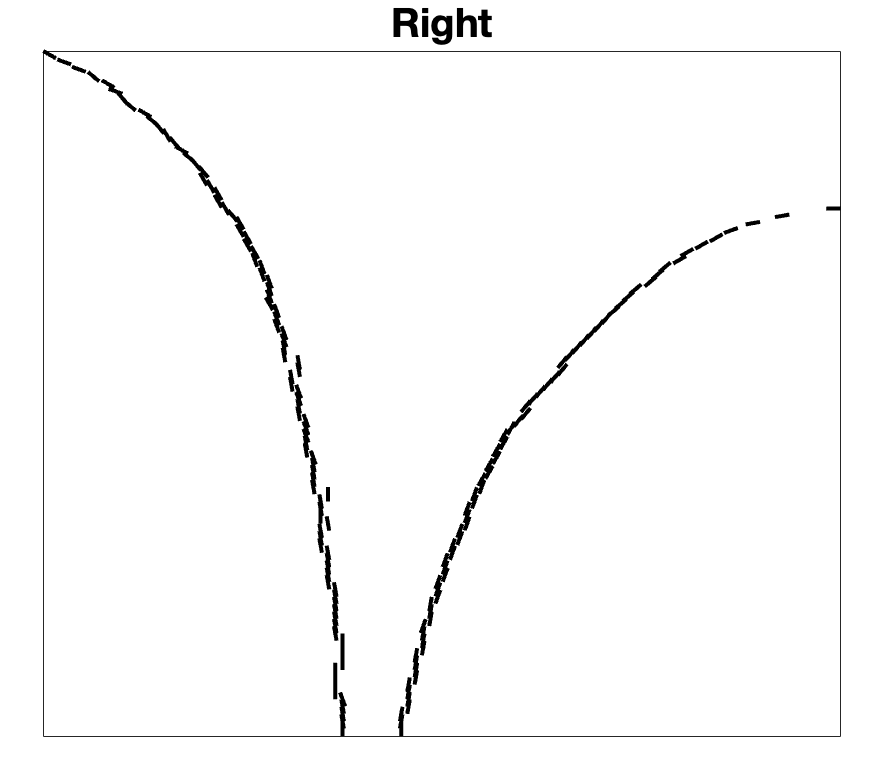}
         \caption{}
     \end{subfigure}
           \caption{Gabor filtering of twigs images. (a) Left and right natural images depicting twigs in a visual scene. (b) Result of Gabor filtering applied to these images, which lifts the points into the space $\mathbb{R}^2 \times \mathbb{S}^1$ of bidimensional positions and orientations. }
\label{fig:gabor-filtering-twigs}
\end{figure}

Results of the algorithm are illustrated in Figure \ref{fig:twigs-results}. Image (a) displays the lifting of the pair of retinal images into the space $\PO$, generating 3D representations of the twigs along with noise from false matches. Image (b) shows the affinity matrix $\mathbf{P}$, computed with kernel parameters for 
$\mathbf{J}$ :  $\varlambda = 0.0275$, $T = 40$, $M = 400$, and $N = 10^5$.
Image (c) presents the spectrum of $\mathbf{P}$, with eigenvalues greater than $1 - \varepsilon$ (where $\varepsilon = 0.01$) highlighted in red. Image (d) illustrates the pre-clusters, each depicted in a different color. Due to the high number of false matches, there is a large number of pre-clusters. However, two prominent groupings are evident, corresponding to the main twigs.
Indeed, image (e) shows the final clusters after applying a threshold of $Q = 50$. The algorithm successfully segments the stimulus into two main clusters, each corresponding to one of the twigs. Noise points, displayed in black, belong to the $C_0$ cluster.

\begin{figure}[H]
     \centering
     \begin{subfigure}[b]{0.3\textwidth}
         \centering
         \includegraphics[width=\textwidth]{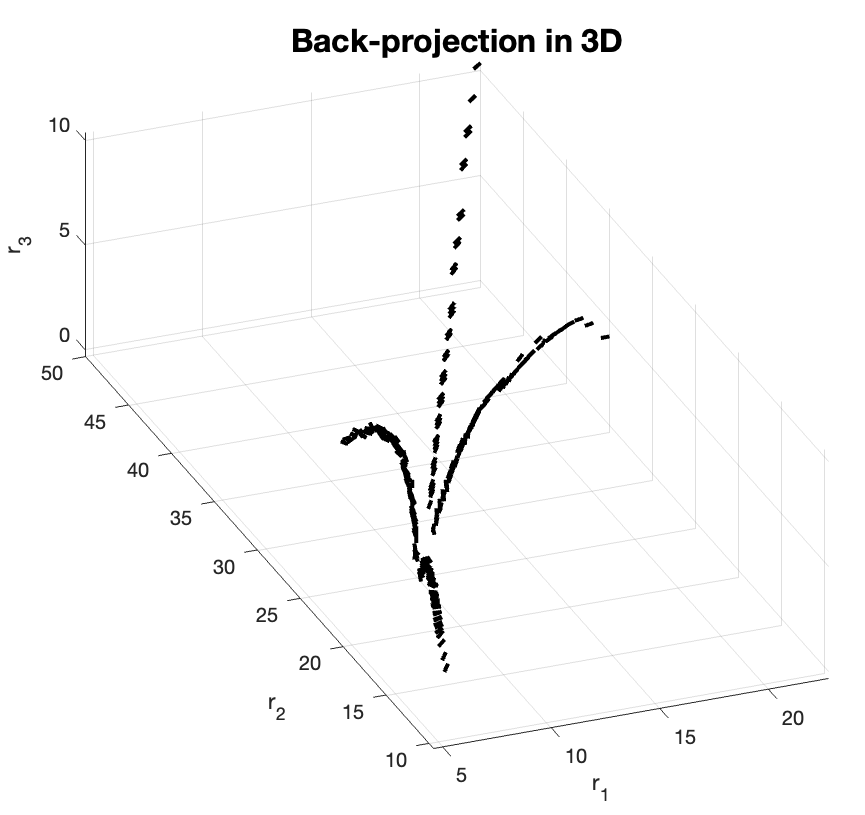}
         \caption{}
     \end{subfigure}
     \begin{subfigure}[b]{0.3\textwidth}
         \centering
         \includegraphics[width=\textwidth]{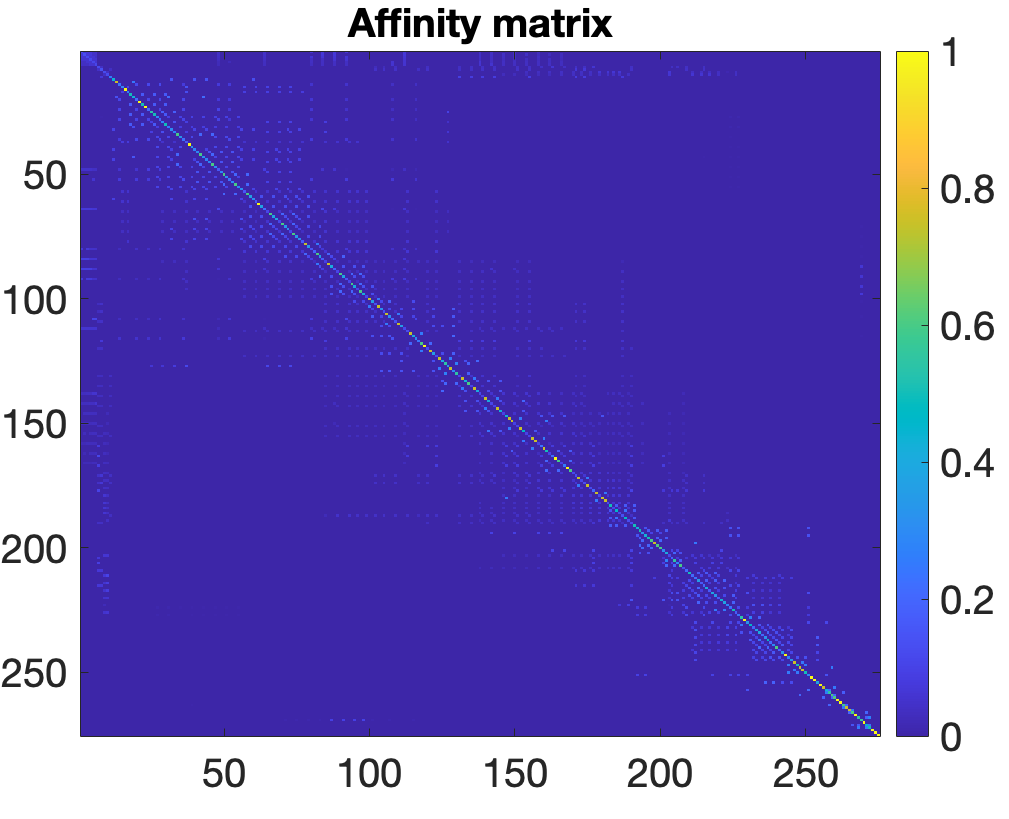}
         \caption{}
     \end{subfigure}
     \begin{subfigure}[b]{0.3\textwidth}
         \centering
         \includegraphics[width=\textwidth]{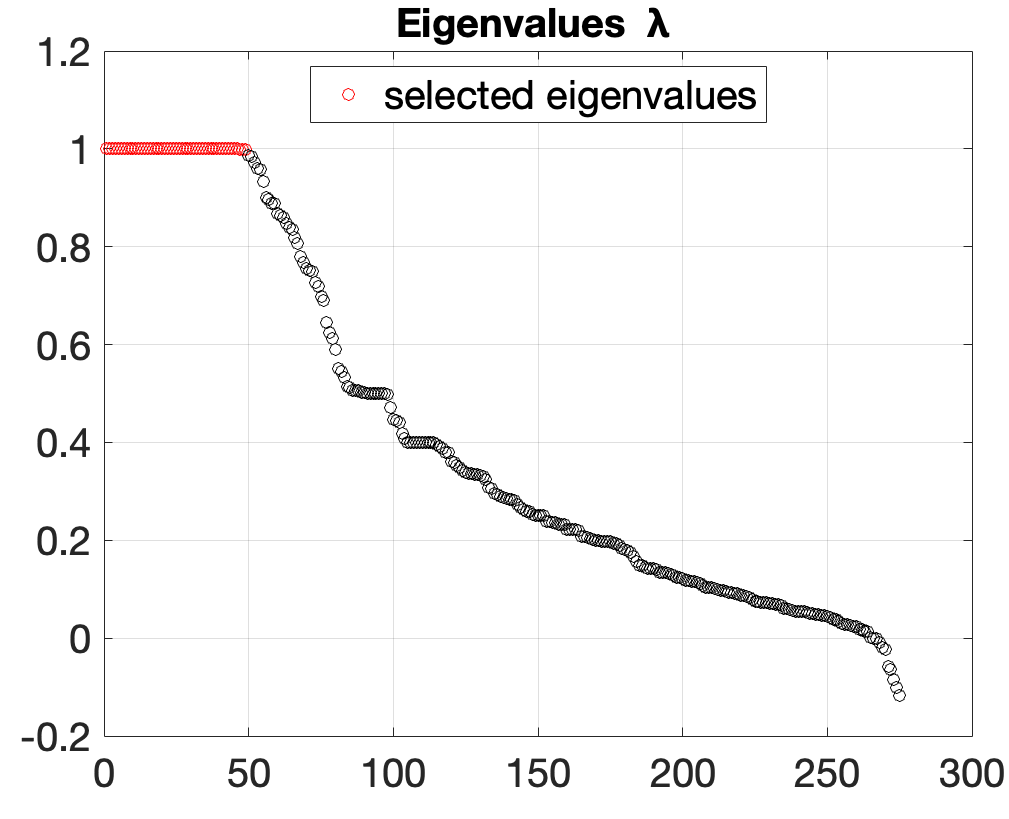}
         \caption{}
     \end{subfigure}

\vspace{0.5cm}

     \begin{subfigure}[b]{0.45\textwidth}
         \centering
         \includegraphics[width=\textwidth]{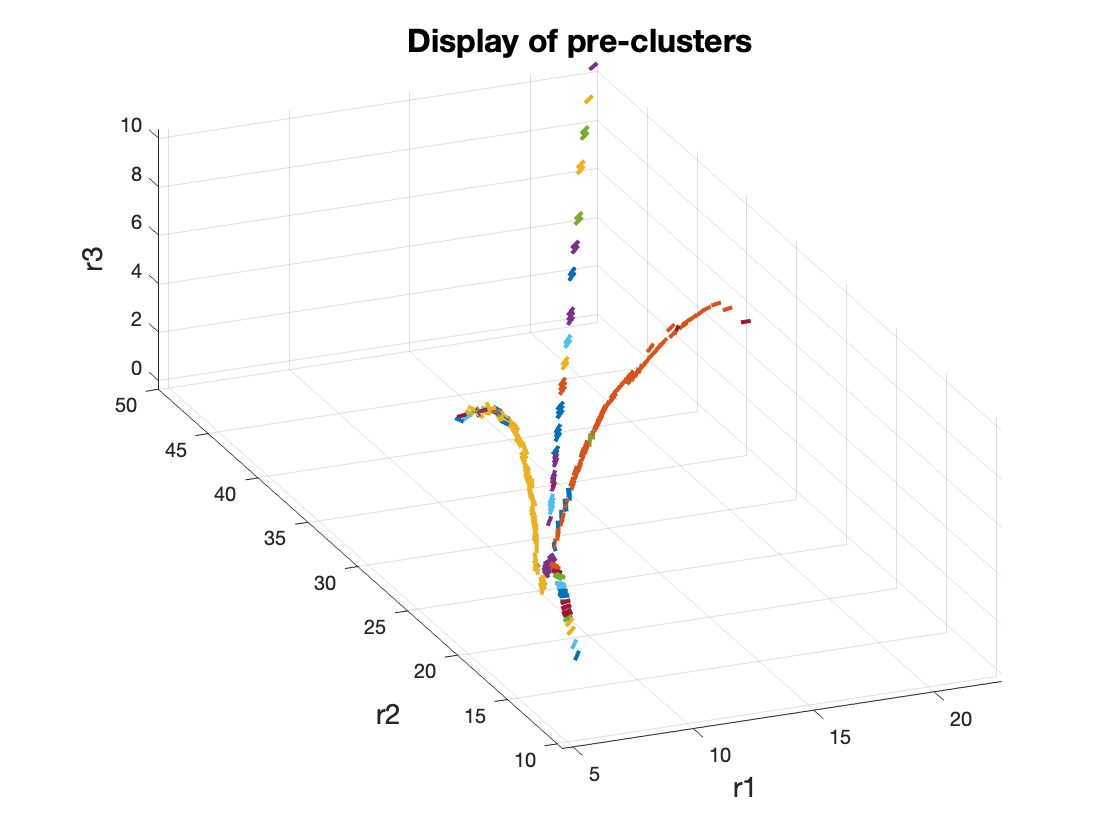}
         \caption{}
     \end{subfigure}
       \begin{subfigure}[b]{0.45\textwidth}
         \centering
         \includegraphics[width=\textwidth]{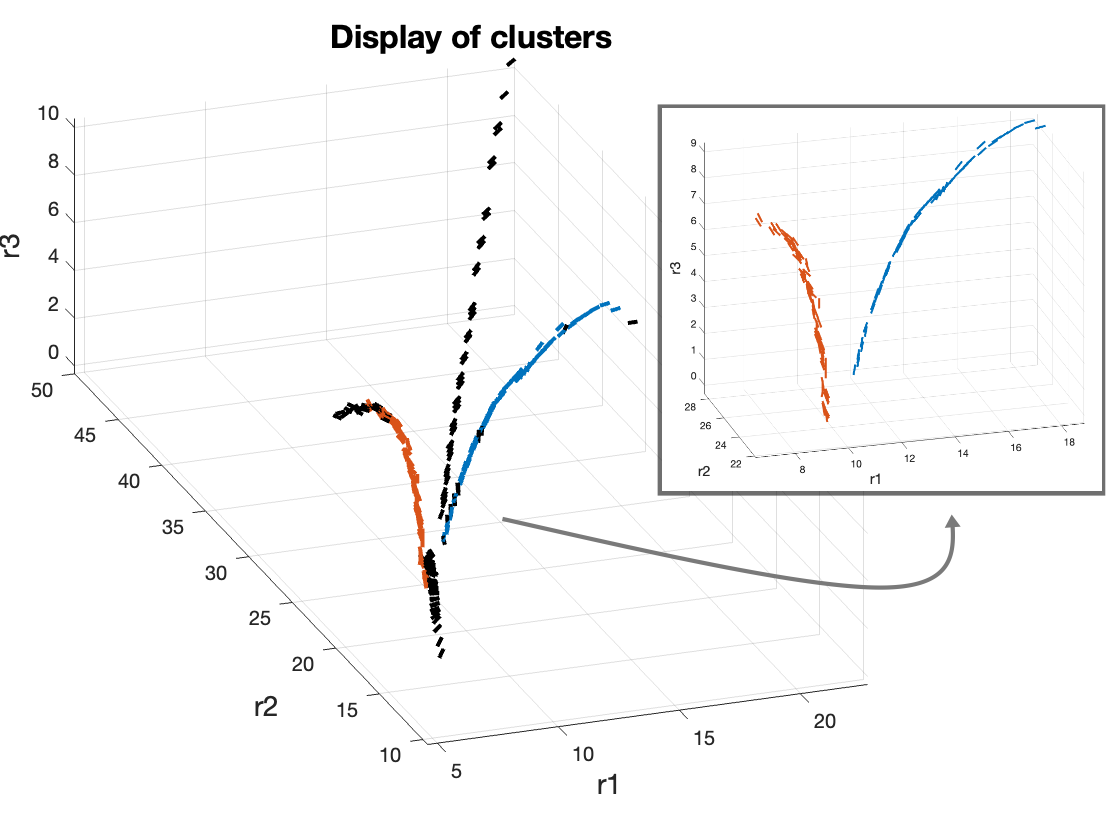}
         \caption{}
     \end{subfigure}
     
      \caption{Results of the \texttt{SCSC} algorithm on the natural images displaying twigs. (a) Lifting of the stimulus into $\PO$, showing both the original 3D twigs and noise generated from the image coupling.  (b) Display of the affinity matrix $\mathbf{P}$, where parameters for $\mathbf{J}$ are set to $T = 40$, $\varlambda = 0.0275$, $M = 400$, and $N = 10^5$. (c) Spectrum of $\mathbf{P}$, where eigenvalues greater than $1 - \varepsilon$ (with $\varepsilon = 0.01$) are highlighted in red. These eigenvalues are used to determine the pre-clusters. (d) Display of pre-clusters, each represented by a different color. (e) Final clusters after thresholding pre-clusters to include only those with more than $Q = 50$ elements. The only two main clusters, corresponding to the twigs, are highlighted in blue and orange, respectively. Black points indicate noise elements assigned to the $C_0$ cluster.}
\label{fig:twigs-results}
\end{figure}

\subsection{Comparison with Riemannian distance}

In this section, we compare the sub-Riemannian distance-based kernel we have introduced with the classical exponential kernel, which depends on the Riemannian distance of the space. The goal is to assess the performance and suitability of these kernels in different scenarios. We discuss how each kernel handles various types of data and how effectively they capture the underlying geometry of the space.

\subsubsection{Gaussian kernel}
In graph-based clustering (\cite{SM00, CG97, coifman2005geometric, CL2006}), it is common to use a Gaussian kernel as a similarity measure to group clouds of points. This kernel is typically a non-linear function of a Euclidean-type distance, defined as:
\begin{equation}
\label{GaussKernel}
k_{\xi_0, E}(\xi) = \frac{1}{4\pi\sigma}\exp\left( - \frac{d_{E}(\xi-\xi_0)
^2}{4\sigma}\right),
\end{equation}
where $d_{E}$ Euclidean-type distance on $\PO$, $\xi_0, \xi \in \PO$, with $\xi_0$ fixed, and $\sigma \in \R_+$. 

In this context, the Euclidean-type distance $d_{E}$ refers to a combination of the classical Euclidean distance $d_{\R^3}$ on $ \R^ 3$  and the classical Riemannian distance $d_{\S^2}$ on the sphere $\S^2$ (induced by the Euclidean metric in $\R^3$). Specifically:
\begin{equation}
d_{E} (\xi, \xi_0) = d_{\R^3}(p, p_0) + d_{\S^2}(n, n_0), \text{ with } \xi = (p, n) \in \PO.
\end{equation}
Isosurfaces in $\R^3$ of the Gaussian kernel \eqref{GaussKernel} are depicted in Figure \ref{cloud}. These are obtained after integration on $\S^2$ as done in \eqref{eq:intensityR3}.

This connectivity kernel can be viewed as the fundamental solution of the heat operator. The parameters of this kernel are the position $\xi \in \PO$ and the variance $\sigma \in \R_+$. Specifically, this kernel associates points that are close to each other in the Euclidean sense and decorrelates points that are farther apart. The parameter $\sigma$ acts as a scale parameter, influencing the extent of correlation: when $\sigma$ is small, the kernel provides higher correlation between points near $\xi_0$; conversely, when $\sigma$ is large, the kernel allows for correlation over a broader range, including points farther away.

\subsubsection{Numerical simulation}
 For starters, we test this measure to segment the clouds of points in image (b) of Figure \ref{cloud}. We apply the main steps of the algorithm outlined in Section \ref{sec:algorithm}, starting from these points in 3D. The only difference is that now the affinity matrix $\mathbf{J}$ is generated using the similarity measure defined by \eqref{GaussKernel}. The analysis reveals the presence of three main clusters, each representing a different cloud of points, as depicted in image (c): each color in the image corresponds to elements belonging to the same cluster.
 
 \begin{figure}[tbh]
     \centering
\begin{subfigure}[b]{0.3\textwidth}
         \centering
         \includegraphics[width=\textwidth]{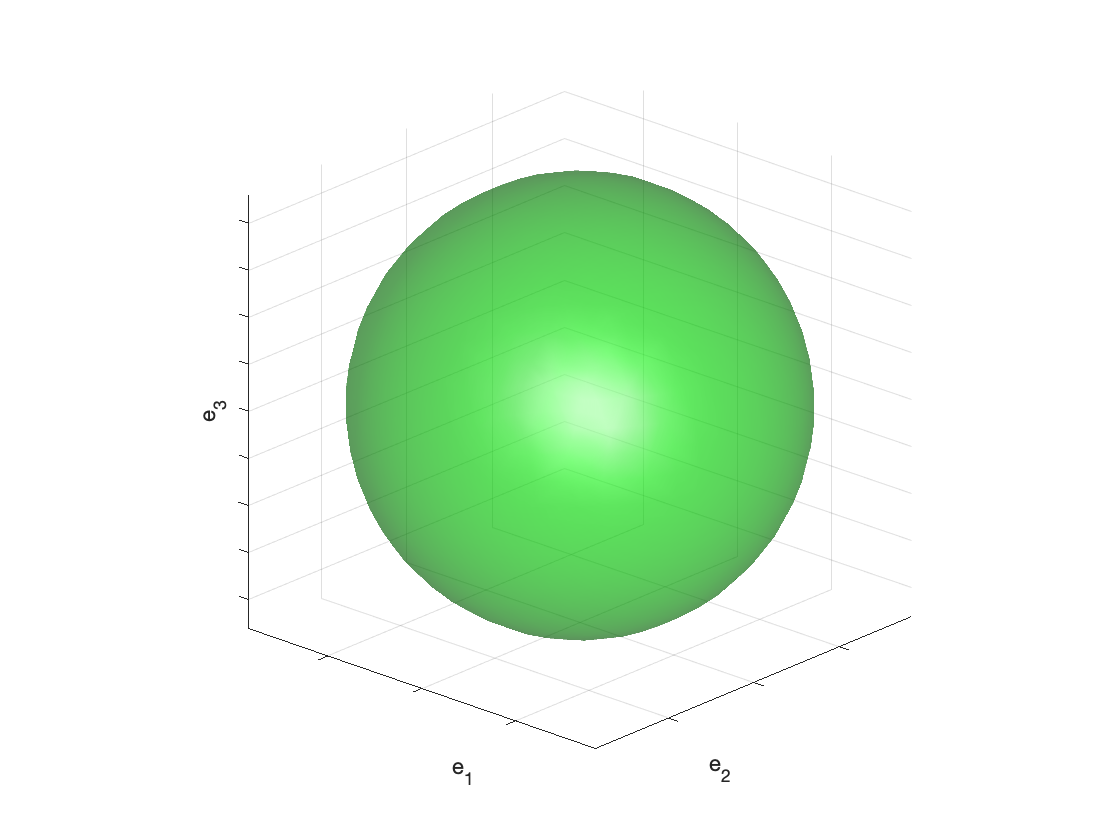}
      \caption{}
     \end{subfigure}
     \begin{subfigure}[b]{0.3\textwidth}
         \centering
         \includegraphics[width=\textwidth]{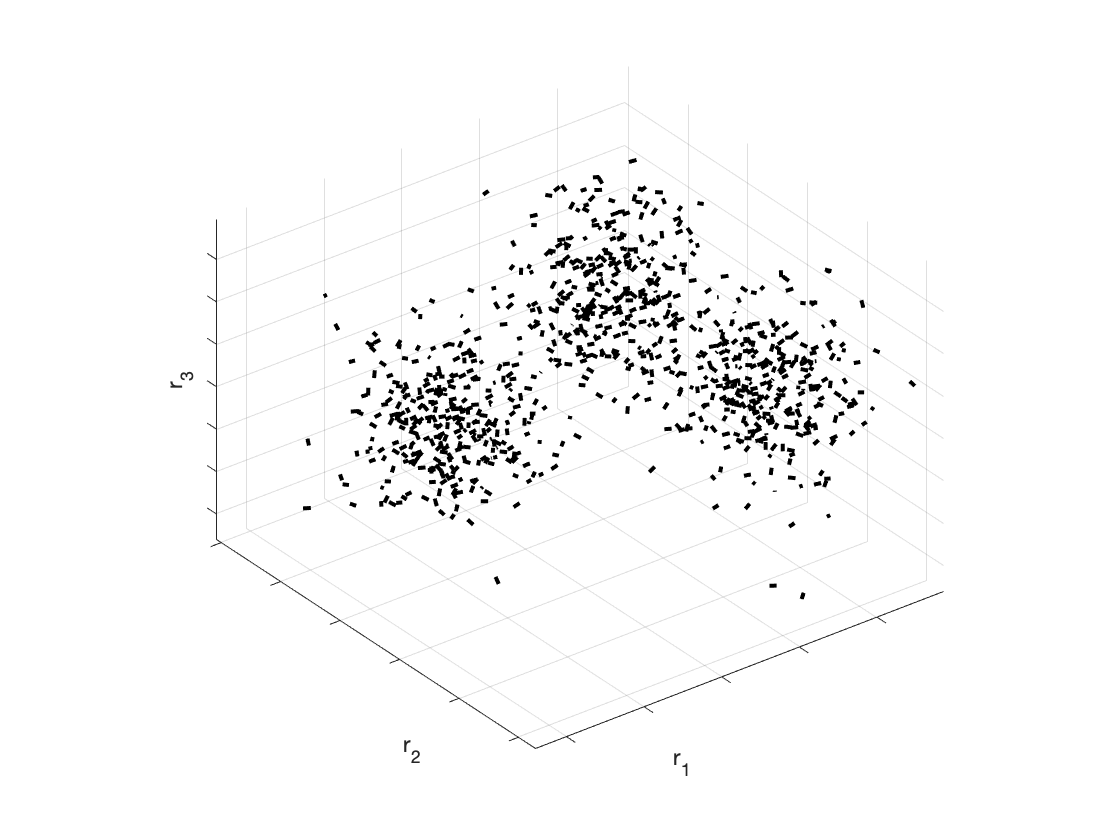}
         \caption{}
     \end{subfigure}
      \begin{subfigure}[b]{0.3\textwidth}
         \centering
         \includegraphics[width=\textwidth]{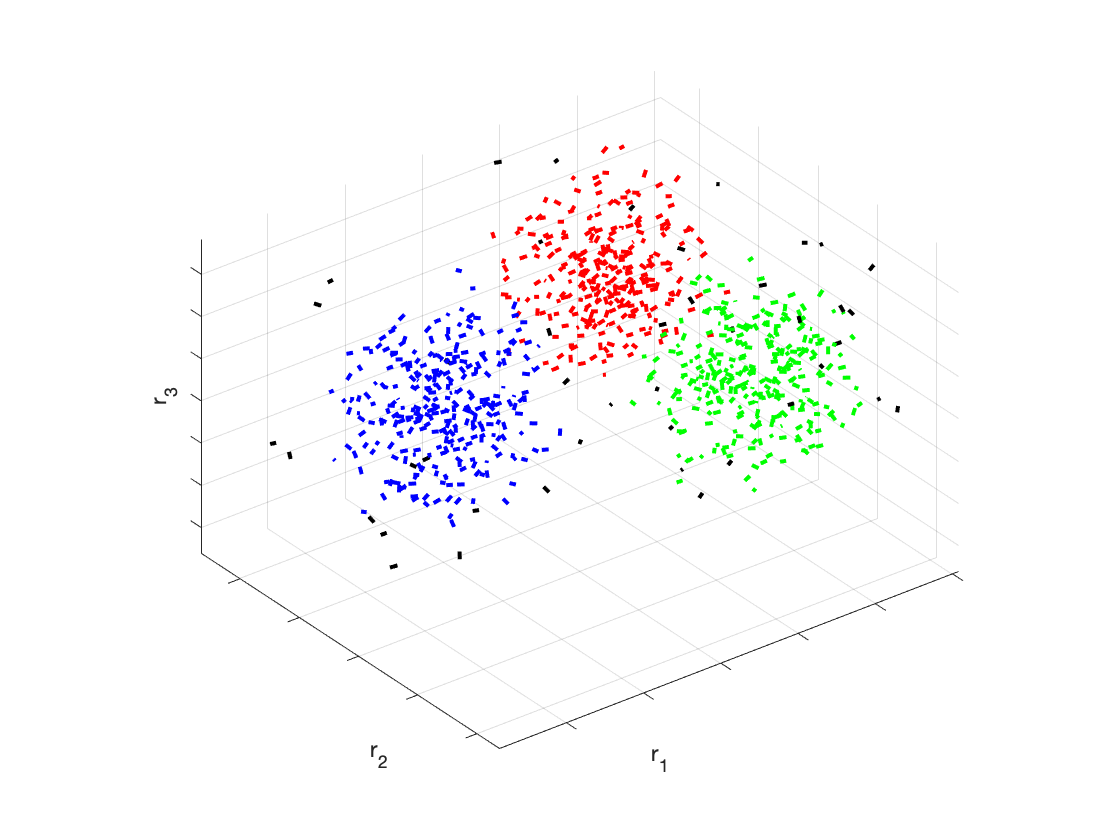}
         \caption{}
     \end{subfigure}
           \caption{Spectral analysis with ``Euclidean-type'' kernel.(a) Three-dimensional iso-surface of the Gaussian kernel defined by \eqref{GaussKernel}.  (b) Clouds of points in $\PO$. (c) Segmentation of the three clouds performed through the Gaussian kernel defined in \eqref{GaussKernel}. Elements with the same color belong to the same cluster.}
\label{cloud}
\end{figure}

However, this ``Euclidean'' kernel is not well-suited for solving stereo matching and the 3D reconstruction problem, because the points that constitute a true perceptual unit in 3D are not necessarily the ones that are closest or most densely clustered, but rather those that best follow the law of good continuation.

\begin{figure}[tbh]
     \centering
  \begin{subfigure}[b]{0.2\textwidth}
         \centering
         \includegraphics[width=\textwidth]{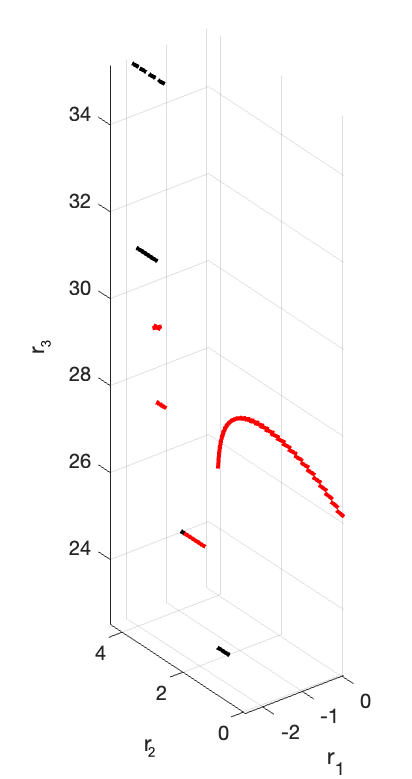}
         \caption{}
     \end{subfigure}
      \begin{subfigure}[b]{0.35\textwidth}
         \centering
         \includegraphics[width=\textwidth]{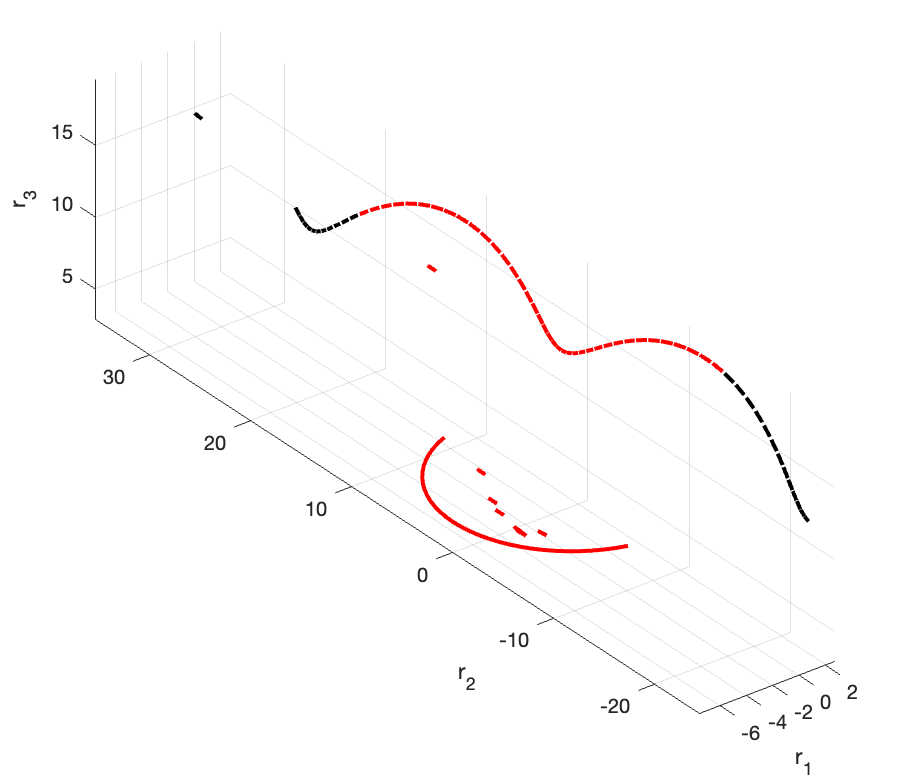}
         \caption{ }
     \end{subfigure}
     \begin{subfigure}[b]{0.3\textwidth}
         \centering
         \includegraphics[width=\textwidth]{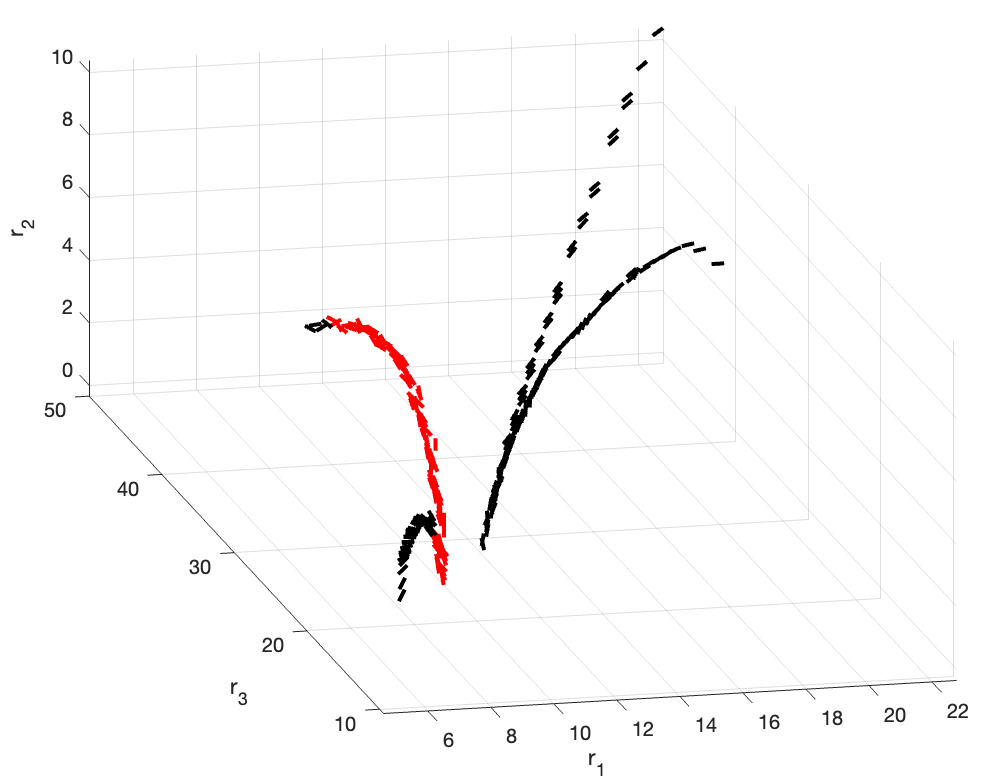}
         \caption{}
     \end{subfigure}
           \caption{Spectral analysis with ``Euclidean-type'' kernel.
           (a) Clustering performed on the 3D curve with $\sigma = 4$. (b) Clustering performed on the couple of arc and helix with $\sigma = 60$. (c) Clustering performed on twigs with $\sigma = 4$. }
\label{groupGauss}
\end{figure}

To illustrate, consider the previously defined three-dimensional synthetic images. When performing spectral clustering on the 3D parameterized curves, we observe that the primary cluster (red points) captures the region with the highest density of elements, but it fails to perform effective 3D reconstruction, as illustrated in Figure \ref{groupGauss}, image (a). In this case, adjusting the scale parameter $\sigma$ alone is insufficient for accurate 3D reconstruction. A similar result is observed with the helix and arc in 3D, depicted in image (b) of Figure \ref{groupGauss}, where the clustering only identifies the region with the highest number of points. When applying the same algorithm to the twig images, shown in Figure \ref{groupGauss}, image (c), the results are analogous. Here, the algorithm primarily groups points based on proximity, without effectively segmenting and reconstructing the visual scene. This demonstrates that the perceptual law underlying this grouping relies solely on proximity.

These examples illustrate that sub-Riemannian kernels are particularly effective in grouping three-dimensional perceptual units and addressing the stereo-matching problem, in contrast to Riemannian kernels, which are less effective in this context. Similar results were provided by Bekkers, Chen, and Portegies \cite{BCP18} investigating the use of sub-Riemannian and Riemannian distances in $SE(2)$ and $SE(3)$ for grouping of blood vessels.

Overall, sub-Riemannian distances provide superior performance compared to Euclidean-type distances, particularly in problems characterized by strong directionality. This underscores the advantage of sub-Riemannian methods for applications requiring preservation of directional information.

\section{Conclusions}

In this paper, we introduce a neurogeometric approach to stereo vision that aims to simultaneously identify 3D perceptual units and solve the stereo correspondence problem. This study builds upon and extends the 
neurogeometric model presented in \cite{BCSZ23} and \cite{BCSZ23b}, establishing a connection between the functional architecture of binocular cells in V1, the psychophysics of 3D association fields, and the emergence of visual percepts in space.

We show how the stochastic counterpart of the 3D association fields enables transitions from points in space to nearby points, which can be conceptualized as underlying circuits of binocular neurons. This leads to the introduction of a connectivity kernel, a measure that encodes the 3D geometric position-orientation information and describes the probability of co-occurrence between elements in space. These probability densities are incorporated as facilitation inducers in a neural population activity model, whose stability analysis results in the emergence of three-dimensional perceptual units.

We employ a clustering algorithm to evaluate the visual grouping properties of these discrete connectivity kernels. By associating all possible corresponding points (left and right retinal points with the same abscissa coordinate) and lifting them into points $\xi_i \in \Omega \subset \PO$, we also account for false matches, i.e., points not belonging to the original stimulus.

The proposed algorithm effectively correlates 3D real elements, grouping points that pertain to a single object in 3D space, and solves the stereo correspondence problem by accurately identifying correct matches in the  left and right images. During this process, false matches are eliminated as the similarity measure introduced by the kernel associates elements that satisfy the good-continuation constraints.

A comparison with the same algorithm using the classical Gaussian kernel with a Euclidean-type distance highlights and justifies the sub-Riemannian metric. This approach seamlessly integrates spectral clustering techniques with the Gestalt principle of good continuation, providing a robust solution that not only addresses the stereo reconstruction challenge but also achieves effective segmentation of the visual scene.

\section*{Acknowledgments}
SWZ was supported by NIH grant 1R01EY031059 and NSF Grant 1822598. GC is supported by the project PRIN 2022 F4F2LH - CUP J53D23003760006, and by MNESYS PE12 (PE0000006).

\printbibliography

@InProceedings{BCSZ23b,
author="Bolelli, M. V.
and Citti, G.
and Sarti, A.
and Zucker, S. W.",
editor="Nielsen, Frank
and Barbaresco, Fr{\'e}d{\'e}ric",
title="A Neurogeometric Stereo Model for Individuation of 3D Perceptual Units",
booktitle="Geometric Science of Information",
year="2023",
publisher="Springer Nature Switzerland",
address="Cham",
pages="53--62",
abstract="We present a neurogeometric model for stereo vision and individuation of 3D perceptual units. We first model the space of position and orientation of 3D curves in the visual scene as a sub-Riemannian structure. Horizontal curves in this setting express good continuation principles in 3D. Starting from the equation of neural activity we apply harmonic analysis techniques in the sub-Riemannian structure to solve the correspondence problem and find 3D percepts.",
isbn="978-3-031-38271-0"
}

@article{BLU02,
	author = {A. Bonfiglioli and E. Lanconelli and F. Uguzzoni},
	date-added = {2023-02-14 23:37:07 +0100},
	date-modified = {2023-02-14 23:38:11 +0100},
	doi = {10.57262/ade/1356651633},
	journal = {Advances in Differential Equations},
	number = {10},
	pages = {1153 -- 1192},
	publisher = {Khayyam Publishing, Inc.},
	title = {{Uniform Gaussian estimates for the fundamental solutions for heat operators on Carnot groups}},
	volume = {7},
	year = {2002},
	bdsk-url-1 = {https://doi.org/10.57262/ade/1356651633},
	bdsk-url-2 = {https://doi.org/ade/1356651633}}

@book{O03,
	author = {{\O}ksendal, B.},
	publisher = {Springer},
	title = {Stochastic differential equations},
	year = {2003}}

@inproceedings{MS09,
	author = {Momayyez, P. and Siddiqi, K.},
	booktitle = {2009 IEEE Computer Society Conference on Computer Vision and Pattern Recognition Workshops},
	date-added = {2023-02-10 19:18:41 +0100},
	date-modified = {2023-02-10 19:19:03 +0100},
	doi = {10.1109/CVPRW.2009.5204044},
	organization = {IEEE},
	pages = {178--185},
	title = {{3D} stochastic completion fields for fiber tractography},
	year = {2009},
	bdsk-url-1 = {https://doi.org/10.1109/CVPRW.2009.5204044}}

@article{MS13,
	author = {MomayyezSiahkal, P. and Siddiqi, K.},
	date-added = {2023-02-10 18:27:38 +0100},
	date-modified = {2023-02-10 18:28:09 +0100},
	doi = {10.1109/TPAMI.2012.184},
	journal = {IEEE transactions on pattern analysis and machine intelligence},
	number = {4},
	pages = {983--995},
	publisher = {IEEE},
	title = {{3D} stochastic completion fields for mapping connectivity in diffusion {MRI}},
	volume = {35},
	year = {2012},
	bdsk-url-1 = {https://doi.org/10.1109/TPAMI.2012.184}}

@inproceedings{PSMD15,
	author = {Portegies, J. and Sanguinetti, G. and Meesters, S. and Duits, R.},
	booktitle = {Scale Space and Variational Methods in Computer Vision: 5th International Conference, SSVM 2015, L{\`e}ge-Cap Ferret, France, May 31-June 4, 2015, Proceedings 5},
	date-added = {2023-02-10 12:01:18 +0100},
	date-modified = {2023-02-10 12:02:10 +0100},
	doi = {10.1007/978-3-319-18461-6_4},
	organization = {Springer},
	pages = {40--52},
	title = {New approximation of a scale space kernel on {SE(3)} and applications in neuroimaging},
	year = {2015},
	bdsk-url-1 = {https://doi.org/10.1007/978-3-319-18461-6_4}}

@article{DCGD11,
	author = {Duits, R. and Creusen, E. and Ghosh, A. and Dela Haije, T.},
	date-added = {2023-02-04 16:20:49 +0100},
	date-modified = {2023-02-04 16:21:28 +0100},
	journal = {arXiv preprint arXiv:1103.0656},
	title = {Diffusion, Convection and Erosion on {R3xS2} and their Application to the Enhancement of Crossing Fibers},
	year = {2011}}

@article{BCSZ23,
  title={Good continuation in 3D: the neurogeometry of stereo vision},
  author={Bolelli, M. V. and Citti, G. and Sarti, A. and Zucker, S. W.},
  journal={Frontiers in Computer Science},
  volume={5},
  pages={1142621},
  year={2024},
  publisher={Frontiers Media SA}
}

@article{FCS17,
	author = {Favali, M. and Citti, G. and Sarti, A.},
	doi = {10.1162/NECO_a_00921},
	journal = {Neural Computation},
	number = {2},
	pages = {394--422},
	publisher = {MIT Press One Rogers Street, Cambridge, MA 02142-1209, USA},
	title = {Local and global gestalt laws: A neurally based spectral approach},
	volume = {29},
	year = {2017},
	bdsk-url-1 = {https://doi.org/10.1162/NECO_a_00921}}

@article{BCP18,
	author = {Bekkers, E. J. and Chen, D. and Portegies, J. M .},
	date-added = {2023-01-30 18:59:00 +0100},
	date-modified = {2023-01-30 19:00:38 +0100},
	doi = {10.1007/s10851-018-0787-z},
	journal = {Journal of Mathematical Imaging and Vision},
	pages = {882--899},
	publisher = {Springer},
	title = {Nilpotent approximations of sub-Riemannian distances for fast perceptual grouping of blood vessels in {2D} and {3D}},
	volume = {60},
	year = {2018},
	bdsk-url-1 = {https://doi.org/10.1007/s10851-018-0787-z}}

@article{DBM19,
	author = {Duits, R. and Bekkers, E. J. and Mashtakov, A.},
	date-added = {2023-01-27 17:07:13 +0100},
	date-modified = {2023-01-27 17:07:40 +0100},
	doi = {10.3390/e21010038},
	journal = {Entropy},
	number = {1},
	pages = {38},
	publisher = {MDPI},
	title = {Fourier transform on the homogeneous space of {3D} positions and orientations for exact solutions to linear PDEs},
	volume = {21},
	year = {2019},
	bdsk-url-1 = {https://doi.org/10.3390/e21010038}}

@phdthesis{Cocci14,
	author = {Cocci, G.},
	date-added = {2023-01-22 14:19:15 +0100},
	date-modified = {2023-01-22 14:20:16 +0100},
	doi = {10.6092/unibo/amsdottorato/6546},
	school = {University of Bologna},
	title = {Spatio-temporal models of the functional architecture of the visual cortex},
	type = {PhD thesis},
	year = {2014},
	bdsk-url-1 = {http://amsdottorato.unibo.it/6546/}}

@article{N14,
	author = {Nicholls, A.},
	doi = {10.1007/s10822-016-9904-5},
	journal = {Journal of Computer-aided Molecular Design},
	number = {9},
	pages = {887--918},
	publisher = {Springer},
	title = {Confidence limits, error bars and method comparison in molecular modeling. {P}art 1: the calculation of confidence intervals},
	volume = {28},
	year = {2014},
	bdsk-url-1 = {https://doi.org/10.1007/s10822-016-9904-5}}

@book{FL01,
	author = {Faugeras, O. and Luong, Q. T.},
	date-added = {2023-01-16 09:53:15 +0100},
	date-modified = {2023-01-16 09:54:02 +0100},
	doi = {10.7551/mitpress/3259.001.0001},
	publisher = {MIT press},
	title = {The geometry of multiple images: the laws that govern the formation of multiple images of a scene and some of their applications},
	year = {2001},
	bdsk-url-1 = {https://doi.org/10.7551/mitpress/3259.001.0001}}

@book{B21,
	author = {Bruno, N.},
	date-added = {2022-10-26 09:17:52 +0200},
	date-modified = {2022-10-26 09:18:51 +0200},
	publisher = {Bologna: Il Mulino.},
	title = {Introduzione alla percezione visiva: come facciamo a vedere.},
	year = {2021}}

@article{DF10,
	author = {Duits, R. and Franken, E.},
	date-added = {2022-10-15 11:39:05 +0200},
	date-modified = {2022-10-15 11:39:24 +0200},
	doi = {10.1090/S0033-569X-10-01172-0},
	journal = {Quarterly of Applied Mathematics},
	number = {2},
	pages = {255--292},
	title = {Left-invariant parabolic evolutions on {SE(2)} and contour enhancement via invertible orientation scores Part {I}: Linear left-invariant diffusion equations on {SE(2)}},
	volume = {68},
	year = {2010},
	bdsk-url-1 = {https://doi.org/10.1090/S0033-569X-10-01172-0}}

@article{P01,
	author = {Pizlo, Z.},
	date-added = {2022-10-03 17:26:49 +0200},
	date-modified = {2022-10-03 17:27:07 +0200},
	doi = {10.1016/s0042-6989(01)00173-0},
	journal = {Vision Research},
	number = {24},
	pages = {3145--3161},
	publisher = {Elsevier},
	title = {Perception viewed as an inverse problem},
	volume = {41},
	year = {2001},
	bdsk-url-1 = {https://doi.org/10.1016/s0042-6989(01)00173-0}}

@article{AFCSR17,
	author = {Abbasi-Sureshjani, S. and Favali, M. and Citti, G. and Sarti, A. and Ter Haar Romeny, B. M.},
	date-added = {2022-10-02 09:21:16 +0200},
	date-modified = {2023-02-09 17:11:23 +0100},
	doi = {10.1109/TIP.2017.2761543},
	journal = {IEEE Transactions on Image Processing},
	number = {2},
	pages = {606--621},
	publisher = {IEEE},
	title = {Curvature integration in a {5D} kernel for extracting vessel connections in retinal images},
	volume = {27},
	year = {2017},
	bdsk-url-1 = {https://doi.org/10.1109/TIP.2017.2761543}}

@incollection{M94,
	author = {Mumford, D.},
	booktitle = {Algebraic geometry and its applications},
	date-added = {2022-10-02 08:00:32 +0200},
	date-modified = {2022-10-02 08:00:38 +0200},
	doi = {10.1007/978-1-4612-2628-4_31},
	pages = {491--506},
	publisher = {Springer},
	title = {Elastica and computer vision},
	year = {1994},
	bdsk-url-1 = {https://doi.org/10.1007/978-1-4612-2628-4_31}}

@article{BCGTW02,
	author = {Bressloff, P. C. and Cowan, J. D. and Golubitsky, M. and Thomas, P. J. and Wiener, M. C.},
	date-added = {2022-09-29 11:31:11 +0200},
	date-modified = {2022-09-29 11:32:15 +0200},
	doi = {10.1162/089976602317250861},
	journal = {Neural Computation},
	number = {3},
	pages = {473--491},
	publisher = {MIT Press One Rogers Street, Cambridge, MA 02142-1209, USA journals-info~{\ldots}},
	title = {What geometric visual hallucinations tell us about the visual cortex},
	volume = {14},
	year = {2002},
	bdsk-url-1 = {https://doi.org/10.1162/089976602317250861}}

@article{EC79,
	author = {Ermentrout, G. B. and Cowan, J. D.},
	date-added = {2022-09-29 11:28:57 +0200},
	date-modified = {2022-09-29 11:29:18 +0200},
	doi = {10.1007/BF00275728},
	journal = {Journal of Mathematical Biology},
	number = {3},
	pages = {265--280},
	publisher = {Springer},
	title = {Temporal oscillations in neuronal nets},
	volume = {7},
	year = {1979},
	bdsk-url-1 = {https://doi.org/10.1007/BF00275728}}

@article{A72,
	author = {Amari, S. I.},
	date-added = {2022-09-29 11:26:49 +0200},
	date-modified = {2023-02-09 19:13:52 +0100},
	doi = {10.1109/TSMC.1972.4309193},
	journal = {IEEE Transactions on Systems, Man, and Cybernetics},
	number = {5},
	pages = {643--657},
	publisher = {IEEE},
	title = {Characteristics of random nets of analog neuron-like elements},
	volume = {2},
	year = {1972},
	bdsk-url-1 = {https://doi.org/10.1109/TSMC.1972.4309193}}

@article{WC73,
	author = {Wilson, H. R. and Cowan, J. D.},
	date-added = {2022-09-29 11:25:33 +0200},
	date-modified = {2022-09-29 11:25:53 +0200},
	doi = {10.1007/BF00288786},
	journal = {Kybernetik},
	number = {2},
	pages = {55--80},
	publisher = {Springer},
	title = {A mathematical theory of the functional dynamics of cortical and thalamic nervous tissue},
	volume = {13},
	year = {1973},
	bdsk-url-1 = {https://doi.org/10.1007/BF00288786}}

@book{HL13,
	author = {Hale, J. K. and Lunel, S. M. V.},
	date-added = {2022-09-26 17:04:10 +0200},
	date-modified = {2022-09-26 17:04:36 +0200},
	publisher = {Springer Science \& Business Media},
	title = {Introduction to functional differential equations},
	volume = {99},
	year = {2013}}

@article{birchfield1998pixel,
  title={A pixel dissimilarity measure that is insensitive to image sampling},
  author={Birchfield, S. and Tomasi, C.},
  journal={IEEE Transactions on Pattern Analysis and Machine Intelligence},
  volume={20},
  number={4},
  pages={401--406},
  year={1998},
  publisher={IEEE}
}

@article{alwan1996automatic,
  title={Automatic stereo image matching using edge detection technique},
  author={Alwan, R. H. and Naji, M. A.},
  journal={International Archives of Photogrammetry and Remote Sensing},
  volume={31},
  pages={29--35},
  year={1996}
}

@article{ng2019solving,
  title={Solving the stereo correspondence problem with false matches},
  author={Ng, C. J. and Farell, B.},
  journal={Plos one},
  volume={14},
  number={7},
  pages={e0219052},
  year={2019},
  publisher={Public Library of Science San Francisco, CA USA}
}

@article{zhu2017hybrid,
  title={Hybrid scheme for accurate stereo matching},
  author={Zhu, Z.  and Dai, Q.},
  journal={Neurocomputing},
  volume={252},
  pages={24--33},
  year={2017},
  publisher={Elsevier}
}

@article{karimi2022stereo,
  title={Stereo matching by using a new local descriptor based on sign and order in belief propagation algorithm},
  author={Karimi, H. and Dolati, A. and Layeghi, K.},
  journal={Displays},
  volume={75},
  pages={102300},
  year={2022},
  publisher={Elsevier}
}

@article{STRTF22,
	author = {Scholl, B. and Tepohl, C. and Ryan, M. A. and Thomas, C. I. and Kamasawa, N. and Fitzpatrick, D.},
	date-added = {2022-09-25 09:19:08 +0200},
	date-modified = {2022-09-25 09:19:53 +0200},
	doi = {10.1113/jphysiol.1972.sp010006},
	journal = {Neuron},
	number = {9},
	pages = {1573--1584},
	publisher = {Elsevier},
	title = {A binocular synaptic network supports interocular response alignment in visual cortical neurons},
	volume = {110},
	year = {2022},
	bdsk-url-1 = {https://doi.org/10.1113/jphysiol.1972.sp010006}}

@article{WEKP12,
	author = {Wagemans, J. and Elder, J. H. and Kubovy, M. and Palmer, S. E. and Peterson, M. A. and Singh, M. and von der Heydt, R.},
	date-added = {2022-08-27 21:00:25 +0200},
	date-modified = {2022-08-27 21:01:39 +0200},
	doi = {10.1037/a0029333},
	journal = {Psychological Bulletin},
	number = {6},
	pages = {1172},
	publisher = {American Psychological Association},
	title = {A century of Gestalt psychology in visual perception: I. Perceptual grouping and figure--ground organization.},
	volume = {138},
	year = {2012},
	bdsk-url-1 = {https://doi.org/10.1037/a0029333}}

@article{W93,
	author = {Wertheimer, M.},
	date-added = {2022-08-27 20:59:41 +0200},
	date-modified = {2022-08-27 20:59:53 +0200},
	doi = {10.1037/11496-005},
	journal = {Psycologische Forschung},
	title = {Laws of organization in perceptual forms},
	volume = {4},
	year = {1923},
	bdsk-url-1 = {https://doi.org/10.1037/11496-005}}

@article{PZ89,
	author = {Parent, P. and Zucker, S. W.},
	date-added = {2022-05-20 20:13:39 +0200},
	date-modified = {2022-05-20 20:14:00 +0200},
	doi = {10.1109/34.31445},
	journal = {IEEE Transactions on pattern analysis and machine intelligence},
	number = {8},
	pages = {823--839},
	publisher = {IEEE},
	title = {Trace inference, curvature consistency, and curve detection},
	volume = {11},
	year = {1989},
	bdsk-url-1 = {https://doi.org/10.1109/34.31445}}

@book{SR11,
	author = {Sturm, P. and Ramalingam, S.},
	date-added = {2022-01-15 11:47:13 +0100},
	date-modified = {2022-02-05 16:24:03 +0100},
	publisher = {Now Publishers Inc},
	title = {Camera models and fundamental concepts used in geometric computer vision},
	year = {2011}}

@article{D16,
	author = {Duits, R. and Ghosh, A. and Dela Haije, T. C. J. and Mashtakov, A.},
	date-added = {2022-01-03 18:37:06 +0100},
	date-modified = {2022-02-04 17:20:14 +0100},
	doi = {10.1007/s10883-016-9329-4},
	journal = {Journal of Dynamical and Control Systems},
	number = {4},
	pages = {771--805},
	publisher = {Springer Science and Business Media {LLC}},
	title = {On Sub-Riemannian Geodesics in {SE}(3) Whose Spatial Projections do not Have Cusps},
	volume = {22},
	year = 2016,
	bdsk-url-1 = {https://doi.org/10.1007%2Fs10883-016-9329-4},
	bdsk-url-2 = {https://doi.org/10.1007/s10883-016-9329-4}}

@article{DW15,
	author = {Deas, L. M. and Wilcox, L. M.},
	date-added = {2021-09-21 16:20:15 +0200},
	date-modified = {2022-02-04 17:21:58 +0100},
	doi = {10.1167/15.11.11},
	journal = {Journal of Vision},
	month = {Aug},
	number = {11},
	pages = {11},
	publisher = {Association for Research in Vision and Ophthalmology ({ARVO})},
	title = {Perceptual grouping via binocular disparity: The impact of stereoscopic good continuation},
	volume = {15},
	year = 2015,
	bdsk-url-1 = {https://doi.org/10.1167%2F15.11.11},
	bdsk-url-2 = {https://doi.org/10.1167/15.11.11}}

@article{KGSYM05,
	author = {Kellman, P. J. and Garrigan, P. and Shipley, T. F. and Yin, C. and Machado, L.},
	date-added = {2021-09-21 16:10:56 +0200},
	date-modified = {2022-02-04 17:28:39 +0100},
	doi = {10.1037/0096-1523.31.3.558},
	journal = {Journal of Experimental Psychology: Human Perception and Performance},
	number = {3},
	pages = {558--583},
	publisher = {American Psychological Association ({APA})},
	title = {3-D Interpolation in Object Perception: Evidence From an Objective Performance Paradigm.},
	volume = {31},
	year = 2005,
	bdsk-url-1 = {https://doi.org/10.1037%2F0096-1523.31.3.558},
	bdsk-url-2 = {https://doi.org/10.1037/0096-1523.31.3.558}}

@article{FHH93,
	abstract = {The Gestalt law of ``good continuation'' has been used to describe a variety of phenomena demonstrating the importance of continuity in human perception. In this study, we consider how continuity may be represented by a visual system that filters spatial data using arrays of cells selective for orientation and spatial frequency. Many structures (e.g. fractal contours) show a form of redundancy which is well represented by the continuity of features as they vary across space and frequency. We suggest that it is possible to take advantage of the redundancy in continuous, but non-aligned features by associating the outputs of filters with similar tuning. Five experiments were performed, to determine the rules that govern the perception of continuity. Observers were presented with arrays of oriented, band-pass elements (Gabor patches) in which a subset of the elements was aligned along a ``jagged'' path. Using a forced-choice procedure, observers were found to be capable of identifying the path within a field of randomly-oriented elements even when the spacing between the elements was considerably larger than the size of any of the individual elements. Furthermore, when the elements were oriented at angles up to $\pm$ 60 deg relative to one another, the path was reliably identified. Alignment of the elements along the path was found to play a large role in the ability to detect the path. Small variations in the alignment or aligning the elements orthogonally (i.e. ``side-to-side'' as opposed to ``end-to-end'') significantly reduced the observer's ability to detect the presence of a path. The results are discussed in terms of an ``association field'' which integrates information across neighboring filters tuned to similar orientations. We suggest that some of the processes involved in texture segregation may have a similar explanation.},
	author = {Field, D. J. and Hayes, A. and Hess, R. F.},
	date-added = {2021-09-20 17:40:59 +0200},
	date-modified = {2022-02-04 17:23:34 +0100},
	doi = {https://doi.org/10.1016/0042-6989(93)90156-Q},
	issn = {0042-6989},
	journal = {Vision Research},
	keywords = {Fractal, Orientation, Contours, Visual coding, Psychophysics, Texture perception, Spatial vision},
	number = {2},
	pages = {173-193},
	title = {Contour integration by the human visual system: Evidence for a local ''association field''},
	volume = {33},
	year = {1993},
	bdsk-url-1 = {https://www.sciencedirect.com/science/article/pii/004269899390156Q},
	bdsk-url-2 = {https://doi.org/10.1016/0042-6989(93)90156-Q}}

@book{F93,
	author = {Faugeras, O.},
	date-added = {2021-07-08 14:52:36 +0200},
	date-modified = {2022-02-04 17:39:47 +0100},
	publisher = {MIT press},
	title = {Three-dimensional computer vision: a geometric viewpoint},
	year = {1993}}

@book{GT13,
	author = {Graham, C. and Talay, D.},
	date-added = {2021-07-08 14:46:57 +0200},
	date-modified = {2022-02-04 17:40:03 +0100},
	publisher = {Springer, Heidelberg},
	series = {Stochastic Modelling and Applied Probability},
	title = {Stochastic simulation and {M}onte {C}arlo methods},
	year = {2013}}

@article{EC80,
	author = {Ermentrout, G. B. and Cowan, J. D.},
	date-added = {2021-07-08 14:36:05 +0200},
	date-modified = {2022-02-04 17:22:16 +0100},
	doi = {10.1137/0138001},
	journal = {SIAM Journal on Applied Mathematics},
	number = {1},
	pages = {1-21},
	title = {Large Scale Spatially Organized Activity in Neural Nets},
	volume = {38},
	year = {1980},
	bdsk-url-1 = {https://doi.org/10.1137/0138001}}

@article{AOF99S,
	author = {Anzai, A. and Ohzawa, I. and Freeman, R.D.},
	date-added = {2021-07-08 14:29:36 +0200},
	date-modified = {2022-02-04 17:13:21 +0100},
	doi = {10.1152/jn.1999.82.2.891},
	journal = {Journal of Neurophysiology},
	month = {Aug},
	number = {2},
	pages = {891--908},
	publisher = {American Physiological Society},
	title = {Neural Mechanisms for Processing Binocular Information I. Simple Cells},
	volume = {82},
	year = 1999,
	bdsk-url-1 = {https://doi.org/10.1152%2Fjn.1999.82.2.891},
	bdsk-url-2 = {https://doi.org/10.1152/jn.1999.82.2.891}}

@article{WC72,
	abstract = {Coupled nonlinear differential equations are derived for the dynamics of spatially localized populations containing both excitatory and inhibitory model neurons. Phase plane methods and numerical solutions are then used to investigate population responses to various types of stimuli. The results obtained show simple and multiple hysteresis phenomena and limit cycle activity. The latter is particularly interesting since the frequency of the limit cycle oscillation is found to be a monotonic function of stimulus intensity. Finally, it is proved that the existence of limit cycle dynamics in response to one class of stimuli implies the existence of multiple stable states and hysteresis in response to a different class of stimuli. The relation between these findings and a number of experiments is discussed.},
	author = {Wilson, H. R. and Cowan, J. D.},
	date-added = {2021-07-08 14:18:00 +0200},
	date-modified = {2022-02-04 17:38:47 +0100},
	doi = {10.1016/S0006-3495(72)86068-5},
	issn = {0006-3495},
	journal = {Biophysical Journal},
	number = {1},
	pages = {1-24},
	title = {Excitatory and Inhibitory Interactions in Localized Populations of Model Neurons},
	volume = {12},
	year = {1972},
	bdsk-url-1 = {https://www.sciencedirect.com/science/article/pii/S0006349572860685},
	bdsk-url-2 = {https://doi.org/10.1016/S0006-3495(72)86068-5}}

@article{DF11,
	author = {Duits, R. and Franken, E.},
	date-modified = {2022-02-04 17:21:44 +0100},
	doi = {10.1007/s11263-010-0332-z},
	journal = {International Journal of Computer Vision},
	title = {Left-Invariant Diffusions on the Space of Positions and Orientations and their Application to Crossing-Preserving Smoothing of {HARDI} images},
	year = {2011},
	bdsk-url-1 = {https://doi.org/10.1007/s11263-010-0332-z}}

@inproceedings{W99,
	author = {Weiss, Y.},
	booktitle = {Proceedings of the seventh IEEE international conference on computer vision},
	doi = {10.1109/ICCV.1999.790354},
	organization = {IEEE},
	pages = {975--982},
	title = {Segmentation using eigenvectors: a unifying view},
	volume = {2},
	year = {1999},
	bdsk-url-1 = {https://doi.org/10.1109/ICCV.1999.790354}}

@article{LZ06,
	author = {Li, G. and Zucker, S. W.},
	date-added = {2022-05-27 16:34:09 +0200},
	date-modified = {2022-05-27 16:34:09 +0200},
	doi = {10.1007/s11263-006-6853-9},
	journal = {International Journal of Computer Vision},
	number = {1},
	pages = {59--75},
	publisher = {Springer},
	title = {Contextual inference in contour-based stereo correspondence},
	volume = {69},
	year = {2006},
	bdsk-url-1 = {https://doi.org/10.1007/s11263-006-6853-9}}

@article{H92,
	author = {Heeger, D. J.},
	date-added = {2022-04-27 16:49:00 +0200},
	date-modified = {2022-04-27 16:49:16 +0200},
	doi = {10.1017/s0952523800009640},
	journal = {Visual neuroscience},
	number = {2},
	pages = {181--197},
	publisher = {Cambridge University Press},
	title = {Normalization of cell responses in cat striate cortex},
	volume = {9},
	year = {1992},
	bdsk-url-1 = {https://doi.org/10.1017/s0952523800009640}}

@article{CH12,
	author = {Carandini, M. and Heeger, D. J.},
	date-added = {2022-04-27 16:45:38 +0200},
	date-modified = {2022-04-27 16:45:56 +0200},
	doi = {10.1038/nrn3136},
	journal = {Nature Reviews Neuroscience},
	number = {1},
	pages = {51--62},
	publisher = {Nature Publishing Group},
	title = {Normalization as a canonical neural computation},
	volume = {13},
	year = {2012},
	bdsk-url-1 = {https://doi.org/10.1038/nrn3136}}

@inproceedings{DHHV08,
	author = {De la Porte, J. and Herbst, B. M. and Hereman, W. and Van Der Walt, S.J.},
	booktitle = {Proceedings of the 19th symposium of the pattern recognition association of South Africa (PRASA 2008), Cape Town, South Africa},
	date-added = {2022-04-26 08:44:32 +0200},
	date-modified = {2022-04-26 08:45:11 +0200},
	pages = {15--25},
	title = {An introduction to diffusion maps},
	year = {2008}}

@book{CG97,
	author = {Chung, F. R. K. and Graham, F. C.},
	date-added = {2022-04-26 08:43:29 +0200},
	date-modified = {2023-02-09 18:02:38 +0100},
	publisher = {American Mathematical Soc.},
	title = {Spectral graph theory},
	year = {1997}}

@article{BBB09,
	author = {Barnett, L. and Buckley, C.L. and Bullock, S.},
	date-added = {2022-04-25 11:40:17 +0200},
	date-modified = {2022-04-25 11:40:43 +0200},
	doi = {10.1103/PhysRevE.79.051914},
	journal = {Physical Review E},
	number = {5},
	publisher = {APS},
	title = {Neural complexity and structural connectivity},
	volume = {79},
	year = {2009},
	bdsk-url-1 = {https://doi.org/10.1103/PhysRevE.79.051914}}

@article{TSE94,
	author = {Tononi, G. and Sporns, O. and Edelman, G.M.},
	date-added = {2022-04-25 11:38:56 +0200},
	date-modified = {2022-04-25 11:39:21 +0200},
	doi = {10.1073/pnas.91.11.503},
	journal = {Proceedings of the National Academy of Sciences},
	number = {11},
	pages = {5033--5037},
	publisher = {National Acad Sciences},
	title = {A measure for brain complexity: relating functional segregation and integration in the nervous system},
	volume = {91},
	year = {1994},
	bdsk-url-1 = {https://doi.org/10.1073/pnas.91.11.503}}

@inproceedings{MS01,
	author = {Meil{\u{a}}, M. and Shi, J.},
	booktitle = {International Workshop on Artificial Intelligence and Statistics},
	date-added = {2022-04-22 16:23:58 +0200},
	date-modified = {2022-04-22 16:24:18 +0200},
	organization = {PMLR},
	pages = {203--208},
	title = {A random walks view of spectral segmentation},
	year = {2001}}

@inproceedings{PF98,
	author = {Perona, P. and Freeman, W.},
	booktitle = {European Conference on Computer Vision},
	date-added = {2022-04-22 16:23:06 +0200},
	date-modified = {2022-04-22 16:23:21 +0200},
	doi = {10.1007/BFb0055696},
	organization = {Springer},
	pages = {655--670},
	title = {A factorization approach to grouping},
	year = {1998},
	bdsk-url-1 = {https://doi.org/10.1007/BFb0055696}}

@article{CL2006,
	author = {Coifman, R. R. and Lafon, S.},
	date-added = {2022-04-22 16:21:24 +0200},
	date-modified = {2022-04-22 16:21:50 +0200},
	doi = {10.1016/j.acha.2006.04.006},
	journal = {Applied and computational harmonic analysis},
	number = {1},
	pages = {5--30},
	publisher = {Elsevier},
	title = {Diffusion maps},
	volume = {21},
	year = {2006},
	bdsk-url-1 = {https://doi.org/10.1016/j.acha.2006.04.006}}

@article{BN03,
	author = {Belkin, M. and Niyogi, P.},
	date-added = {2022-04-22 16:20:35 +0200},
	date-modified = {2022-04-22 16:20:51 +0200},
	doi = {10.1162/089976603321780317},
	journal = {Neural computation},
	number = {6},
	pages = {1373--1396},
	publisher = {MIT Press},
	title = {Laplacian eigenmaps for dimensionality reduction and data representation},
	volume = {15},
	year = {2003},
	bdsk-url-1 = {https://doi.org/10.1162/089976603321780317}}

@article{RS00,
	author = {Roweis, S. T. and Saul, L. K.},
	date-added = {2022-04-22 16:19:21 +0200},
	date-modified = {2022-04-22 16:19:41 +0200},
	doi = {10.1126/science.290.5500.2323},
	journal = {science},
	number = {5500},
	pages = {2323--2326},
	publisher = {American Association for the Advancement of Science},
	title = {Nonlinear dimensionality reduction by locally linear embedding},
	volume = {290},
	year = {2000},
	bdsk-url-1 = {https://doi.org/10.1126/science.290.5500.2323}}

@article{FF10,
	author = {Faye, G. and Faugeras, O.},
	date-added = {2022-04-22 15:52:42 +0200},
	date-modified = {2022-04-22 15:53:01 +0200},
	doi = {10.1016/j.physd.2010.01.010},
	journal = {Physica D: Nonlinear Phenomena},
	number = {9},
	pages = {561--578},
	publisher = {Elsevier},
	title = {Some theoretical and numerical results for delayed neural field equations},
	volume = {239},
	year = {2010},
	bdsk-url-1 = {https://doi.org/10.1016/j.physd.2010.01.010}}

@article{SCS10,
	author = {Sanguinetti, G. and Citti, G. and Sarti, A.},
	date-added = {2022-04-22 11:42:24 +0200},
	date-modified = {2022-04-22 11:42:45 +0200},
	doi = {doi.org/10.1167/10.14.37},
	journal = {Journal of Vision},
	number = {14},
	publisher = {The Association for Research in Vision and Ophthalmology},
	title = {A model of natural image edge co-occurrence in the rototranslation group},
	volume = {10},
	year = {2010},
	bdsk-url-1 = {https://doi.org/10.1167/10.14.37}}

@book{C19,
	author = {Conway, J. B.},
	date-added = {2022-04-07 10:29:46 +0200},
	date-modified = {2022-04-07 10:30:06 +0200},
	publisher = {Springer},
	title = {A course in functional analysis},
	volume = {96},
	year = {2019}}

@article{BoPo07,
	author = {Boscain, U. and Polidoro, S.},
	date-added = {2022-04-07 10:18:19 +0200},
	date-modified = {2022-04-07 10:18:46 +0200},
	doi = {10.4171/rlm/499},
	journal = {Rendiconti Lincei - Matematica e Applicazioni},
	pages = {333--342},
	publisher = {European Mathematical Society - {EMS} - Publishing House {GmbH}},
	title = {Gaussian estimates for hypoelliptic operators via optimal control},
	year = 2007,
	bdsk-url-1 = {https://doi.org/10.4171%2Frlm%2F499},
	bdsk-url-2 = {https://doi.org/10.4171/rlm/499}}

@article{CP08,
	author = {Cinti, C. and Polidoro, S.},
	date-added = {2022-04-03 12:19:18 +0200},
	date-modified = {2022-04-03 12:19:32 +0200},
	doi = {10.1016/j.jmaa.2007.05.059},
	journal = {Journal of Mathematical Analysis and Applications},
	number = {2},
	pages = {946--969},
	publisher = {Elsevier},
	title = {Pointwise local estimates and Gaussian upper bounds for a class of uniformly subelliptic ultraparabolic operators},
	volume = {338},
	year = {2008},
	bdsk-url-1 = {https://doi.org/10.1016/j.jmaa.2007.05.059}}

@article{BP07,
	author = {Boscain, U. and Polidoro, S.},
	date-added = {2022-04-03 12:15:40 +0200},
	date-modified = {2022-04-03 12:16:03 +0200},
	doi = {10.4171/RLM/499},
	journal = {Rendiconti Lincei-Matematica e Applicazioni},
	number = {4},
	pages = {333--342},
	title = {Gaussian estimates for hypoelliptic operators via optimal control},
	volume = {18},
	year = {2007},
	bdsk-url-1 = {https://doi.org/10.4171/RLM/499}}

@article{H67,
	author = {H{\"o}rmander, L.},
	date-added = {2022-04-03 11:43:37 +0200},
	date-modified = {2022-04-03 11:43:45 +0200},
	journal = {Acta Mathematica},
	pages = {147--171},
	publisher = {Institut Mittag-Leffler},
	title = {Hypoelliptic second order differential equations},
	volume = {119},
	year = {1967}}

@article{NSW85,
	author = {Nagel, A. and Stein, E. M. and Wainger, S.},
	date-added = {2022-03-30 14:14:27 +0200},
	date-modified = {2022-03-30 14:14:54 +0200},
	journal = {Acta Mathematica},
	pages = {103--147},
	publisher = {Institut Mittag-Leffler},
	title = {Balls and metrics defined by vector fields {I}: Basic properties},
	volume = {155},
	year = {1985}}

@article{RS76,
	author = {Rothschild, L. P. and Stein, E. M.},
	date-added = {2022-03-06 10:02:28 +0100},
	date-modified = {2022-03-06 10:02:50 +0100},
	journal = {Acta Mathematica},
	pages = {247--320},
	publisher = {Institut Mittag-Leffler},
	title = {Hypoelliptic differential operators and nilpotent groups},
	volume = {137},
	year = {1976}}

@article{BZ04,
	author = {Ben-Shahar, O. and Zucker, S.},
	date-added = {2022-02-05 10:23:05 +0100},
	date-modified = {2022-02-05 10:23:26 +0100},
	doi = {10.1162/089976604772744866},
	journal = {Neural computation},
	number = {3},
	pages = {445--476},
	publisher = {MIT Press},
	title = {Geometrical computations explain projection patterns of long-range horizontal connections in visual cortex},
	volume = {16},
	year = {2004},
	bdsk-url-1 = {https://doi.org/10.1162/089976604772744866}}

@article{BCCS14,
	author = {Barbieri, D. and Citti, G. and Cocci, G. and Sarti, A.},
	date-added = {2022-02-04 10:00:26 +0100},
	date-modified = {2022-02-04 10:03:18 +0100},
	doi = {10.1007/s10851-013-0482-z},
	journal = {Journal of Mathematical Imaging and Vision},
	number = {3},
	pages = {511--529},
	publisher = {Springer},
	title = {A cortical-inspired geometry for contour perception and motion integration},
	volume = {49},
	year = {2014},
	bdsk-url-1 = {https://doi.org/10.1007/s10851-013-0482-z}}

@article{SC15,
	author = {Sarti, A. and Citti, G.},
	date-added = {2022-02-04 09:58:01 +0100},
	date-modified = {2022-02-04 09:58:21 +0100},
	doi = {10.1007/s10827-014-0540-6},
	journal = {Journal of computational neuroscience},
	number = {2},
	pages = {285--300},
	publisher = {Springer},
	title = {The constitution of visual perceptual units in the functional architecture of {V1}},
	volume = {38},
	year = {2015},
	bdsk-url-1 = {https://doi.org/10.1007/s10827-014-0540-6}}

@book{HZ03,
	abstract = {From the Publisher:A basic problem in computer vision is to understand the structure of a real world scene given several images of it. Recent major developments in the theory and practice of scene reconstruction are described in detail in a unified framework. The book covers the geometric principles and how to represent objects algebraically so they can be computed and applied. The authors provide comprehensive background material and explain how to apply the methods and implement the algorithms directly.},
	address = {USA},
	author = {Hartley, R. and Zisserman, A.},
	date-modified = {2022-02-04 17:41:50 +0100},
	edition = {2},
	isbn = {0521540518},
	publisher = {Cambridge University Press},
	title = {Multiple View Geometry in Computer Vision},
	year = {2003}}

@inproceedings{LZ03,
	author = {Li, G. and Zucker, S. W.},
	booktitle = {Proc. of IEEE Workshop on Variational, Geometric and Level set Methods in Computer Vision, Nice, France},
	title = {A differential geometrical model for contour-based stereo correspondence},
	year = {2003}}

@article{MP79,
	author = {Marr, D. and Poggio, T.},
	date-added = {2021-09-28 09:09:04 +0200},
	date-modified = {2022-02-04 17:31:05 +0100},
	doi = {10.1098/rspb.1979.0029},
	journal = {Proceedings of the Royal Society of London. Series B. Biological Sciences},
	month = {May},
	number = {1156},
	pages = {301--328},
	publisher = {The Royal Society},
	title = {A computational theory of human stereo vision},
	volume = {204},
	year = 1979,
	bdsk-url-1 = {https://doi.org/10.1098%2Frspb.1979.0029},
	bdsk-url-2 = {https://doi.org/10.1098/rspb.1979.0029}}

@article{KGS05,
	author = {Kellman, P. J. and Garrigan, P. and Shipley, T. F.},
	date-added = {2021-09-21 16:09:27 +0200},
	date-modified = {2022-02-04 17:27:59 +0100},
	doi = {10.1037/0033-295x.112.3.586},
	journal = {Psychological Review},
	number = {3},
	pages = {586--609},
	publisher = {American Psychological Association ({APA})},
	title = {Object Interpolation in Three Dimensions.},
	volume = {112},
	year = 2005,
	bdsk-url-1 = {https://doi.org/10.1037%2F0033-295x.112.3.586},
	bdsk-url-2 = {https://doi.org/10.1037/0033-295x.112.3.586}}

@article{KHK16,
	author = {Khuu, S. K. and Honson, V. and Kim, J.},
	date-added = {2021-09-21 16:05:46 +0200},
	date-modified = {2022-02-04 17:28:59 +0100},
	doi = {10.1167/16.3.31},
	journal = {Journal of Vision},
	month = {Feb},
	number = {3},
	pages = {31},
	publisher = {Association for Research in Vision and Ophthalmology ({ARVO})},
	title = {The perception of three-dimensional contours and the effect of luminance polarity and color change on their detection},
	volume = {16},
	year = 2016,
	bdsk-url-1 = {https://doi.org/10.1167%2F16.3.31},
	bdsk-url-2 = {https://doi.org/10.1167/16.3.31}}

@article{HHK97,
	abstract = {To better understand the role of disparity in contour integration we compared detection performance of ``paths'' composed of elements confined either to a signle depth plane, or spanning multiple depth planes. In both cases paths defined by alignment of elements were embedded in a noise background-field made up of similar, but randomly positioned, elements covering the same depth range as the path elements. We show that a systematic disparity cue can enhance the detectability of paths which traverse depth, but that this detectability is weak compared to paths made up of elements of the same disparity. These results suggest that the outputs of disparity detectors tuned to different disparities can be linked to define contours. {\copyright} 1997 Elsevier Science Ltd. All rights reserved.},
	author = {Hess, R. F. and Hayes, A. and Kingdom, F. A. A.},
	date-added = {2021-09-21 16:04:15 +0200},
	date-modified = {2022-02-04 17:25:28 +0100},
	doi = {https://doi.org/10.1016/S0042-6989(96)00215-5},
	issn = {0042-6989},
	journal = {Vision Research},
	keywords = {Stereopsis, Contour integration, Surface, Depth, Linking},
	number = {6},
	pages = {691-696},
	title = {Integrating Contours Within and Through Depth},
	volume = {37},
	year = {1997},
	bdsk-url-1 = {https://www.sciencedirect.com/science/article/pii/S0042698996002155},
	bdsk-url-2 = {https://doi.org/10.1016/S0042-6989(96)00215-5}}

@article{H89,
	abstract = {It is now generally accepted among neuroscientists that the sensory cortex of the brain is arranged in a structure that is simultaneously ``topographic'' (a pointwise mapping), layered, and columnar. The microcolumns in the columnar structure exhibit both a directional and an areal response in addition to the pointwise one. It is shown that these directional-areal response fields are contact elements upon the visual manifold that generate visual contours as the ``lifts'' of the form stimuli upon the retina into a contact bundle embodied in the visual cortex itself. Those invariances of form perception termed the ``psychological constancies'' -shape, size, motion, color constancies, etc.-represent the action of the conformal group CO(1,3) upon the visual manifold. It follows that the connection for the cortical contact bundle is a Lie group connection. The paper closes with an axiomatic treatment of visual perception and expression of Riegel's dialectical psychology in terms of the symmetric difference operation to provide an association between perceptual and cognitive function.},
	author = {Hoffman, W. C.},
	date-added = {2021-09-20 16:11:20 +0200},
	date-modified = {2022-02-04 17:24:20 +0100},
	doi = {https://doi.org/10.1016/0096-3003(89)90091-X},
	issn = {0096-3003},
	journal = {Applied Mathematics and Computation},
	number = {2},
	pages = {137-167},
	title = {The visual cortex is a contact bundle},
	volume = {32},
	year = {1989},
	bdsk-url-1 = {https://www.sciencedirect.com/science/article/pii/009630038990091X},
	bdsk-url-2 = {https://doi.org/10.1016/0096-3003(89)90091-X}}

@article{PT99,
	author = {Petitot, J. and Tondut, Y.},
	date-added = {2021-09-20 16:06:19 +0200},
	date-modified = {2022-02-04 17:33:58 +0100},
	journal = {Math\'ematiques et Sciences humaines},
	language = {fr},
	mrnumber = {1697185},
	pages = {5--101},
	publisher = {Ecole des hautes-\'etudes en sciences sociales},
	title = {Vers une neurog\'eom\'etrie. {F}ibrations corticales, structures de contact et contours subjectifs modaux},
	volume = {145},
	year = {1999},
	bdsk-url-1 = {http://www.numdam.org/item/MSH_1999__145__5_0/}}

@incollection{AugZ00,
	author = {August, J. and Zucker, S. W.},
	booktitle = {The Kluwer International Series in Engineering and Computer Science},
	date-added = {2021-09-20 16:04:45 +0200},
	date-modified = {2022-02-04 17:49:31 +0100},
	doi = {10.1007/978-1-4615-4413-5_15},
	pages = {265--288},
	publisher = {Springer {US}},
	title = {The Curve Indicator Random Field: Curve Organization Via Edge Correlation},
	year = 2000,
	bdsk-url-1 = {https://doi.org/10.1007%2F978-1-4615-4413-5_15},
	bdsk-url-2 = {https://doi.org/10.1007/978-1-4615-4413-5_15}}

@article{SM00,
	author = {Shi, J. and Malik, J.},
	date-added = {2021-07-08 15:08:19 +0200},
	date-modified = {2022-02-04 17:37:55 +0100},
	doi = {10.1109/34.868688.},
	journal = {IEEE Trans. on Pattern Analysis and Machine Intelligence},
	number = {8},
	pages = {888-905},
	title = {Normalized cuts and image segmentation},
	volume = {22},
	year = {2000},
	bdsk-url-1 = {https://doi.org/10.1109/34.868688.}}

@article{coifman2005geometric,
  title={Geometric diffusions as a tool for harmonic analysis and structure definition of data: Diffusion maps},
  author={Coifman, R. R. and Lafon, S. and Lee, A. B. and Maggioni, M. and Nadler, B. and Warner, F. and Zucker, S. W.},
  journal={PNAS},
  volume={102},
  number={21},
  year={2005}
}

@article{BC03,
	abstract = {A mathematical model of interacting hypercolumns in primary visual cortex (V1) is presented that incorporates details concerning the geometry of local and long-range horizontal connections. Each hypercolumn is modeled as a network of interacting excitatory and inhibitory neural populations with orientation and spatial frequency preferences organized around a pair of pinwheels. The pinwheels are arranged on a planar lattice, reflecting the crystalline-like structure of cortex. Local interactions within a hypercolumn generate orientation and spatial frequency tuning curves, which are modulated by horizontal connections between different hypercolumns on the lattice. The symmetry properties of the local and long-range connections play an important role in determining the types of spontaneous activity patterns that can arise in cortex.},
	author = {Bressloff, P. C. and Cowan, J. D.},
	date-added = {2021-07-08 14:36:25 +0200},
	date-modified = {2022-02-04 17:14:50 +0100},
	doi = {https://doi.org/10.1016/j.jphysparis.2003.09.017},
	issn = {0928-4257},
	journal = {Journal of Physiology-Paris},
	keywords = {Horizontal connections, Primary visual cortex, Orientation pinwheels, Spatial frequency, Mathematical modeling, Cortical dynamics, Neural pattern formation, Hallucinations},
	note = {Neurogeometry and visual perception},
	number = {2},
	pages = {221-236},
	title = {The functional geometry of local and horizontal connections in a model of {V1}},
	volume = {97},
	year = {2003},
	bdsk-url-1 = {https://www.sciencedirect.com/science/article/pii/S0928425703000627},
	bdsk-url-2 = {https://doi.org/10.1016/j.jphysparis.2003.09.017}}

@article{CS06,
	author = {Citti, G. and Sarti, A.},
	date-added = {2021-07-08 14:31:53 +0200},
	date-modified = {2022-02-04 17:19:24 +0100},
	doi = {10.1007/s10851-005-3630-2},
	journal = {Journal of Mathematical Imaging and Vision},
	month = {Feb},
	number = {3},
	pages = {307--326},
	publisher = {Springer Science and Business Media {LLC}},
	title = {A Cortical Based Model of Perceptual Completion in the Roto-Translation Space},
	volume = {24},
	year = 2006,
	bdsk-url-1 = {https://doi.org/10.1007%2Fs10851-005-3630-2},
	bdsk-url-2 = {https://doi.org/10.1007/s10851-005-3630-2}}

@article{AOF99,
	author = {Anzai, A. and Ohzawa, I. and Freeman, R.D.},
	date-added = {2021-07-08 14:27:39 +0200},
	date-modified = {2022-02-04 17:12:56 +0100},
	doi = {10.1152/jn.1999.82.2.874},
	journal = {Journal of Neurophysiology},
	month = {Aug},
	number = {2},
	pages = {874--890},
	publisher = {American Physiological Society},
	title = {Neural Mechanisms for Encoding Binocular Disparity: Receptive Field Position Versus Phase},
	volume = {82},
	year = 1999,
	bdsk-url-1 = {https://doi.org/10.1152%2Fjn.1999.82.2.874},
	bdsk-url-2 = {https://doi.org/10.1152/jn.1999.82.2.874}}

@incollection{AZ00,
	author = {Alibhai, S. and Zucker, S. W.},
	booktitle = {Computer Vision - {ECCV} 2000},
	date-added = {2021-07-08 14:22:19 +0200},
	date-modified = {2022-02-04 17:49:47 +0100},
	doi = {10.1007/3-540-45054-8_21},
	pages = {314--330},
	publisher = {Springer Berlin Heidelberg},
	title = {Contour-Based Correspondence for Stereo},
	year = 2000,
	bdsk-url-1 = {https://doi.org/10.1007%2F3-540-45054-8_21},
	bdsk-url-2 = {https://doi.org/10.1007/3-540-45054-8_21}}

@article{PD17,
	author = {Portegies, J. M. and Duits, R.},
	date-modified = {2022-02-04 17:33:45 +0100},
	doi = {10.1016/j.difgeo.2017.06.004},
	fjournal = {Differential Geometry and its Applications},
	issn = {0926-2245},
	journal = {Differential Geom. Appl.},
	mrclass = {35R03 (58J65)},
	mrnumber = {3679184},
	pages = {182--219},
	title = {New exact and numerical solutions of the (convection-)diffusion kernels on {$ SE(3)$}},
	volume = {53},
	year = {2017},
	bdsk-url-1 = {https://doi.org/10.1016/j.difgeo.2017.06.004}}

@article{KB10,
	author = {Kilpatrick, Z. P. and Bressloff, P. C.},
	date-added = {2023-03-19 10:06:19 +0100},
	date-modified = {2023-03-19 10:06:58 +0100},
	journal = {Physica D: Nonlinear Phenomena},
	number = {9},
	pages = {547--560},
	publisher = {Elsevier},
	title = {Effects of synaptic depression and adaptation on spatiotemporal dynamics of an excitatory neuronal network},
	volume = {239},
	year = {2010}}

@article{ZLP04,
  title={Self-tuning spectral clustering},
  author={Zelnik-Manor, L. and Perona, P.},
  journal={Advances in neural information processing systems},
  volume={17},
  year={2004}
}

@article{kannan2004clusterings,
  title={On clusterings: Good, bad and spectral},
  author={Kannan, R. and Vempala, S. and Vetta, A.},
  journal={Journal of the ACM (JACM)},
  volume={51},
  number={3},
  pages={497--515},
  year={2004},
  publisher={ACM New York, NY, USA}
}

\end{document}